\DeclareMathAlphabet{\pazocal}{OMS}{zplm}{m}{n}
\bmdefine\taub{\tau}
\bmdefine\mub{\mu}
\bmdefine\lab{\lambda}
\bmdefine\varsigmab{\varsigma}
 \numberwithin{equation}{section}
\newtheorem{thm}{Theorem}[section]
\newtheorem{lem}[thm]{Lemma}
\newtheorem{prop}[thm]{Proposition}
\newtheorem{cor}[thm]{Corollary}
\theoremstyle{definition}
\newtheorem{defn}[thm]{Definition}
\theoremstyle{remark}
\newtheorem{example}[thm]{Example}
\theoremstyle{remark}
\newtheorem{rem}[thm]{Remark}
\title{
Trading Strategies Generated Pathwise
\\
by Functions of Market Weights
  \thanks{
  Research supported in part by the National Science Foundation under grant NSF-DMS-14-05210.}
}
\author{  
\textsc{Ioannis Karatzas} \thanks{
Department of Mathematics,  Columbia University, New York, NY 10027 (E-mail: {\it ik@math.columbia.edu}), and       \textsc{Intech} Investment Management,  One Palmer Square, Suite 441, Princeton, NJ 08542    (E-mail:    {\it ikaratzas@intechjanus.com}). 
}  
\and
\textsc{Donghan Kim} \thanks{ 
Department of Mathematics, Columbia University, New York, NY 10027 (E-mail: {\it dk2571@columbia.edu}).
}
}
\begin{document}

\maketitle

\bigskip

\begin{abstract}
\noindent
Almost twenty years ago, E.R. Fernholz introduced portfolio generating functions which can be used to construct a variety of portfolios, solely in the terms of the individual companies' market weights. I. Karatzas and J. Ruf developed recently another methodology for the functional construction of portfolios, which leads to very simple conditions for strong relative arbitrage with respect to the market. In this paper, both of these notions of functional portfolio generation are generalized in a pathwise, probability-free setting; portfolio generating functions, possibly less smooth than twice-differentiable, involve the current market weights, as well as additional bounded-variation functions related to the market weights. This generalization leads to a wider class of functionally-generated portfolios than was heretofore possible, to novel methods for dealing with the ``size'' and ``momentum'' effects, and to improved conditions for outperforming the market portfolio over suitable time-horizons. 
\end{abstract}

\smallskip
\noindent{\it Keywords and Phrases:} Stochastic portfolio theory, pathwise It\^o formula, pathwise Tanaka formula, trading strategies, functional generation, regular functions, strong relative arbitrage, size effect, momentum effect.

\smallskip

\input amssym.def
\input amssym

\smallskip

\section{Introduction}

The concept of `functionally generated portfolios' was introduced by \citet{F_generating, Fe} and has been one of the essential components of stochastic portfolio theory; see \cite{FK_survey} for an overview. Portfolios generated by appropriate functions of the individual companies' market weights have wealth dynamics which can be expressed solely in terms of these weights, and do not involve any stochastic integration. Constructing such portfolios does not require any statistical estimation of parameters, or any optimization. Completely observable quantities such as the current values of `market weights', whose temporal evolution is modeled in terms of continuous semimartingales, are the only ingredients needed for building these portfolios. Once this structure has been discerned, the mathematics underpinning its construction involves just a simple application of It\^o's rule. Then the goal is to construct such portfolios that outperform a reference portfolio, for example, the market portfolio, under appropriate structural conditions.

\smallskip

\cite{Karatzas:Ruf:2017} recently found a new way for the functional generation of trading strategies, which they call `additive generation', as opposed to Fernholz's `multiplicative generation', of portfolios. This new methodology weakens the assumptions on the market model: asset prices and market weights are continuous semimartingales, and trading strategies are constructed from `regular' functions of the semimartingales without the help of stochastic calculus. Trading strategies generated in this additive manner require simpler conditions for strong relative arbitrage with respect to the market over appropriate time horizons; see also \cite{Fernholz:Karatzas:Ruf:2018}.

\smallskip

Along a different, but related, development, \cite{F1981} showed almost 40 years ago that certain aspects of It\^o calculus can be developed `path by path', without any probability structure. Once a given function admits the pathwise property of quadratic variation/covariation along a given nested sequence of partitions over a fixed time interval of finite length, this new type of It\^o's change of variable formula can be proven by an application of Taylor expansion in a surprisingly simple way. Then \cite{Wuermli} introduced in this same setting the concept of local times and the corresponding pathwise Tanaka formula, in a pathwise sense. This allows the change of variable formula to be applied to less regular functions, by involving appropriately defined pathwise local times. These concepts of local times have been further developed recently; see \cite{PerkowskiPromel2}, \cite{Davis2018}, and \cite{Cont_Perkowski}.

\smallskip

In this paper, we generalize both additive and multiplicative functional generation of trading strategies in several ways. First, we use pathwise It\^o calculus to show that one can construct trading strategies, generated additively or multiplicatively from a given function, depending on the market weights and in a manner completely devoid of probability considerations. The only analytic structure we impose is that the market weights admit continuous covariations in a pathwise sense. Secondly, we admit generating functions that depend on an additional argument of finite variation. Introducing new arguments, other than the market weights, gives extra flexibility in constructing portfolios; see \cite{Strong}, \cite{Schied:2016}, \cite{Ruf:Xie}. We present various types of additional such arguments, to the effect that a variety of new trading strategies can be generated from a function depending on them; these strategies yield new sufficient conditions for strong relative arbitrage with respect to the market portfolio. Then, we show how to apply the pathwise Tanaka formula to construct portfolios from generating functions rougher than heretofore possible. In order to use the It\^o formula, a function needs to be at least twice-differentiable, whereas the Tanaka formula requires less smooth functions, namely, absolutely continuous. Thus, usage of the Tanaka formula broadens the class of portfolio-generating functions very considerably.

\smallskip

We also present new sufficient conditions for strong relative arbitrage via additively and multiplicatively generated trading strategies. The existing sufficient condition in \cite{Karatzas:Ruf:2017} requires the generating function to be `Lyapunov', or the corresponding `Gamma function' to be nondecreasing. By contrast, the new sufficient conditions in this paper depend on the intrinsic nondecreasing structure of the generating function itself. This new condition shows that trading strategies outperforming the market portfolio can be generated from a much richer collection of functions depending on the market weights and on an additional argument of finite variation. We provide some interesting examples of such trading strategies, and empirical analysis of them.

\medskip

\noindent
\textit{Preview} : Section~\ref{sec: 2} presents the elements of the pathwise It\^o calculus that will be needed for our purposes. Section~\ref{sec: 3} defines trading strategies and regular functions, then discusses how to generate trading strategies from regular functions in ways both additive and multiplicative. Section~\ref{sec: 4} gives sufficient conditions for such trading strategies to generate strong relative arbitrage with respect to the market. Section~\ref{sec: 5} shows methods of generating trading strategies in a similar manner as in Section~\ref{sec: 3}, but from less smoother functions with help of relevant notion of local time and Tanaka formula. Section~\ref{sec: 6} gives some examples of trading strategies generated from entropic functions and corresponding sufficient conditions for strong arbitrage. Section~\ref{sec: 7} contains empirical results of portfolios discussed in Section~\ref{sec: 6}. Finally, Section~\ref{sec: 8} concludes.

\bigskip

\bigskip

\bigskip

\section{Pathwise It\^o calculus} 
 \label{sec: 2}

In what follows, we let $X=(X_1, \cdots, X_d)'$ be a $[0, \infty)^d$-valued continuous function, representing a $d$-dimensional vector of assets whose values change over time. Each component is defined on $[0, T]$, for a fixed $T>0$, and $X_i(t)$ stands for the value of the $i$\textsuperscript{th} asset at time $t \in [0, T]$.

\smallskip

We require the components of $X$ to admit continuous covariations in the pathwise sense with respect to a given, refining sequence $(\mathbb{T}_n)_{n \in \mathbb{N}}$ of partitions of $[0, T]$. The sequence $(\mathbb{T}_n)_{n \in \mathbb{N}}$ is such that each partition is of the form $\mathbb{T}_n=\{0=t_0^{(n)} < t_1^{(n)} < \cdots < t^{n}_{N(\mathbb{T}_n)}=T\}$ for $n \in \mathbb{N}$, as well as $\mathbb{T}_1 \subset \mathbb{T}_2 \subset \cdots$, and the mesh size $||\mathbb{T}_n|| := \max_{[t_j, t_{j+1}] \in \mathbb{T}_n} |t_{j+1}-t_j|$ decreases to zero as $n \rightarrow \infty$. We fix such a sequence $(\mathbb{T}_n)_{n \in \mathbb{N}}$ of partitions for the remainder of the paper.

\smallskip

Here and below, the notation $[t_j, t_{j+1}] \in \mathbb{T}_n$ means that $t_j$ and $t_{j+1}$ are consecutive points in the partition $\mathbb{T}_n$, i.e., $t_j < t_{j+1}$, $\mathbb{T}_n \cap (t_j, t_{j+1}) = \emptyset$. Also, when we write $[t_j, t_{j+1}] \in \mathbb{T}_n$ and $t_j \leq t$ simultaneously, we set $t_{j+1} = t$ when $j$ is the biggest index satisfying $t_j \leq t$. With this notation, we present the notion of the pathwise quadratic covariation of $X$ along $(\mathbb{T}_n)_{n \in \mathbb{N}}$, as follows.

\medskip

\begin{defn}	\label{Def: quadratic cov}
	A continuous function $X=(X_1, X_2, \cdots, X_d)'$ is said to have a \textit{pathwise quadratic covariation} along a given nested sequence of partitions $(\mathbb{T}_n)_{n \in \mathbb{N}}$ of $[0, T]$, if the limit of the sequence
	\begin{equation}	\label{Def: quadratic}
		\sum_{\substack{[t_j, t_{j+1}] \in \mathbb{T}_n \\ t_j \leq t}}
		\big(X_i(t_{j+1})-X_i(t_j)\big)\big(X_k(t_{j+1})-X_k(t_j)\big),	\qquad n \in \mathbb{N}
	\end{equation}
	exists for any $t \in [0, T]$ as $n \rightarrow \infty$ and the resulting mapping, denoted by $t \mapsto \langle X_i, X_k \rangle (t)$, is real-valued and continuous for every $1 \leq i, k \leq d$. We call $\langle X_i, X_k \rangle$ the \textit{pathwise quadratic covariation} of $X_i$ and $X_k$, and define the pathwise quadratic variation of $X_i$ by $\langle X_i \rangle := \langle X_i, X_i \rangle$ as usual.  
\end{defn}

\medskip

We stress that the existence of pathwise covariations and quadratic variations for the components of $X$ depends heavily on the choice of the nested, or ``refining'' sequence $(\mathbb{T}_n)_{n \in \mathbb{N}}$ of partitions. Example 5.3.2 in \cite{Bally_Caramellino_Cont}, and the arguments following, illustrate this fact. We note also that the existence of pathwise covariations and quadratic variations is required for It\^o's formula to hold in a pathwise sense. 

\smallskip

Next, we state the original one-dimensional pathwise It\^o formula, introduced by \cite{F1981}.

\medskip

\begin{thm} [Pathwise It\^o formula for paths with quadratic variation, \cite{F1981}]
	\label{Thm : 1d Ito formula}
	Fix a continuous function $X$ which admits the quadratic variation along the given nested sequence of partitions $\mathbb{T} = (\mathbb{T}_n)_{n \geq 1}$ of $[0, T]$. Then for every $C^2(\mathbb{R}, \mathbb{R})$ function $f$, the pathwise change of variable formula
	\begin{equation} 		\label{Eq : 1d Ito formula}
		f\big(X(t)\big) - f\big(X(0)\big) = \int_0^t f'\big(X(s)\big)dX(s) + \frac{1}{2}\int_0^t f''\big(X(s)\big)d\langle X\rangle(s)
	\end{equation}
	holds for $t \in [0, T]$. Here, the F{\"o}llmer-It\^o integral is defined as the pointwise limit
	\begin{equation}		\label{Eq: 1d F-I integral}
		\int_0^t f'\big(X(s)\big)dX(s) := \lim_{n \rightarrow \infty} \sum_{\substack{[t_j, t_{j+1}] \in \mathbb{T}_n \\ t_j \leq t}} f'(X(t_j)) (X(t_{j+1})-X(t_j)),
	\end{equation}
	and the last integral of the right-hand side of \eqref{Eq : 1d Ito formula} is Lebesgue-Stieltjes integral.
\end{thm}

\medskip

We will need also the pathwise It\^o formula in a higher-dimensional setting than Theorem~\ref{Thm : 1d Ito formula}, and with an extra `input' as additional argument. For this purpose, we let $A=(A_1, A_2, \cdots, A_m)'$ be an additional vector function of finite variation and consider a $(d+m)$-dimensional function $f\big(X_1(t), \cdots, X_d(t),\allowbreak A_1(t), \cdots, A_m(t)\big)$ of time $t \in [0, T]$. We say that a given function $f : \mathbb{R}^{d+m} \rightarrow \mathbb{R}$ is in $\mathbb{C}^{j, k}(\mathbb{R}^{(d+m)}, \mathbb{R})$, if it is $j$-times continuously differentiable with respect to the first $d$ components and $k$-times continuously differentiable to the last $m$ components. We also denote by $\partial_i f$ the $i$\textsuperscript{th} partial derivative, and by $D_{\ell}f$ the $(d+\ell)$\textsuperscript{th} partial derivative of $f$.

\smallskip

We now present the following version of the pathwise It\^o formula involving both $X$ and $A$. The proof is given in the Appendix, and the idea of proof is the same as that of F{\"o}llmer's original Theorem.

\medskip

\begin{thm} [Multidimensional pathwise It\^o formula]	\label{Thm : Ito formula}
	Fix a $d$-dimensional continuous function $X$ having pathwise quadratic covariations along a given sequence of partitions $\mathbb{T} = (\mathbb{T}_n)_{n \geq 1}$ of $[0, T]$, and an $m$-dimensional continuous function $A$ of finite variation defined on $[0, T]$. Then for every $f \in \mathbb{C}^{2, 1}(\mathbb{R}^{(d+m)}, \mathbb{R})$, the pathwise change of variable formula
	\begin{align}
		f\big(X(t), A(t)\big) - f\big(X(0), A(0)\big)
		&= \int_0^t \nabla f\big(X(s), A(s)\big)dX(s) 
		+ \sum_{\ell=1}^m \int_0^t D_{\ell}f\big(X(s), A(s)\big) dA_{\ell}(s)		\nonumber
		\\
		&+ \frac{1}{2} \sum_{i,k=1}^d \int_0^t \partial^2_{i, k}  f\big(X(s), A(s)\big)d\langle X_i, X_k \rangle(s) 							\label{Eq : Ito formula}
	\end{align}
	holds for $t \in [0, T]$. Here the F{\"o}llmer-It\^o integral is defined as the pointwise limit
	\begin{equation}	\label{Eq: F-I integral}
		\int_0^t \nabla f\big(X(s), A(s)\big)dX(s) := \lim_{n \rightarrow \infty} \sum_{\substack{[t_j, t_{j+1}] \in \mathbb{T}_n \\ t_j \leq t}}
		\sum_{i=1}^d \partial_i f\big(X(t_j), A(t_j)\big) \big(X_i(t_{j+1})-X_i(t_j)\big),
	\end{equation}
	whereas the other integrals of the right-hand side of \eqref{Eq : Ito formula} are Lebesgue-Stieltjes integrals.
\end{thm}

\bigskip

\bigskip

\bigskip

\section{Trading strategies generated in pathwise sense} 
 \label{sec: 3}

As in the previous section, we consider a $[0, \infty)^d$-valued, continuous function $X=(X_1, \cdots, X_d)'$ which admits continuous covariations with respect to a refining sequence $(\mathbb{T}_n)_{n \in \mathbb{N}}$ of partitions of $[0, T]$; we also let $A=(A_1, \cdots, A_m)'$ be an additional vector function of finite variation. For the purposes of this section, the components of $X$ will denote the value processes of $d$ tradable assets, and eventually stand for the market weights in an equity market. At the same time, the components of $A$ will model the evolution of an observable, but non-tradable, quantity related to these market weights.

\smallskip

For a subset $V$ of a Euclidean space, we denote by $C([0, T], V)$ the space of continuous $V$-valued functions defined on $[0, T]$; whereas $CBV([0, T], V)$ stands for the space of those functions in $C([0, T], V)$ which are of bounded variation. With this notation, we have the following definition of trading strategy with respect to the pair $(X, A)$, in the manner of \cite{Karatzas:Ruf:2017}.

\medskip

\begin{defn}[Trading strategies]
\label{def: TS}
	For the pair $(X, A)$ of a $d$-dimensional function $X \in C([0, T], \mathbb{R}^d)$ and an $m$-dimensional function $A \in CBV([0, T], \mathbb{R}^m)$, suppose that $\vartheta=(\vartheta_1, \cdots, \vartheta_d)'$ is a $d$-dimensional function with representation 
	\begin{equation}	\label{def : vartheta}
		\vartheta_i(\cdot) = \Theta_i\big(X(\cdot), A(\cdot)\big), \qquad i=1, \cdots, d.
	\end{equation}
	Here, $\Theta = (\Theta_1, \cdots, \Theta_d)'$ is a vector of functions, for which we can define an integral $\int_0^{\cdot} \vartheta(t)dX(t) \equiv \int_0^{\cdot} \sum_{i=1}^d \vartheta_i(t)dX_i(t)$ with respect to $X$; we write $\vartheta \in \mathcal{L}(X, A)$, to express this. We shall say that $\vartheta \in \mathcal{L}(X, A)$ is a \textit{trading strategy with respect to} $X$, if it is `self-financed' in the sense that
	\begin{equation}	\label{Def: self-financing}
		V^\vartheta (\cdot;X)-V^\vartheta (0;X)=\int^{\cdot}_{0} \sum^{d}_{i=1}\vartheta_i(t)dX_i(t)
	\end{equation}
	holds. In \eqref{Def: self-financing} and in what follows,
	\begin{equation}	\label{Def: value}
		V^\vartheta (t;X):=\sum^{d}_{i=1}\vartheta_i(t)X_i(t), \qquad 0 \leq t \leq T
	\end{equation}
	denotes the value process of the strategy $\vartheta$ at time $t$.
\end{defn}

The interpretation is that $\vartheta_i(t)$ stands for the ``number of shares" invested in asset $i$ at time $t$. If $X_i(t)$ is the price of this asset, then $\vartheta_i(t)X_i(t)$ is the dollar amount invested in asset $i$ at time $t$, and $V^{\vartheta} (t;X)$ the total value of investment across all assets. ``Self-financing" means that there are neither infusions nor withdrawals of capital: gains are re-invested, losses have to be absorbed. We shall write $V^{\vartheta}(\cdot)$ instead of $V^{\vartheta}(\cdot;X)$ whenever the integrator $X$ is fixed and apparent from the context.

\smallskip

The preceding pathwise It\^o formula in Theorem~\ref{Thm : Ito formula} suggests that integrands $\vartheta \in \mathcal{L}(X, A)$ of the special form $\vartheta(t)=\nabla f\big(X(t), A(t)\big)$, for some function $f \in \mathbb{C}^{2, 1}(\mathbb{R}^{(d+m)}, \mathbb{R})$, play a very important role for integrators $X \in C([0, T], \mathbb{R}^d)$ that admit finite quadratic covariations $\langle X_i, X_j \rangle$, $1 \leq i, j \leq d$ along an appropriate nested sequence of partitions. This gives rise to the following definition.

\medskip

\begin{defn} [Admissible trading strategy]
	\label{def: AI}
	Let $X$ be a $d$-dimensional function in $C([0, T], \mathbb{R}^d)$, and $A$ an $m$-dimensional function in $CBV([0, T], \mathbb{R}^m)$. A $d$-dimensional trading strategy $\vartheta:[0, \infty) \rightarrow \mathbb{R}^d$ in $\mathcal{L}(X, A)$ is called \textit{admissible trading strategy for the pair} $(X, A)$, if there exists a function $G: \mathbb{R}^d \times \mathbb{R}^m \rightarrow \mathbb{R}$ in the space $\mathbb{C}^{2, 1}(\mathbb{R}^{(d+m)}, \mathbb{R})$, such that \eqref{def : vartheta} holds for $\Theta_i=\nabla_iG$; that is,
	\begin{equation}		\label{Def: AI}
		\vartheta(t)=\nabla G\big(X(t), A(t)\big), \qquad 0 \leq t \leq T.
	\end{equation}
\end{defn}

\medskip

If $\vartheta$ is an admissible trading strategy for $(X, A)$, the last integral of \eqref{Def: self-financing} above is interpreted as a pathwise F{\"o}llmer-It\^{o} integral in the context of Theorem~\ref{Thm : Ito formula}. In an what follows, we will define a regular function for the pair $(X, A)$, consisting of a $d$-dimensional continuous function $X$, and an $m$-dimensional function $A$ in $CBV([0, T], \mathbb{R}^m)$.

\medskip

\begin{defn} [Regular function]
	\label{def: RF}
		We say that a function $G: \mathbb{R}^d \times \mathbb{R}^m \rightarrow \mathbb{R}$ in $\mathbb{C}^{2, 1}(\mathbb{R}^{(d+m)}, \mathbb{R})$ is \textit{regular} for the pair $(X, A)$, consisting of a $d$-dimensional continuous function $X$ and of a function $A \in CBV([0, T], \mathbb{R}^m)$, if the continuous function
		\begin{equation}\label{Def: gamma}
			\Gamma^G(t):= G\big(X(0), A(0)\big)-G\big(X(t), A(t)\big) + \int^{t}_{0} \nabla G\big(X(s), A(s)\big)dX(s), \quad 0 \leq t \leq T
		\end{equation}
		has finite variation on compact intervals of $[0, T]$.
\end{defn}

\medskip

\begin{rem}
	In order to define a pathwise F{\"o}llmer-It\^o integral and be able to use the pathwise It\^o calculus, we need a sufficiently smooth (in general, at least $\mathbb{C}^{2, 1}$) function $G$, and an integrand which can be cast in the form of a derivative $\nabla G$ of this function in the manner of \eqref{Def: AI}. Thus, thanks to the above definition, we can always apply the pathwise It\^o formula (Theorem~\ref{Thm : Ito formula}) to the function $G$ as in Definition~\ref{def: RF} above, and obtain another expression for the so-called ``Gamma function" $\Gamma^G(\cdot)$ in \eqref{Def: gamma}; namely,
	\begin{equation}		\label{Eq: gamma}
		\Gamma^G(t)=-\sum_{\ell=1}^{m}\int_{0}^{t} D_{\ell} G\big(X(s), A(s)\big)dA_{\ell}(s)-\frac{1}{2}\sum_{i, k=1}^{d} \int_{0}^{t} \partial_{i, k}^2 G\big(X(s), A(s)\big)d\langle X_i, X_k \rangle (s).
	\end{equation}
	Here we recall that $D_{\ell} G\big(X(s), A(s)\big)$ and $\partial_{i, k}^2 G\big(X(s), A(s)\big)$ are, respectively, the first-order $(d+\ell)$\textsuperscript{th} partial derivative and the second-order $(i, k)$\textsuperscript{th} partial derivative of $G$ at $\big(X(s), A(s)\big)$.
	
	\medskip
	
	The difference in Definition~\ref{def: RF} here, with Definition 3.1 of \cite{Karatzas:Ruf:2017}, should be noted and stressed. In \cite{Karatzas:Ruf:2017}, the integrand $\vartheta_i$ need not be the form of `gradient' of a regular function $G$. Here, the special structure of \eqref{Def: AI} for the integrand is necessary; this is the ``price one has to pay" for being able to work in a pathwise, probability-free setting, without having to invoke the theory of rough paths.
\end{rem}

\bigskip

\subsection{Trading strategies depending on the market weights}
We place ourselves from now onward in a frictionless equity market with a fixed number $d \geq 2$ of companies. We also consider a vector of continuous functions $S=(S_1, \cdots, S_d)' \in C([0, T], [0, \infty)^d)$,  where $S_i(t)$ represents the capitalization of the $i^{\text{th}}$ company at time $t \in [0, T]$. Here we take $S_i(0)>0$ and allow $S_i(t)$ to vanish at some time $t>0$, for all $i=1, \cdots, d$; but we assume also that the total capitalization $\Sigma(t) := S_1(t)+\cdots+S_d(t)$ does not vanish at any time $t \in [0, T]$.
 
\smallskip

With these ingredients, we define another vector of continuous functions $\mu=(\mu_1, \cdots, \mu_d)'$ that consists of the companies' relative market weights
\begin{equation}	\label{Def: market weights}
	\mu_i(t) := \frac{S_i(t)}{\Sigma(t)} = \frac{S_i(t)}{S_1(t)+\cdots+S_d(t)}, \qquad t \in [0, T], \quad i =1, \cdots, d.
\end{equation}
We also assume that the components of $\mu$ admit finite quadratic covariations $\langle \mu_i, \mu_j \rangle$, $1 \leq i, j \leq d$ along a given, fixed, nested sequence $(\mathbb{T}_n)_{n \in \mathbb{N}}$ of partitions of $[0, T]$, in the manner discussed at the start of Section~\ref{sec: 2}. In what follows, we will consider only regular functions of the form $G\big(\mu(\cdot), A(\cdot)\big)$ which depend on the vector of market weights $\mu$ and on some additional function $A \in CBV([0, T], \mathbb{R}^m)$. Examples of such functions $A$ appear in \eqref{Def: quadratic variation}, \eqref{Def: covariation}.

\bigskip

\subsection{Additively generated trading strategies}
We would like now to introduce an additively-generated trading strategy, starting from a regular function in the pathwise sense. For this, we will need a result from \cite{Karatzas:Ruf:2017}. For any given function $G$ which is regular for the pair $(\mu, A)$, where $\mu$ is the vector of market weights and $A$ an appropriate function in $CBV([0, T], \mathbb{R}^m)$, we consider the vector $\vartheta$ with components 
\begin{equation}	\label{Def: vartheta2}
	\vartheta_i(\cdot):=\partial_i G\big(\mu(\cdot), A(\cdot)\big), \quad i=1, \cdots, d
\end{equation}
as in \eqref{Def: AI} of the Definition~\ref{def: AI}, and the vector of functions $\varphi = (\varphi_1, \cdots, \varphi_d)'$ with components
\begin{equation}	\label{Def: varphi}
	\varphi_i(t):=\vartheta_i(t)-Q^{\vartheta}(t)-C(0), \qquad i=1, \cdots, d, \quad 0 \leq t \leq T.
\end{equation}
Here,
\begin{equation}	\label{Def: defect of SF}
	Q^{\vartheta}(t):=V^{\vartheta}(t)-V^{\vartheta}(0)-\int_{0}^{t} \sum_{i=1}^{d} \vartheta_i(s)d\mu_i(s)
\end{equation}
is the ``defect of self-financibility" at time $t \in [0, T]$ of the integrand $\vartheta$ in \eqref{Def: vartheta2}, $V^{\varphi}(t) := \sum_{i=1}^d \vartheta_i(t)\mu_i(t)$ the ``value'' of the strategy $\varphi$ at time $t \in [0, T]$ in the manner of \eqref{Def: value}, and
\begin{equation}	\label{Def: defect of B}
	C(0):=\sum_{i=1}^{d}\partial_iG\big(\mu(0), A(0)\big)\mu_i(0)-G\big(\mu(0), A(0)\big)
\end{equation}
the ``defect of balance" at time $t=0$ for the regular function $G$. By analogy with Proposition 2.3 of \cite{Karatzas:Ruf:2017}, the vector $\varphi=(\varphi_1, \cdots, \varphi_d)'$ of \eqref{Def: varphi}, \eqref{Def: vartheta2} defines a trading strategy with respect to $\mu$.

\medskip

\begin{defn} [Additive generation]
	\label{def: AG}
	We say that the trading strategy $\varphi$ of the form \eqref{Def: varphi}, \eqref{Def: vartheta2} is \textit{additively generated} by the function $G: \mathbb{R}^d \times \mathbb{R}^m \rightarrow \mathbb{R}$, which is assumed to be regular for the pair $(X, A)$.
\end{defn}

\medskip

\begin{prop}
	\label{prop: additive generation}
	Consider the trading strategy $\varphi$, generated additively as in \eqref{Def: varphi} by a regular function $G$ for the pair $(\mu, A)$, where $\mu = (\mu_1, \cdots, \mu_d)'$ is the vector of market weights and $A \in CBV([0, T], \mathbb{R}^m)$. This strategy has value
	\begin{equation}	\label{Eq: value of ATS}
	V^{\varphi}(t)=G\big(\mu(t), A(t)\big)+\Gamma^G(t), \quad 0 \leq t \leq T
	\end{equation}
	as in Definitions~\ref{def: TS} and \ref{def: RF}, and its components can be represented, for $i=1, \cdots, d$, in the form
	\begin{align}
		\varphi_i(t)&=\partial_iG\big(\mu(t), A(t)\big)+\Gamma^G(t)+G\big(\mu(t), A(t)\big)-\sum_{j=1}^{d}\mu_j(t)\partial_jG\big(\mu(t), A(t)\big) 	\label{Eq: varphi}
		\\
		&=V^{\varphi}(t)+\partial_iG\big(\mu(t), A(t)\big)-\sum_{j=1}^{d}\mu_j(t)\partial_jG\big(\mu(t), A(t)\big). \nonumber
	\end{align}
\end{prop}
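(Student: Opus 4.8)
The plan is to verify the two asserted formulas by unwinding the definitions \eqref{Def: varphi}, \eqref{Def: defect of SF}, \eqref{Def: defect of B} and invoking the pathwise functional It\^o formula (Theorem~\ref{thm: pathwise Ito}) applied to $G$, together with the alternative expression \eqref{Eq: gamma} for $\Gamma^G$. First I would compute the value process $V^{\varphi}(t)=\sum_{i=1}^d \varphi_i(t)\mu_i(t)$ directly from \eqref{Def: varphi}: since $\sum_i \mu_i(t)=1$, the terms $-Q^{\vartheta}(t)-C(0)$ contribute $-Q^{\vartheta}(t)-C(0)$, so $V^{\varphi}(t)=\sum_i \vartheta_i(t)\mu_i(t)-Q^{\vartheta}(t)-C(0)=V^{\vartheta}(t)-Q^{\vartheta}(t)-C(0)$. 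Now by the definition \eqref{Def: defect of SF} of $Q^{\vartheta}$, $V^{\vartheta}(t)-Q^{\vartheta}(t)=V^{\vartheta}(0)+\int_0^t\sum_i\vartheta_i(s)d\mu_i(s)$, and by \eqref{Def: value} and \eqref{Def: defect of B}, $V^{\vartheta}(0)-C(0)=\sum_i\vartheta_i(0)\mu_i(0)-\big(\sum_i\partial_iG(0,\mu,A)\mu_i(0)-G(0,\mu,A)\big)=G(0,\mu,A)$, using $\vartheta_i=\partial_iG(\cdot,\mu,A)$. Hence $V^{\varphi}(t)=G(0,\mu,A)+\int_0^t\sum_i\partial_iG(s,\mu,A)d\mu_i(s)$.

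Next I would recognize the right-hand side as $G(t,\mu,A)+\Gamma^G(t)$: indeed, this is exactly the content of Definition~\ref{def: RF}, since by \eqref{Def: gamma} we have $\Gamma^G(t)=G(0,\mu,A)-G(t,\mu,A)+\int_0^t\sum_i\vartheta_i(s)d\mu_i(s)$ with $\vartheta_i=\partial_iG(\cdot,\mu,A)$, so $G(t,\mu,A)+\Gamma^G(t)=G(0,\mu,A)+\int_0^t\sum_i\partial_iG(s,\mu,A)d\mu_i(s)$, matching the expression just derived. This establishes \eqref{Eq: value of ATS}. (Alternatively, one applies Theorem~\ref{thm: pathwise Ito} to $G$ to express $\int_0^t\sum_i\partial_iG\,d\mu_i$ as $G(t,\mu,A)-G(0,\mu,A)$ minus the horizontal and second-order vertical terms, then uses \eqref{Eq: gamma} to identify what remains as $\Gamma^G(t)$; I would mention this route as a cross-check that the pathwise It\^o integral is well-defined here, which is where the $\mathbb{C}^{1,2}$ and finite-covariation hypotheses are used.)

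Finally, for the representation \eqref{Eq: varphi}, I would substitute back into \eqref{Def: varphi}. We need to evaluate $Q^{\vartheta}(t)+C(0)$ in closed form: from the computation above, $-Q^{\vartheta}(t)-C(0)=V^{\varphi}(t)-V^{\vartheta}(t)=V^{\varphi}(t)-\sum_{j=1}^d\mu_j(t)\partial_jG(t,\mu,A)$. Plugging this into $\varphi_i(t)=\vartheta_i(t)-Q^{\vartheta}(t)-C(0)=\partial_iG(t,\mu,A)+V^{\varphi}(t)-\sum_{j=1}^d\mu_j(t)\partial_jG(t,\mu,A)$ gives the second displayed line of \eqref{Eq: varphi}; the first line then follows by replacing $V^{\varphi}(t)$ with $G(t,\mu,A)+\Gamma^G(t)$ via \eqref{Eq: value of ATS}.

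The argument is essentially bookkeeping, so there is no deep obstacle; the one point requiring care is that the pathwise It\^o integral $\int_0^t\sum_i\partial_iG(s,\mu,A)d\mu_i(s)$ appearing implicitly in \eqref{Def: gamma} and \eqref{Def: defect of SF} is indeed well-defined and that the two meanings of $\Gamma^G$ — the defining one in \eqref{Def: gamma} and the It\^o-expanded one in \eqref{Eq: gamma} — agree. Both are guaranteed by the standing assumption that $G\in\mathbb{C}^{1,2}[0,T]$ is regular for $(\mu,A)$ and that $\mu$ admits continuous covariations along $(\mathbb{T}_n)$, so Theorem~\ref{thm: pathwise Ito} applies. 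I would flag this explicitly but not belabor it.
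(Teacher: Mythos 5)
Your proof is correct and is essentially the argument the paper relies on: the paper simply cites Proposition 4.3 of Karatzas and Ruf (2017) with a change of notation, and the bookkeeping you carry out — computing $V^{\varphi}=V^{\vartheta}-Q^{\vartheta}-C(0)$ via $\sum_i\mu_i=1$, cancelling $V^{\vartheta}(0)-C(0)=G(0,\mu,A)$, and identifying the result with $G+\Gamma^G$ through \eqref{Def: gamma} — is exactly that proof written out in full. The only substantive hypothesis, well-definedness of the pathwise integral and the agreement of \eqref{Def: gamma} with \eqref{Eq: gamma}, is correctly flagged as coming from regularity and Theorem~\ref{thm: pathwise Ito}.
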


\medskip

\begin{proof}
	The proof is exactly the same as that of Proposition 4.3 of \cite{Karatzas:Ruf:2017}, if we change $G\big(\mu(t)\big)$, $D_jG\big(\mu(t)\big)$ there, into $G\big(\mu(t), A(t)\big)$, $\partial_j G\big(\mu(t), A(t)\big)$ in our present context.
\end{proof}

\medskip

The decomposition \eqref{Eq: value of ATS} suggests, that we can think of $\Gamma^G(\cdot)$ in \eqref{Def: gamma}, \eqref{Eq: gamma}, as expressing the ``cumulative earnings" of the strategy $\varphi$ of \eqref{Def: varphi}, around the ``baseline" $G\big(\mu(\cdot), A(\cdot)\big)$.

\medskip

\begin{rem}
	~
	\begin{enumerate}[label=(\roman*)]
		\item When the function $G$ in Proposition~\ref{prop: additive generation} satisfies the `balance' condition,
		\begin{equation}	\label{Def: balance}
			G\big(\mu(t), A(t)\big) = \sum_{j=1}^{d}\mu_j(t)\partial_jG\big(\mu(t), A(t)\big), \qquad 0 \leq t \leq T,
		\end{equation}
		the additively generated trading strategy $\varphi$ in \eqref{Eq: varphi} takes the considerably simpler form
		\begin{equation}	\label{Eq: balanced varphi}
			\varphi_i(t)=\partial_iG\big(\mu(t), A(t)\big)+\Gamma^G(t), \qquad i = 1, \cdots, d.
		\end{equation}
		\item For an additively generated trading strategy $\varphi$ with strictly positive value process $V^{\varphi} > 0$, the corresponding portfolio weights are defined as
		\begin{equation*}
			\pi_i(t) := \frac{\varphi_i(t)\mu_i(t)}{V^{\varphi}(t)} = \frac{\varphi_i(t)\mu_i(t)}{\sum_{i=1}^d \varphi_i(t)\mu_i(t)}, \qquad i = 1, \cdots, d,
		\end{equation*}
		or with the help of \eqref{Eq: value of ATS} and \eqref{Eq: varphi}, as
		\begin{equation}		\label{Def: pi of ATS}
			\pi_i(t) = \mu_i(t)\Bigg( 1+\frac{1}{G\big(\mu(t), A(t)\big)+\Gamma^G(t)} \Big(\partial_iG\big(\mu(t), A(t)\big)-\sum_{j=1}^d\mu_j(t)\partial_jG\big(\mu(t), A(t)\big) \Big) \Bigg).
		\end{equation}
	\end{enumerate}
\end{rem}

\bigskip

\subsection{Multiplicatively generated trading strategies}		\label{sec : 3.3}
Next, we introduce the notion of multiplicatively generated trading strategy. We suppose that a function $G: \mathbb{R}^d \times \mathbb{R}^m \rightarrow \mathbb{R}$ is regular as in Definition~\ref{def: RF} for the pair $(\mu, A)$, where $\mu$ is the vector of market weights and $A$ is some additional function in $CBV([0, T], \mathbb{R}^m)$, and that the scalar function $1/G\big(\mu(\cdot), A(\cdot)\big)$ is locally bounded. This holds, for example, if $G$ is bounded away from zero. We consider the vector function $\eta=(\eta_1, \cdots, \eta_d)'$ defined by
\begin{equation}	\label{Def: eta}
	\eta_i:=\vartheta_i \times \exp\Big(\int_{0}^{\cdot}\frac{d\Gamma^G(t)}{G\big(\mu(t), A(t))}\Big)=\partial_iG\big(\mu(\cdot), A(\cdot)\big) \times \exp\Big(\int_{0}^{\cdot}\frac{d\Gamma^G(t)}{G\big(\mu(t), A(t)\big)}\Big)
\end{equation}
in the notation of \eqref{Def: gamma}, \eqref{Def: vartheta2} for $i=1, \cdots, d$. The integral here is well-defined, as $1/G\big(\mu(\cdot), A(\cdot)\big)$ is assumed to be locally bounded. Moreover, we have $\eta \in \mathcal{L}(\mu)$, since $\vartheta \in \mathcal{L}(\mu)$ from Definition~\ref{def: TS}, and the exponential term is again a locally bounded function. As before, we turn this $\eta$ into a trading strategy $\psi=(\psi_1, \cdots, \psi_d)'$ by setting
\begin{equation}	\label{Def: psi}
	\psi_i := \eta_i -Q^{\eta}-C(0), \quad i=1, \cdots, d
\end{equation}
in the manner of \eqref{Def: varphi}, and with $Q^{\eta}$, $C(0)$ defined as in \eqref{Def: defect of SF} and \eqref{Def: defect of B}.

\medskip

\begin{defn} [Multiplicative generation]
	\label{def: MG}
	The trading strategy $\psi=(\psi_1, \cdots, \psi_d)'$ of \eqref{Def: psi}, \eqref{Def: eta} is said to be \textit{multiplicatively generated} by the function $G: \mathbb{R}^d \times \mathbb{R}^m \rightarrow \mathbb{R}$.
\end{defn}

\medskip

\begin{prop}
	\label{prop: multiplicative generation}
	Consider the trading strategy $\psi=(\psi_1, \cdots, \psi_d)'$, generated as in \eqref{Def: psi} by a given function $G: \mathbb{R}^d \times \mathbb{R}^m \rightarrow \mathbb{R}$ which is regular for $(\mu, A)$. This pair consists of the vector $\mu=(\mu_1, \cdots \mu_d)'$ of market weights, and of a suitable function $A \in CBV([0, T], \mathbb{R}^m)$ such that $1/G\big(\mu(\cdot), A(\cdot)\big)$ is locally bounded.
	
	\smallskip
	
	The value generated by this strategy is given by
	\begin{equation}	\label{Eq: value of MTS}
		V^{\psi}(t)=G\big(\mu(t), A(t)\big)\exp\Big(\int_{0}^{t}\frac{d\Gamma^G(s)}{G\big(\mu(s), A(s)\big)}\Big) > 0, \qquad 0 \leq t \leq T
	\end{equation}
	in the notation of \eqref{Def: gamma}. This strategy $\psi$ can be represented for $i=1, \cdots, d$ in the form
	\begin{equation}	\label{Eq: psi}
		\psi_i(t)=V^{\psi}(t)\bigg(1+\frac{1}{G\big(\mu(t), A(t)\big)}\Big(\partial_iG\big(\mu(t), A(t)\big)-\sum_{j=1}^{d}\mu_j(t)\partial_jG\big(\mu(t), A(t)\big)\Big)\bigg).
	\end{equation}
\end{prop}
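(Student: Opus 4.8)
The plan is to follow the pattern of Proposition~\ref{prop: additive generation} (and of the corresponding multiplicative statement in \cite{Karatzas:Ruf:2017}), substituting $G(\mu(t))$ and $D_jG(\mu(t))$ there by the path-dependent quantities $G(t,\mu,A)$ and $\partial_jG(t,\mu,A)$. Write $\mathcal{E}(\cdot):=\exp\big(\int_0^\cdot d\Gamma^G(t)/G(t,\mu,A)\big)$ for the exponential factor appearing in \eqref{Def: eta}. Since $G(\cdot,\mu,A)$ is continuous and strictly positive, $1/G(\cdot,\mu,A)$ is locally bounded, so the Stieltjes integral $\int_0^\cdot d\Gamma^G(t)/G(t,\mu,A)$ against the finite-variation function $\Gamma^G$ is itself continuous and of finite variation; hence $\mathcal{E}$ is continuous, of finite variation, strictly positive, with $\mathcal{E}(0)=1$ and $d\mathcal{E}(t)=\mathcal{E}(t)\,d\Gamma^G(t)/G(t,\mu,A)$.

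First I would compute $V^\psi$ directly from its definition \eqref{Def: value}. Using $\sum_{i=1}^d\mu_i\equiv 1$ together with \eqref{Def: psi} and \eqref{Def: defect of SF}, one gets $V^\psi(t)=V^\eta(t)-Q^\eta(t)-C(0)=V^\eta(0)-C(0)+\int_0^t\sum_{i=1}^d\eta_i(s)\,d\mu_i(s)$; and since $V^\eta(0)=\sum_i\partial_iG(0,\mu,A)\mu_i(0)=G(0,\mu,A)+C(0)$ by \eqref{Def: defect of B} (recall $\mathcal{E}(0)=1$), this reduces to $V^\psi(t)=G(0,\mu,A)+\int_0^t\mathcal{E}(s)\sum_{i=1}^d\vartheta_i(s)\,d\mu_i(s)$. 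It then remains to identify the right-hand side with $G(t,\mu,A)\mathcal{E}(t)$, and this is where I would invoke integration by parts: as $\mathcal{E}$ is continuous and of finite variation, the product of the continuous path $t\mapsto G(t,\mu,A)$ with $\mathcal{E}$ satisfies $G(t,\mu,A)\mathcal{E}(t)-G(0,\mu,A)=\int_0^t\mathcal{E}(s)\,dG(s,\mu,A)+\int_0^t G(s,\mu,A)\,d\mathcal{E}(s)$, with no quadratic-covariation correction term. By the very definition \eqref{Def: gamma} of $\Gamma^G$, the path $G(\cdot,\mu,A)$ decomposes as $G(t,\mu,A)=G(0,\mu,A)+\int_0^t\sum_i\vartheta_i\,d\mu_i-\Gamma^G(t)$, so $\int_0^t\mathcal{E}\,dG(\cdot,\mu,A)=\int_0^t\mathcal{E}\sum_i\vartheta_i\,d\mu_i-\int_0^t\mathcal{E}\,d\Gamma^G$; while $\int_0^t G(\cdot,\mu,A)\,d\mathcal{E}=\int_0^t\mathcal{E}\,d\Gamma^G$ from the Stieltjes differential of $\mathcal{E}$. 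Adding the two gives $G(t,\mu,A)\mathcal{E}(t)-G(0,\mu,A)=\int_0^t\mathcal{E}(s)\sum_i\vartheta_i(s)\,d\mu_i(s)$, which matches the expression for $V^\psi(t)$ above, so $V^\psi=G(\cdot,\mu,A)\mathcal{E}$; positivity $V^\psi>0$ is then immediate from $G>0$ and $\mathcal{E}>0$. (Along the way one also checks, exactly as for $\varphi$ via the analogue of Proposition~2.3 in \cite{Karatzas:Ruf:2017}, that $\psi$ is genuinely self-financed, since $\sum_i\mu_i\equiv 1$ annihilates the contribution of the common shift $-Q^\eta-C(0)$ to $\int_0^\cdot\sum_i\psi_i\,d\mu_i$.)

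For the representation \eqref{Eq: psi}, I would subtract the identity $V^\psi=V^\eta-Q^\eta-C(0)$ from $\psi_i=\eta_i-Q^\eta-C(0)$ to get $\psi_i-V^\psi=\eta_i-V^\eta$, that is, $\psi_i(t)=V^\psi(t)+\mathcal{E}(t)\big(\partial_iG(t,\mu,A)-\sum_{j=1}^d\mu_j(t)\partial_jG(t,\mu,A)\big)$, using $\eta_i=\mathcal{E}\,\partial_iG(\cdot,\mu,A)$ and $V^\eta=\mathcal{E}\sum_j\mu_j\partial_jG(\cdot,\mu,A)$. Substituting $\mathcal{E}(t)=V^\psi(t)/G(t,\mu,A)$ from the value formula \eqref{Eq: value of MTS} yields \eqref{Eq: psi}.

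The main obstacle is the integration-by-parts step, and more precisely making the manipulation of $dG(\cdot,\mu,A)$ rigorous in the pathwise, probability-free framework: one must verify that the product of $G(\cdot,\mu,A)$ with the finite-variation factor $\mathcal{E}$ carries no quadratic-variation term (clear, since $\mathcal{E}$ is of finite variation) and that the pathwise It\^o integral is linear/associative enough to justify $\int_0^t\mathcal{E}\,d\big(\int_0^\cdot\sum_i\vartheta_i\,d\mu_i\big)=\int_0^t\mathcal{E}\sum_i\vartheta_i\,d\mu_i$. Everything else is the same bookkeeping as in the proof of Proposition~\ref{prop: additive generation}.
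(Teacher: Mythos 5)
Your proposal is correct, and it reaches the key identity $d\big(G(\cdot,\mu,A)\,\mathcal{E}(\cdot)\big)=\sum_i\eta_i\,d\mu_i$ by a technically different route from the paper. The paper applies the pathwise functional It\^o formula (Theorem~\ref{thm: pathwise Ito}) once more, directly to the product functional $G(t,\mu,A)K(t)$, and then cancels the three finite-variation terms $\int K\,d\Gamma^G$, $\int\sum_k D_kG\,K\,dA_k$ and $\tfrac12\int\sum_{i,j}\partial^2_{i,j}G\,K\,d\langle\mu_i,\mu_j\rangle$ against one another using the It\^o-derived expression \eqref{Eq: gamma} for $\Gamma^G$. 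You instead work scalar-by-scalar: you take the raw defining decomposition \eqref{Def: gamma} of the regular functional, $G(t,\mu,A)=G(0,\mu,A)+\int_0^t\sum_i\vartheta_i\,d\mu_i-\Gamma^G(t)$, and apply the classical product rule for a continuous path times a continuous finite-variation path, so that the $\mathcal{E}\,d\Gamma^G$ terms cancel by construction of $\mathcal{E}$. Both arguments are sound and the remaining steps (matching initial values via $C(0)$, passing from $\eta$ to $\psi$ via the analogue of Proposition~2.3 of \cite{Karatzas:Ruf:2017}, and deriving \eqref{Eq: psi} from $\psi_i-V^\psi=\eta_i-V^\eta$) coincide with the paper's. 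What your route buys is economy of hypotheses -- you never need to check that the product $GK$ is itself a $\mathbb{C}^{1,2}$ functional of $(\mu,A,\Gamma^G)$, only the one-dimensional Stieltjes product rule; what it costs is exactly the point you flag at the end, namely the associativity $\int_0^t\mathcal{E}\,d\big(\int_0^\cdot\sum_i\vartheta_i\,d\mu_i\big)=\int_0^t\mathcal{E}\sum_i\vartheta_i\,d\mu_i$ of the pathwise (F\"ollmer-type) integral, which is not automatic in a probability-free setting and must be verified via Riemann-sum approximations; the paper's single application of Theorem~\ref{thm: pathwise Ito} to the product packages that verification away. With that associativity lemma supplied, your proof is complete.
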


\medskip

\begin{proof}
	We follow the argument in Proposition 4.8 of \cite{Karatzas:Ruf:2017}, using the pathwise It\^o formula instead of the standard It\^o formula for semimartingales. With the notation
	\begin{equation*}
		K(t) := \exp\Big(\int_{0}^{t}\frac{d\Gamma^G(s)}{G\big(\mu(s), A(s)\big)}\Big)
	\end{equation*}
	in \eqref{Eq: value of MTS}, the pathwise It\^o formula (Theorem~\ref{Thm : Ito formula}) yields
	\begin{alignat*}{2}
		G\big(\mu(t), A(t)\big)K(t) & =G\big(\mu(0), A(0)\big)K(0) &&+\int_{0}^{t}\sum_{i=1}^{d}\partial_iG\big(\mu(s), A(s)\big)K(s)d\mu_i(s)+\int_{0}^{t}K(s)d\Gamma^G(s)
		\\
		& &&+\int_{0}^{t}\sum_{i=0}^{m}D_iG\big(\mu(s), A(s)\big)K(s)dA_i(s) \notag
		\\
		& &&+\frac{1}{2}\int_{0}^{t}\sum_{i=1}^{d}\sum_{j=1}^{d}\partial_{i, j}^2G\big(\mu(s),	 A(s)\big)K(s)d\langle\mu_i, \mu_j\rangle(s)
		\\
		&=G\big(\mu(0), A(0)\big)K(0) &&+\int_{0}^{t}\sum_{i=1}^{d}\partial_iG\big(\mu(s), A(s)\big)K(s)d\mu_i(s)
		\\
		&=G\big(\mu(0), A(0)\big)K(0) &&+\int_{0}^{t}\sum_{i=1}^{d}\eta_i(s)d\mu_i(s)
		\\
		&=G\big(\mu(0), A(0)\big)K(0) &&+\int_{0}^{t}\sum_{i=1}^{d}\psi_i(s)d\mu_i(s), \quad \quad 0 \leq t \leq T.
	\end{alignat*}
	Here, the second equality uses the expression in \eqref{Eq: gamma}, and the last equality relies on Proposition 2.3 of \cite{Karatzas:Ruf:2017}. Since \eqref{Eq: value of MTS} holds at time zero, it follows that \eqref{Eq: value of MTS} holds at any time $t \in [0, T]$. The justification for \eqref{Eq: psi} is exactly the same as that of Proposition 4.8 in \cite{Karatzas:Ruf:2017}.
\end{proof}

\medskip

\begin{rem}
	~
	\begin{enumerate}[label=(\roman*)]
	\item The multiplicatively generated trading strategy $\psi$ in \eqref{Eq: psi} takes the far simpler form
		\begin{equation}	\label{Eq: balanced psi}
			\psi_i(t)=\partial_iG\big(\mu(t), A(t)\big)\exp\Big(\int_{0}^{t}\frac{d\Gamma^G(s)}{G\big(\mu(s), A(s)\big)}\Big), \qquad i=1, \cdots, d
		\end{equation}
		when the function $G$ in Proposition~\ref{prop: multiplicative generation} is `balanced' as in \eqref{Def: balance}.
		
	\item The portfolio weights corresponding to the multiplicatively generated trading strategy $\psi$, are similarly defined as
		\begin{equation*}
			\Pi_i(t) := \frac{\psi_i(t)\mu_i(t)}{V^{\psi}(t)} = \frac{\psi_i(t)\mu_i(t)}{\sum_{i=1}^d \psi_i(t)\mu_i(t)}, \qquad i = 1, \cdots, d;
		\end{equation*}
		and with the help of \eqref{Eq: value of MTS} and \eqref{Eq: psi}, take the form
		\begin{equation*}
			\Pi_i(t) = \mu_i(t)\bigg( 1+\frac{1}{G\big(\mu(t), A(t)\big)} \Big(\partial_iG\big(\mu(t), A(t)\big)-\sum_{j=1}^d\mu_j(t)\partial_jG\big(\mu(t), A(t)\big) \Big) \bigg).
		\end{equation*}
		For a function $G$ that satisfies the ``balance" condition \eqref{Def: balance}, this last expression simplifies to
		\begin{equation*}
			\Pi_i(t) = \mu_i(t)~\frac{\partial_iG\big(\mu(t), A(t)\big)}{G\big(\mu(t), A(t)\big)}, \qquad i = 1, \cdots, d.
		\end{equation*}
	\end{enumerate}
\end{rem}

\bigskip

\bigskip

\bigskip

\section{Sufficient conditions for strong relative arbitrage} 
 \label{sec: 4}

We consider the vector $\mu=(\mu_1, \cdots, \mu_d)'$ of market weights as in \eqref{Def: market weights}. For a given trading strategy $\varphi$ with respect to the market weights $\mu$, let us recall the value process $V^{\varphi}=\sum_{i=1}^{d}\varphi_i\mu_i$ from Definition~\ref{def: TS}. For some fixed $T_* \in (0, T]$, we say that $\varphi$ is \textit{strong relative arbitrage with respect to the market} over the time-horizon $[0, T_*]$, if we have
\begin{equation}	\label{Con: strong arb2}
V^{\varphi}(t) \geq 0, \quad \forall ~ t \in [0, T_*],
\end{equation}
along with
\begin{equation}	\label{Con: strong arb}
	V^{\varphi}(T_*) > V^{\varphi}(0).
\end{equation}

\medskip

\begin{rem}
	The notion of strong relative arbitrage defined above does not depend on any probability measure, and is slightly stricter than the existing definition of strong relative arbitrage. The classical definition involves an underlying filtered probability space, and posits that the market weights $\mu_1, \cdots, \mu_d$ should be continuous, adapted stochastic processes on this space. Also, there are two types of classical arbitrage; relative arbitrage and `strong' relative arbitrage as in Definition~4.1 of \cite{Fernholz:Karatzas:Ruf:2018}. In this old definition, an underlying probability measure is essential in defining this `weak' version of relative arbitrage. However, if we posit that $\varphi$ be strong relative arbitrage when \eqref{Con: strong arb} holds for `every' realization of $\mu$, instead of `almost sure' realization of $\mu$, the notion of strong relative arbitrage can be established without referring to any probability structure. Since we constructed trading strategies in a pathwise,  probability-free setting, the `strong' version of relative arbitrage is here a more appropriate concept of arbitrage, and we adopt the above strict definition from now on.
\end{rem}

\medskip

The value process of a trading strategy generated functionally, either additively or multiplicatively, admits a quite simple representation in terms of the generating function $G$ and the derived Gamma function $\Gamma^G$ as in \eqref{Eq: value of ATS} and \eqref{Eq: value of MTS}. This simple representation provides in turn nice sufficient conditions for strong relative arbitrage with respect to the market; for example, as in Theorem 5.1 and Theorem 5.2 of \cite{Karatzas:Ruf:2017}. In this section, we find such conditions on trading strategies generated by a regular function $G\big(\mu(\cdot), A(\cdot)\big)$, which depends not only on the vector of market weights $\mu$, but also on an additional finite-variation process $A$ related to $\mu$. We also give new sufficient conditions leading to strong relative arbitrage for both additively and multiplicatively generated trading strategies, which is different from Theorem~5.1 and Theorem~5.2 of \cite{Karatzas:Ruf:2017}.

\bigskip

Until now, we have not specified the $m$-dimensional function $A \in CBV([0, T], \mathbb{R}^m)$, so it is time to consider some plausible candidates for this function of finite variation. A first suitable candidate would be the $d$-dimensional vector
\begin{equation}	\label{Def: quadratic variation}
A=\langle\mu\rangle=\big(\langle\mu_1\rangle, \langle\mu_2\rangle, \cdots, \langle\mu_d\rangle\big)'
\end{equation}
of quadratic variation of market weights. We can also think about a more general candidate; namely, the $S_{d}^{+}$-valued covariation process of market weights. Here, $S_{d}^{+}$ is the notation for symmetric positive $d\times d$ matrices, and we will use double bracket $\langle\langle\quad \rangle\rangle$ to distinguish this $d^2$-dimensional vector from \eqref{Def: quadratic variation}: namely,
\begin{equation}	\label{Def: covariation}
A=\langle\langle\mu\rangle\rangle, \quad (A)_{i, j}=\langle \mu_i , \mu_j \rangle \qquad 1 \leq i, j \leq d.
\end{equation}
The advantage of choosing $A$ as in \eqref{Def: covariation}, is that we can match the integrators of the two integrals in \eqref{Eq: gamma}, and the resulting expression for $\Gamma^G(\cdot)$ can then be cast as one integral.

\smallskip

There are many other functions of finite variation which can be candidates for the process $A$. We list some examples below:
\begin{enumerate}
	
	\item The moving average $\bar{\mu}$ of $\mu$ defined by
	\begin{equation*}
		\bar{\mu}_i(t) := \begin{cases} \frac{1}{\delta}\int_{0}^{t}\mu_i(s)ds+\frac{1}{\delta}\int_{t-\delta}^{0}\mu_i(0)ds , & t \in [0, \delta),
		\\
		\frac{1}{\delta}\int_{t-\delta}^{t}\mu_i(s)ds, & t \in [\delta, T], 
		\end{cases} \quad i = 1, \cdots, d.
	\end{equation*}
	
	\item The running maximum $\mu^*$ of the market weights with the components $\mu^*_i(t) := \max_{0 \leq s \leq t}\mu_i(s)$, and the running minimum $\mu_*$ of the market weights with the components $\mu_{*i}(t) := \min_{0 \leq s \leq t}\mu_i(s)$ for $i = 1, \cdots, d$.

	\item The `pathwise local time' $L_{\cdot}^{\mu_{(i)}-\mu_{(k)}}(0)$ of $\mu_{(i)}-\mu_{(k)} \geq 0 $ at the origin, for $1 \leq i < k \leq d$, which is defined in Section~\ref{sec: 5}. We call this process the ``collision local time" of order $k-i+1$~(the number of particles involved in the collison), for the ranked market weights
	\begin{equation*}
		\mu_{(1)} := \max_j \mu_j \geq \mu_{(2)} \geq \cdots \geq \mu_{(d)} =: \min_j \mu_j.
	\end{equation*} 
\end{enumerate}
Since the vectors $\bar{\mu}, \mu^*$, and $\mu_*$, defined above, are $d$-dimensional, $m=d$ holds for these choices of $A$. For the choice of $\frac{1}{2}d(d-1)$-dimensional vector $\Lambda$ with the components $(\Lambda)_{i, k} := L^{\mu_{(i)}-\mu_{(k)}}$, the dimension $m$ of $A$ is $\frac{1}{2}n(n-1)$. Empirical results using the moving average $\bar{\mu}$ can be found in Section 3 of \cite{Schied:2016}. The collision local times $(\Lambda)_{i, k}$ always appear when we deal with function of ranked market weights, as in Example 3.9 of \cite{Karatzas:Ruf:2017}.

\smallskip

We first consider conditions leading to strong relative arbitrage with respect to the market with general $A$ as the third input of generating function $G$. Then we present some examples of $G$ with specific finite variation function $A$ chosen from among the above candidates, and continue with empirical results regarding these examples.

\bigskip

\subsection{Additively generated strong relative arbitrage}	\label{sec : 4.1}

We start with a condition leading to additively generated strong arbitrage, which is similar to Theorem 5.1 of \cite{Karatzas:Ruf:2017}.

\medskip

\begin{thm} [Additively generated strong relative arbitrage when $\Gamma^G$ is nondecreasing]
	\label{thm: AGRA}
	Fix a function $G: \mathbb{R}^d \times \mathbb{R}^m \rightarrow [0, \infty)$ which is regular for the pair $(\mu, A)$, and such that the function $\Gamma^G(\cdot)$ in \eqref{Def: gamma} or \eqref{Eq: gamma} is nondecreasing. Here, $\mu$ is the vector of market weights and $A$ is some $m$-dimensional function in $CBV([0, T], \mathbb{R}^m)$, as before.
	
	\smallskip
	
	For some real number $T_*>0$, suppose that
	\begin{equation}	\label{Con: additive arb}
		\Gamma^G(T_*) > G\big(\mu(0), A(0)\big)
	\end{equation}
	holds. Then the trading strategy $\varphi$, additively generated by the regular function $G$ as in Definition~\ref{def: AG}, is strong arbitrage relative to the market over every time horizon $[0, t]$ with $T_* \leq t \leq T$.
\end{thm}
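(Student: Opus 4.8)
The plan is to read everything off from the value identity \eqref{Eq: value of ATS} in Proposition~\ref{prop: additive generation}, which asserts that the additively generated strategy $\varphi$ of Definition~\ref{def: AG} has value $V^{\varphi}(t) = G(t, \mu, A) + \Gamma^G(t)$ for all $t \in [0, T]$. Since $G$ is assumed to take values in $[0, \infty)$, both requirements \eqref{Con: strong arb2} and \eqref{Con: strong arb} in the definition of strong relative arbitrage will reduce to statements about the sign and the monotonicity of $\Gamma^G$, and the hypotheses of the theorem deliver exactly these.

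First I would record the structural fact that $\Gamma^G(0) = 0$: evaluating the definition \eqref{Def: gamma} at $t = 0$ gives $\Gamma^G(0) = G(0, \mu, A) - G(0, \mu, A) + 0 = 0$. Together with the standing assumption that $\Gamma^G(\cdot)$ is nondecreasing, this forces $\Gamma^G(s) \geq 0$ for every $s \in [0, T]$. Consequently $V^{\varphi}(s) = G(s, \mu, A) + \Gamma^G(s) \geq 0$ for every $s \in [0, T]$, because $G \geq 0$ everywhere on its domain; in particular \eqref{Con: strong arb2} holds on any horizon $[0, t]$ with $t \leq T$.

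Next, for the strict-domination part, I would evaluate the value formula at time $0$ and at a generic $t \in [T_*, T]$. At time $0$ we get $V^{\varphi}(0) = G(0, \mu, A) + \Gamma^G(0) = G(0, \mu, A)$. At time $t$, using $G \geq 0$ and the monotonicity of $\Gamma^G$ (so that $\Gamma^G(t) \geq \Gamma^G(T_*)$ since $t \geq T_*$), we obtain
\[
  V^{\varphi}(t) = G(t, \mu, A) + \Gamma^G(t) \;\geq\; \Gamma^G(t) \;\geq\; \Gamma^G(T_*) \;>\; G(0, \mu, A) \;=\; V^{\varphi}(0),
\]
where the strict inequality is precisely the hypothesis \eqref{Con: additive arb}. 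This establishes \eqref{Con: strong arb}, and hence $\varphi$ is strong relative arbitrage with respect to the market over every time horizon $[0, t]$ with $T_* \leq t \leq T$.

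There is no real obstacle here: the content of the theorem is entirely front-loaded into Proposition~\ref{prop: additive generation} together with the elementary identity $\Gamma^G(0) = 0$. The only two points that need to be stated explicitly are that the range of $G$ is $[0, \infty)$, which is what makes $V^{\varphi} \geq \Gamma^G$ hold pointwise, and that the nondecreasing property of $\Gamma^G$ is invoked twice, once to pass from $\Gamma^G(0) = 0$ to $\Gamma^G \geq 0$ and once to propagate the strict inequality at $T_*$ to all later times $t \in [T_*, T]$.
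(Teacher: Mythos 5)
Your proof is correct and follows essentially the same route as the paper's: both deduce $V^{\varphi}\geq 0$ from $G\geq 0$ together with $\Gamma^G(\cdot)\geq\Gamma^G(0)=0$, and then obtain $V^{\varphi}(t)\geq\Gamma^G(t)\geq\Gamma^G(T_*)>G(0,\mu,A)=V^{\varphi}(0)$ for $t\in[T_*,T]$ from the monotonicity of $\Gamma^G$ and hypothesis \eqref{Con: additive arb}. Your write-up is in fact slightly more explicit than the paper's about where $\Gamma^G(0)=0$ comes from and about the two separate uses of monotonicity.
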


\medskip

\begin{proof}
	Since $\Gamma^G(\cdot)$ is nondecreasing, we obtain $V^{\varphi}(t)=G\big(\mu(t), A(t)\big)+\Gamma^G(t) \geq \Gamma^G(0)=0$ for every $t \in [0, T_*]$ from \eqref{Eq: value of ATS}. We also have $V^{\varphi}(t)=G\big(\mu(t), A(t)\big)+\Gamma^G(t) \geq \Gamma^G(T_*)>G\big(\mu(0), A(0)\big)=V^{\varphi}(0)$ for all $t \in [T_*, T]$. The last equality holds because $\Gamma^G(0)=0$.
\end{proof}

\medskip

\begin{rem}
	If we choose $A=\langle\langle\mu\rangle\rangle$ as in \eqref{Def: covariation}, then from \eqref{Eq: gamma}, the function $\Gamma^G(\cdot)$ is nondecreasing when
	\begin{equation*}
		-\sum_{i, k=1}^{d}\int_{0}^{\cdot} \Big( D^{1}_{(i, k)} + \frac{1}{2} \partial_{i, k}^2 \Big) G\big(\mu(s), \langle\langle \mu \rangle\rangle(s)\big) d\langle \mu_i, \mu_j \rangle(s)
	\end{equation*}
	is nondecreasing. Here, $D^{1}_{(i, k)}$ denotes the first-order partial derivative operator with respect to the $(i, k)$th entry of $\langle\langle \mu \rangle\rangle$. Also, we substitute from \eqref{Def: covariation}, \eqref{Eq: gamma} into \eqref{Con: additive arb} to obtain the more explicit form
	\begin{equation}			\label{Eq: additive Gamma with QV}
		-\sum_{i, k=1}^{d}\int_{0}^{T_*}\Big( D^{1}_{(i, k)} + \frac{1}{2} \partial_{i, k}^2 \Big) G\big(\mu(s), \langle\langle \mu \rangle\rangle(s)\big) d\langle \mu_i, \mu_j \rangle(s)		
		>G\big(\mu(0), \langle\langle \mu \rangle\rangle(0)\big)
	\end{equation}
	of the condition \eqref{Con: additive arb} for strong relative arbitrage. Thus, unlike the situation of Theorem 3.7 in \cite{Karatzas:Ruf:2017}, we can have a nondecreasing $\Gamma^G$ and a chance for effecting strong relative arbitrage, even without the `concavity' of $G$ in $\mu$.
\end{rem}

\medskip

\begin{rem}
	Let us assume that the arguments $\mu$ and $A$ are `additively separated' in the function $G$. By this we mean, that there exist two regular functions $K$ and $H$ with the property that $K$ depends only on $\mu(t)$ and $H$ depends on $A(t)$, and such that 
	\begin{equation} 	\label{Eq: seperated case}
		G\big(\mu(t), A(t)\big)=K\big(\mu(t)\big)+H\big(A(t)\big), \quad \forall \ t \in [0, T]
	\end{equation}
	holds. Then, we get $\partial_{i, k}^2 G\big(\mu(t), A(t)\big)=\partial_{i, k}^2 K\big(\mu(t)\big)$ and $D_{\ell}G\big(\mu(t), A(t)\big)=D_{\ell}H\big(A(t)\big)$. Substituting these expressions into \eqref{Eq: gamma}, we obtain
	\begin{equation}	\label{Eq: gamma in separated case}
		\Gamma^G(T_*) =-\sum_{\ell=1}^m \int_{0}^{T_*}D_{\ell}H\big(A(s)\big)dA_{\ell}(s)
		-\frac{1}{2}\sum_{i, k=1}^{d}\int_{0}^{T_*} \partial_{i, k}^2 K\big(\mu(s)\big) d\langle \mu_i, \mu_k \rangle(s)
	\end{equation}
	and, from \eqref{Eq: value of ATS} of Proposition~\ref{prop: additive generation}, the relative value process of the additively generated trading strategy $\varphi$ by $G$ can be expressed as
	\begin{equation}	\label{Eq: value in separated case}
		V^{\varphi}(T_*)=K\big(\mu(T_*)\big)+H\big(A(T_*)\big)+\Gamma^G(T_*), \qquad V^{\varphi}(0)=K\big(\mu(0)\big)+H\big(A(0)\big).
	\end{equation}
	After substituting \eqref{Eq: gamma in separated case}, \eqref{Eq: value in separated case} into \eqref{Con: strong arb} and rearranging terms in such a manner that the left-hand side contains only terms involving $K$, the strong arbitrage condition \eqref{Con: strong arb} takes the form
	\begin{equation} 	\label{Ineq: separated case}
		K\big(\mu(T_*)\big)-K\big(\mu(0)\big) -\frac{1}{2}\sum_{i, k=1}^{d}\int_{0}^{T_*}\partial_{i, k}^2 K\big(\mu(s)\big)d\langle \mu_i, \mu_k \rangle (s)
		>
		B_{H}\big(A(T_*)\big),
	\end{equation}
	where
	\begin{equation*}
		B_{H}\big(A(T_*)\big) := -H\big(A(T_*)\big)+H\big(A(0)\big) + \sum_{\ell=1}^m \int_{0}^{T_*}D_{\ell}H\big(A(s)\big)dA_{\ell}(s).
	\end{equation*}
	When we apply the pathwise It\^o formula of Theorem~\ref{Thm : Ito formula} to the function $H\big(\langle\langle\mu\rangle\rangle(t)\big), ~~~ 0 \leq t \leq T$, the right-hand side of the above expression vanishes. Hence, the requirement \eqref{Ineq: separated case} becomes 
	\begin{equation*}
		K\big(\mu(T_*)\big)-\frac{1}{2}\sum_{i, k=1}^{d}\int_{0}^{T_*}\partial_{i, k}^2 K\big(\mu(s)\big)d\langle \mu_i, \mu_k \rangle (s) > K\big(\mu(0)\big)
	\end{equation*}
	and we are in very similar situation as in Theorem 5.1 of \cite{Karatzas:Ruf:2017}.
	
	\smallskip
	
	To be more precise, if $K$ takes non-negative values and is a `Lyapunov function' for the vector $\mu$ of market weights, in the sense that $\Gamma^{K}(t):=-\frac{1}{2}\sum_{i, k=1}^{d}\int_{0}^{t}\partial_{i, k}^2 K\big(\mu(s)\big)d\langle \mu_i, \mu_k \rangle (s)$ is nondecreasing, then the requirement $\Gamma^{K}(T_*)>K\big(\mu(0)\big)$ ensures strong relative arbitrage over every time-horizon $[0, t]$ with $T_* \leq t \leq T$ as in Theorem 5.1 of \cite{Karatzas:Ruf:2017}. Thus, in this `separated' case, we cannot achieve more than the result in Theorem 5.1 of \cite{Karatzas:Ruf:2017}, as all terms on the right-hand side of \eqref{Ineq: separated case} that involve $H$ vanish. This is because when we generate additively the trading strategy $\varphi$ in \eqref{Def: varphi} from a regular function $G$, only the partial derivatives of $G$ with respect to the market weights in \eqref{Def: vartheta2} are involved in $\varphi$, and this makes the $H$ term in \eqref{Eq: seperated case} meaningless in generating $\varphi$. Therefore, to be able to find new sufficient conditions for strong relative arbitrage, we need forms of $G$ more sophisticated than \eqref{Eq: seperated case}. All the examples of $G$ we develop in this paper from now onwards, are of those more elaborate forms.
\end{rem}

\medskip

From \eqref{Eq: value of ATS}, the value $V^{\varphi}$ at time $t$ of the additively generated trading strategy with respect to the market, has two additive components, $G\big(\mu(t), A(t)\big)$ and $\Gamma^{G}(t)$. In Theorem~\ref{thm: AGRA}, we derived the strong arbitrage condition from the ``nondecreasing property" of $\Gamma^{G}(\cdot)$, but there is no reason to differentiate between $G\big(\mu(t), A(t)\big)$ and $\Gamma^{G}(t)$. If the mapping $t \mapsto G\big(\mu(t), A(t)\big)$ is nondecreasing, it is possible derive a strong arbitrage condition like Theorem~\ref{thm: AGRA}, switching the role of $G\big(\mu(t), A(t)\big)$ and $\Gamma^{G}(t)$. However, it is difficult to find functions $G\big(\mu(t), A(t)\big)$ which are monotone in $t$, because $G$ must depend on the market weights $\mu(\cdot)$ and these fluctuate all the time. Thus, we have to `extract a nondecreasing structure' from the generating function $G\big(\mu(\cdot), A(\cdot)\big)$, and use this nondecreasing structure instead of $G$ to derive a new strong arbitrage condition. This is done as follows.

\medskip

\begin{thm} [Additively generated strong relative arbitrage when $\Gamma^G$ admits a lower bound]
	\label{thm: AGRA2}
	Fix a regular function $G: \mathbb{R}^d \times \mathbb{R}^m \rightarrow [0, \infty)$ for the pair $(\mu, A)$, where $\mu$ is the vector of market weights and $A$ is an $m$-dimensional function in $CBV([0, T], \mathbb{R}^m)$, such that the following conditions are satisfied:
	\begin{enumerate}[(i)]
		\item $V^{\varphi}(\cdot) = G\big(\mu(\cdot), A(\cdot)\big)+\Gamma^{G}(\cdot) \geq 0$, where the process $\Gamma^G(\cdot)$ is from \eqref{Def: gamma} or \eqref{Eq: gamma};
		\item there exists a function $F\big(\mu(t), A(t)\big)$ satisfying $G\big(\mu(t), A(t)\big) \geq F\big(\mu(t), A(t)\big)$ for all $t \in [0, T]$ and the mapping $t \mapsto F\big(\mu(t), A(t)\big)$ is nondecreasing;
		\item $\Gamma^{G}(\cdot) \geq -\kappa$ holds by some constant $\kappa$.
	\end{enumerate}
	For some real number $T_*>0$, suppose that
	\begin{equation}	\label{Con: additive arb2}
		F\big(\mu(T_*), A(T_*)\big) > G\big(\mu(0), A(0)\big) + \kappa
	\end{equation}
	holds. Then the additively generated strategy $\varphi$ of Definition~\ref{def: AG} is strong arbitrage relative to the market over every time horizon $[0, t]$ with $T_* \leq t \leq T$.
\end{thm}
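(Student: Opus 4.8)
The plan is to verify directly the two defining requirements of strong relative arbitrage, \eqref{Con: strong arb2} and \eqref{Con: strong arb}, starting from the value representation $V^{\varphi}(\cdot) = G(\cdot, \mu, A)+\Gamma^{G}(\cdot)$ of \eqref{Eq: value of ATS} in Proposition~\ref{prop: additive generation} and combining the three hypotheses (i)--(iii) with the standing assumption \eqref{Con: additive arb2}. No appeal to the pathwise It\^o formula is needed beyond what is already built into \eqref{Eq: value of ATS}; this mirrors the proof of Theorem~\ref{thm: AGRA}.

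First I would dispose of the nonnegativity requirement \eqref{Con: strong arb2}. Condition (i) states precisely that $V^{\varphi}(t)=G(t,\mu,A)+\Gamma^{G}(t)\geq 0$ for all $t\in[0,T]$, so in particular $V^{\varphi}(s)\geq 0$ for every $s$ in any subinterval $[0,t]$ with $T_*\leq t\leq T$. Next, I would record the value at the origin: since $\Gamma^{G}(0)=0$ by the definition \eqref{Def: gamma}, we have $V^{\varphi}(0)=G(0,\mu,A)$.

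The heart of the argument is then the chain of estimates establishing \eqref{Con: strong arb} for every $t$ with $T_*\leq t\leq T$:
\[
V^{\varphi}(t) = G(t,\mu,A)+\Gamma^{G}(t) \;\geq\; F(t,\mu,A)+\kappa \;\geq\; F(T_*,\mu,A)+\kappa \;>\; \big(G(0,\mu,A)-\kappa\big)+\kappa = G(0,\mu,A) = V^{\varphi}(0),
\]
where the first inequality uses (ii) (namely $G\geq F$) together with (iii) ($\Gamma^{G}\geq\kappa$), the second uses that $t\mapsto F(t,\mu,A)$ is nondecreasing and $t\geq T_*$, and the strict inequality is exactly \eqref{Con: additive arb2}. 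Combining this with the previous paragraph shows that $\varphi$ is strong relative arbitrage with respect to the market over $[0,t]$ for each such $t$.

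I do not expect a genuine obstacle: the theorem is essentially a repackaging of the mechanism of Theorem~\ref{thm: AGRA} in which the nondecreasing structure needed to force growth of $V^{\varphi}$ is no longer carried by $\Gamma^{G}$ itself, but is extracted as a nondecreasing minorant $F$ of the baseline $G(\cdot,\mu,A)$, with the constant $\kappa$ absorbing whatever decrease $\Gamma^{G}$ may exhibit. The only point demanding a little care is to apply each hypothesis in its proper slot --- (i) for nonnegativity over the whole horizon, (iii) to pass from $G+\Gamma^{G}$ to $G+\kappa$, and the monotonicity in (ii) evaluated at the correct endpoint $T_*$.
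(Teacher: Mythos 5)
Your proof is correct and coincides with the paper's own argument: condition (i) gives nonnegativity of $V^{\varphi}$, and the chain $V^{\varphi}(t)=G(t,\mu,A)+\Gamma^G(t)\geq F(t,\mu,A)+\kappa\geq F(T_*,\mu,A)+\kappa>G(0,\mu,A)=V^{\varphi}(0)$ for $t\in[T_*,T]$ is exactly the one the authors use. Nothing is missing.
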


\medskip

\begin{proof}
	The inequality \eqref{Con: strong arb2} is satisfied by the first condition above. From the last two conditions with \eqref{Eq: value of ATS} and \eqref{Con: additive arb2}, we obtain also the inequality \eqref{Con: strong arb}, since $V^{\varphi}(t)=G\big(\mu(t), A(t)\big)+\Gamma^G(t) \geq F\big(\mu(t), A(t)\big) - \kappa \geq F\big(\mu(T_*), A(T_*)\big) - \kappa > G\big(\mu(0), A(0)\big)=V^{\varphi}(0)$, for every $t \in [T_*, T]$.
\end{proof}

\medskip

In Theorem~\ref{thm: AGRA2}, the function $F\big(\mu(\cdot), A(\cdot)\big)$ can be seen as the `extracted nondecreasing structure' of $G$. This result states that the generating function $G$ can lead to strong arbitrage relative to the market without necessarily being ``Lyapunov'', as in Theorem~5.1 of \cite{Karatzas:Ruf:2017}. There can be strong relative arbitrage even if $\Gamma^{G}(\cdot)$ is nonincreasing. This is intuitively plausible already on the basis of the representation \eqref{Eq: value of ATS} when $G\big(\mu(\cdot), A(\cdot)\big)$ grows faster than $\Gamma^{G}(\cdot)$ decays. Some applications of Theorem~\ref{thm: AGRA2} will appear in Section~\ref{sec: 6} (Example~\ref{Ex: entropy with MIN} and Example~\ref{Ex: entropy with LOGLOG}).

\bigskip

\subsection{Multiplicatively generated strong relative arbitrage}

In this subsection, in order to simplify the arguments, we assume that the regular function $G$ takes only nonnegative values and satisfies $G\big(\mu(0), A(0)\big)=1$. This normalization can be achieved by replacing $G$ by $G/G\big(\mu(0), A(0)\big)$ if $G\big(\mu(0), A(0)\big)>0$, or by $G+1$ if $G\big(\mu(0), A(0)\big)=0$. Though we shall not use in later sections the following result, which comes from Theorem 5.2 of \cite{Karatzas:Ruf:2017}, we state here for completeness.

\medskip

\begin{thm} [Multiplicatively generated strong relative arbitrage]
	\label{thm: MGRA}
	Let us fix a regular function $G: \mathbb{R}^d \times \mathbb{R}^m \rightarrow [0, \infty)$ for the pair $(\mu, A)$ with the market weights $\mu$ and some $m$-dimensional function $A \in CBV([0, T], \mathbb{R}^m)$.	For some real number $T_*>0$, suppose that there exists an $\epsilon=\epsilon(T_*)>0$ satisfying
	\begin{equation}	\label{Con: multiplicative arb}
		\Gamma^G(T_*)>1+\epsilon.
	\end{equation}
	Then, there exists a constant $c=c(T_*, \epsilon)>0$ such that the trading strategy $\psi^{(c)}=(\psi_1^{(c)}, \cdots, \psi_d^{(c)})'$, multiplicatively generated by the regular function 
	\begin{equation*}
		G^{(c)}:=\frac{G+c}{1+c}	
	\end{equation*}
	as in Definition~\ref{def: MG}, is strong arbitrage relative to the market over the time-horizon $[0, T_*]$; as well as over every time-horizon $[0, t]$ with $T_* \leq t \leq T$, if $t \mapsto \Gamma^G(t)$ is nondecreasing.
\end{thm}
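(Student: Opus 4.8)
The plan is to write the value process of $\psi^{(c)}$ in closed form, observe that it starts at $1$ and stays strictly positive, and then choose $c$ so large that its value at $T_*$ sits strictly above $1$. First I would record the effect of the transformation $G\mapsto G^{(c)}=(G+c)/(1+c)$. Since $c>0$ and $G\geq 0$, the functional $G^{(c)}$ is bounded below by $c/(1+c)>0$, so $1/G^{(c)}(\cdot,\mu,A)$ is bounded; moreover $G^{(c)}$ inherits membership in $\mathbb{C}^{1,2}[0,T]$ from $G$ (an affine change preserves the class), and because $c$ is a constant the horizontal and second vertical derivatives of $G^{(c)}$ equal $1/(1+c)$ times those of $G$. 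Hence \eqref{Eq: gamma} gives $\Gamma^{G^{(c)}}=\Gamma^G/(1+c)$, which has finite variation, so $G^{(c)}$ is regular for $(\mu,A)$ and Proposition~\ref{prop: multiplicative generation} applies. Substituting $G^{(c)}$ and $\Gamma^{G^{(c)}}$ into \eqref{Eq: value of MTS} yields
\begin{equation*}
V^{\psi^{(c)}}(t)=\frac{G(t,\mu,A)+c}{1+c}\exp\!\Big(\int_0^t\frac{\mathrm{d}\Gamma^G(s)}{G(s,\mu,A)+c}\Big),\qquad 0\leq t\leq T.
\end{equation*}
This is strictly positive, so condition \eqref{Con: strong arb2} is automatic, and $V^{\psi^{(c)}}(0)=1$ by the normalization $G(0,\mu,A)=1$; everything reduces to producing a $c$ with $V^{\psi^{(c)}}(T_*)>1$.

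Next I would pass to logarithms: $\log V^{\psi^{(c)}}(T_*)=\log\frac{G(T_*,\mu,A)+c}{1+c}+\int_0^{T_*}\frac{\mathrm{d}\Gamma^G(s)}{G(s,\mu,A)+c}$. Because $G(T_*,\mu,A)\geq 0$, the first summand is at least $\log\frac{c}{1+c}\geq-\frac1c$ (using $\log(1+x)\leq x$). For the second, I would study $c\int_0^{T_*}\frac{\mathrm{d}\Gamma^G(s)}{G(s,\mu,A)+c}=\int_0^{T_*}\frac{c}{G(s,\mu,A)+c}\,\mathrm{d}\Gamma^G(s)$: the integrand lies in $[0,1]$ and tends to $1$ as $c\to\infty$ at every point where $G(\cdot,\mu,A)$ is finite, so splitting $\mathrm{d}\Gamma^G$ into its positive and negative Jordan parts and invoking dominated convergence against the finite measure $|\mathrm{d}\Gamma^G|$ on $[0,T_*]$ gives $c\int_0^{T_*}\frac{\mathrm{d}\Gamma^G(s)}{G(s,\mu,A)+c}\to\Gamma^G(T_*)>1+\epsilon$ (recall $\Gamma^G(0)=0$). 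Hence for all sufficiently large $c=c(T_*,\epsilon)$ one has $\int_0^{T_*}\frac{\mathrm{d}\Gamma^G(s)}{G(s,\mu,A)+c}>\frac{1+\epsilon}{c}$, and then $\log V^{\psi^{(c)}}(T_*)>-\frac1c+\frac{1+\epsilon}{c}=\frac{\epsilon}{c}>0$, which is \eqref{Con: strong arb}. This settles strong relative arbitrage over $[0,T_*]$.

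For the extension to $[0,t]$ with $T_*\leq t\leq T$ when $\Gamma^G$ is nondecreasing, I would note that the integrand $1/(G(s,\mu,A)+c)$ is positive, so $\int_0^{t}\frac{\mathrm{d}\Gamma^G(s)}{G(s,\mu,A)+c}\geq\int_0^{T_*}\frac{\mathrm{d}\Gamma^G(s)}{G(s,\mu,A)+c}>\frac{1+\epsilon}{c}$; combined with the bound $\log\frac{G(t,\mu,A)+c}{1+c}\geq-\frac1c$, this gives $\log V^{\psi^{(c)}}(t)>\frac{\epsilon}{c}>0$, i.e.\ $V^{\psi^{(c)}}(t)>1=V^{\psi^{(c)}}(0)$, while $V^{\psi^{(c)}}(\cdot)>0$ persists on $[0,t]$.

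The hard part is the limit $c\uparrow\infty$ in $c\int_0^{T_*}\mathrm{d}\Gamma^G(s)/(G(s,\mu,A)+c)$. Since $\Gamma^G$ need not be monotone on $[0,T_*]$ one cannot integrate a monotone bound and must go through the Jordan decomposition, and — crucially — dominated convergence requires $G(\cdot,\mu,A)$, together with the derivatives of $G$ feeding into $\Gamma^G$ through \eqref{Eq: gamma}, to stay finite along the realized path. That is exactly what conditions (i) and (ii) are there to guarantee, and it is the extra ``boundedness'' price — automatic in the semimartingale setting of Theorem~5.2 of \cite{Karatzas:Ruf:2017} — that one pays for working pathwise.
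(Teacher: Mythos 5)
Your proposal is correct, and its skeleton coincides with the paper's (and with Theorem~5.2 of \cite{Karatzas:Ruf:2017}): perturb to $G^{(c)}=(G+c)/(1+c)$, note $\Gamma^{G^{(c)}}=\Gamma^G/(1+c)$, read off $V^{\psi^{(c)}}$ from \eqref{Eq: value of MTS}, and show the exponential factor beats the prefactor $\tfrac{G(T_*,\mu,A)+c}{1+c}\ge\tfrac{c}{1+c}$. The one step where you genuinely diverge is the lower bound on the exponent. The paper uses hypotheses (i)/(ii) to produce a \emph{uniform} upper bound $\kappa$ on $G(\cdot,\mu,A)$ --- under (ii) this comes from the boundedness-preserving property in Definition~\ref{def: high order derivative} --- and then estimates $\int_0^{T_*}\frac{d\Gamma^G(t)}{G(t,\mu,A)+c}\ge\frac{\Gamma^G(T_*)}{\kappa+c}>\frac{1+\epsilon}{\kappa+c}$, which yields an \emph{explicit} admissible constant (essentially any $c>\kappa/\epsilon$). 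You instead let $c\uparrow\infty$ and apply dominated convergence against the finite total-variation measure $|d\Gamma^G|$ to get $c\int_0^{T_*}\frac{d\Gamma^G(s)}{G(s,\mu,A)+c}\to\Gamma^G(T_*)>1+\epsilon$. This is sound: the integrand $c/(G+c)$ lies in $[0,1]$ because $G\ge 0$, and it converges to $1$ pointwise simply because $G$ is real-valued. Your route buys two things the paper's displayed estimate does not: it needs neither a uniform bound on $G$ (so conditions (i)/(ii) play no essential role in your argument) nor the positivity of $d\Gamma^G$ on $[0,T_*]$, which the paper's inequality $\int_0^{T_*}\frac{d\Gamma^G}{G+c}\ge\frac{\Gamma^G(T_*)}{\kappa+c}$ tacitly uses; the price is that your $c(T_*,\epsilon)$ is non-constructive. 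One small correction to your closing paragraph: (i)/(ii) are not needed to keep $G$ finite along the path (that is automatic, since $G$ maps into $[0,\infty)$ and $\Gamma^G$ has finite variation by regularity); their purpose in the paper is precisely to supply the constant $\kappa$ that makes the choice of $c$ explicit.
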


\medskip

In Theorem~\ref{thm: MGRA}, we had to construct the ``shifted'' function $G^{(c)}$, which was useful but also rather extraneous. However, for functions $G$ which are bounded from below and above, we have the following novel condition leading to multiplicatively generated strong arbitrage.

\medskip

\begin{thm} [Multiplicatively generated strong relative arbitrage when $\Gamma^G$ is nondecreasing]
	\label{thm: MGRA2}
	Let us fix a function $G: \mathbb{R}^d \times \mathbb{R}^m \rightarrow (0, \infty)$ which is regular for the pair $(\mu, A)$ with the market weights $\mu$ and some $m$-dimensional function $A \in CBV([0, T], \mathbb{R}^m)$, and satisfies the following conditions:
	\begin{enumerate}[(i)]
		\item $G$ is bounded away from zero and infinity, i.e., there exist positive constants $\alpha, \beta$ such that $0 < \alpha \leq G\big(\mu(\cdot), A(\cdot)\big) \leq \beta$;
		\item $\Gamma^G$ is nondecreasing.
	\end{enumerate}
	For some real number $T_*>0$, suppose that
	\begin{equation}	\label{Con: multiplicative arb2}
		\Gamma^G(T_*) > \beta \log \Big(\frac{1}{\alpha}\Big)
	\end{equation}
	holds. Then, the multiplicatively generated strategy $\psi$ of Definition~\ref{def: MG} is strong arbitrage relative to the market over every time-horizon $[0, t]$ with $T_* \leq t \leq T$.
\end{thm}

\medskip

\begin{proof}
	First, we note that $V^{\psi}(\cdot) > 0$ from \eqref{Eq: value of MTS}. Then, we take the logarithm to the both sides of \eqref{Eq: value of MTS} to obtain
	\begin{align*}
		\log V^{\psi}(t) 
		&= \log G\big(\mu(t), A(t)\big) + \int_0^t \frac{d\Gamma^G(s)}{G\big(\mu(s), A(s)\big)}
		\\
		& \geq \log \alpha + \frac{1}{\beta}\Gamma^G(t)
		\geq \log \alpha + \frac{1}{\beta}\Gamma^G(T_*)
		\\
		& > 0 = \log G\big(\mu(0), A(0)\big) = \log V^{\psi}(0),
	\end{align*}
	for all $T_* \leq t \leq T$, due to the conditions \textit{(i), (ii)} and \eqref{Con: multiplicative arb2} above, and the result follows. Here $G\big(\mu(0), A(0)\big) = 1$, because of the normalization on $G$ imposed at the beginning of this subsection.
\end{proof}

\medskip

\begin{rem}
	Since the market weights $\mu_i$, $i=1, \cdots, d$ and the continuous function $A$ are bounded on the compact interval $[0, T]$, a regular function $G$ depending on the pair $(\mu, A)$ is also bounded. Thus, the condition \textit{(i)} in Theorem~\ref{thm: MGRA2} just requires that the lower bound $\alpha$ should be strictly bigger than $0$. Also, in \eqref{Con: multiplicative arb2}, finding tighter bounds $\alpha$, $\beta$ of $G$ yields smaller $T_*$ satisfying the arbitrage condition \eqref{Con: multiplicative arb2}. See Remark~\ref{Re: Reducing threshold in entropy example} for further discussion regarding the bounds on $G$ in the case of specific entropy function.
\end{rem}

\medskip

The conditions of Theorem~\ref{thm: MGRA2} resemble those of Theorem~\ref{thm: AGRA}. We also have the following formulation, which is similar to Theorem~\ref{thm: AGRA2}.
 
\medskip

\begin{thm} [Multiplicatively generated strong relative arbitrage when $\Gamma^G$ is nonincreasing]
	\label{thm: MGRA3}
	Fix a regular function $G: \mathbb{R}^d \times \mathbb{R}^m \rightarrow (0, \infty)$ for the pair $(\mu, A)$, where $\mu$ is the vector of market weights and $A$ an $m$-dimensional function in $CBV([0, T], \mathbb{R}^m)$, such that the following conditions hold:
	\begin{enumerate}[(i)]
		\item there exists a function $F\big(\mu(t), A(t)\big)$ satisfying $G\big(\mu(t), A(t)\big) \geq F\big(\mu(t), A(t)\big) > 0$ for all $t \in [0, T]$, and the mapping $t \mapsto F\big(\mu(t), A(t)\big)$ is nondecreasing;
		\item $\Gamma^G(\cdot)$ is nonincreasing and $\Gamma^{G}(\cdot) \geq -\kappa$ holds by some positive constant $\kappa$.
	\end{enumerate}
	For some real number $T_*>0$, suppose that
	\begin{equation}	\label{Con: multiplicative arb3}
		\log F\big(\mu(T_*), A(T_*)\big) > \frac{\kappa}{F\big(\mu(0), A(0)\big)}
	\end{equation}
	holds. Then the multiplicatively generated strategy $\psi$ of Definition~\ref{def: MG} is strong arbitrage relative to the market over every time horizon $[0, t]$ with $T_* \leq t \leq T$.
\end{thm}

\begin{proof}
	First, note that $\Gamma^G(\cdot)$ is nonpositive, because of the condition \textit{(ii)} and the fact $\Gamma^G(0)=0$.
	Again, from \eqref{Eq: value of MTS}, we have
	\begin{align*}
		\log V^{\psi}(t) 
		&= \log G\big(\mu(t), A(t)\big) + \int_0^t \frac{d\Gamma^G(s)}{G\big(\mu(s), A(s)\big)}
		\geq \log F\big(\mu(t), A(t)\big) - \max_{0 \leq s \leq t} \Big(\frac{\kappa}{G\big(\mu(s), A(s)\big)}\Big)
		\\
		&= \log F\big(\mu(t), A(t)\big) - \frac{\kappa}{\min_{0 \leq s \leq t} G\big(\mu(s), A(s)\big)}
		\geq \log F\big(\mu(t), A(t)\big) - \frac{\kappa}{\min_{0 \leq s \leq t} F\big(\mu(s), A(s)\big)}
		\\
		& \geq \log F\big(\mu(T_*), A(T_*)\big) - \frac{\kappa}{F\big(\mu(0), A(0)\big)}
		> 0 = \log G\big(\mu(0), A(0)\big) = \log V^{\psi}(0),
	\end{align*}
	for all $T_* \leq t \leq T$, by virtue of the conditions \textit{(i), (ii)} and \eqref{Con: multiplicative arb3}.
\end{proof}

\medskip

The following example provides a condition for strong relative arbitrage more general than Example 5.5 of \cite{Karatzas:Ruf:2017}, by involving an additional function $A$ into the generating function $G$. We specifically use $A = \mu^* = (\mu^*_1, \cdots, \mu^*_d)$, the vector consisting of the running maxima of the market weights 
\begin{equation*}
	A_i(t) \equiv \mu^*_i(t) := \max_{0 \leq s \leq t}\mu_i(s), \qquad i=1, \cdots, d.
\end{equation*}

\bigskip

\begin{example} [Quadratic function]
	For fixed constant $c \in \mathbb{R}$ and $p>0$, consider the following function
	\begin{align*}
		G^{(c, p)}\big(\mu(t), \mu^*(t)\big) &:= c - \sum_{i=1}^{d} \big(\mu_i(t)\big)^2 - p \sum_{i=1}^{d} \mu_i(t)\mu^*_i(t)
		\\
		& \ =  c - \sum_{i=1}^{d} \big(\mu_i(t)\big)^2 - p \sum_{i=1}^{d} \mu_i(t)\big\{\max_{0 \leq s \leq t}\mu_i(s)\big\}.
	\end{align*}
	This is the same as $Q^{(c)}$ in Example 5.5 of \cite{Karatzas:Ruf:2017} except for the last term. Note that $G^{(c, p)}$ takes values in the interval $\big[c-(1+p), \ c-\frac{1}{d}(1+p)\big]$. After some straightforward computation of partial derivatives, we have
	\begin{align*}
		D_iG^{(c, p)}\big(\mu(t), \mu^*(t)\big) &= -p\mu_i(t),
		\\
		\partial_iG^{(c, p)}\big(\mu(t), \mu^*(t)\big) &= -2\mu_i(t)-p\mu^*_i(t),
		\\
		\partial^2_{i, i}G^{(c, p)}\big(\mu(t), \mu^*(t)\big) &= -2,
	\end{align*}
	for $i=1, \cdots, d$, and using these expressions along with \eqref{Eq: gamma}, we obtain
	\begin{equation*}
		\Gamma^{G^{(c, p)}}(t) = \sum_{i=1}^{d}\int_{0}^{t}p\mu_i(s)d\mu^*_i(s) + \sum_{i=1}^{d}\langle \mu_i \rangle(t).
	\end{equation*}
	As $\mu^*_i(\cdot)$ is nondecreasing and $p\mu_i(\cdot) \geq 0$, the integral term is always non-negative and nondecreasing in $t$, which makes $\Gamma^{G^{(c, p)}}(\cdot)$ nondecreasing and non-negative. Also, using the property that the nondecreasing process $\mu^*_i(\cdot)$ is flat off the set $\{s \geq 0 : \mu_i(s)=\mu^*_i(s)\}$, we have
	\begin{equation*}
		\int_{0}^{t}\mu_i(s)d\mu^*_i(s) = \int_{0}^{t}\mu^*_i(s)d\mu^*_i(s) = \frac{1}{2}\big\{\big(\mu^*_i(t)\big)^2 - \big(\mu^*_i(0)\big)^2\big\},
	\end{equation*}
	thus also
	\begin{equation*}
		\Gamma^{G^{(c, p)}}(t) = \frac{p}{2}\sum_{i=1}^{d}\big\{\big(\mu^*_i(t)\big)^2 - \big(\mu^*_i(0)\big)^2\big\} + \sum_{i=1}^{d}\langle \mu_i \rangle(t).
	\end{equation*}
	Since $G^{(1+p, p)} \geq 0$, let us consider the case $c=1+p$ from now on.	Using the same argument as in the proof of Theorem~\ref{thm: AGRA}, the condition
	\begin{equation} 	\label{Ex: running maxima arb con}
		\frac{p}{2}\sum_{i=1}^{d}\big\{\big(\mu^*_i(T)\big)^2 - \big(\mu^*_i(0)\big)^2\big\} + \sum_{i=1}^{d}\langle \mu_i \rangle(T) > G^{(1+p, p)}\big(\mu(0), \mu^*(0)\big),
	\end{equation}
	where
	\begin{equation*}
		G^{(1+p, p)}\big(\mu(0), \mu^*(0)\big) = (1+p)\Big\{ 1-\sum_{i=1}^{d} \big(\mu_i(0)\big)^2 \Big\} > 0,
	\end{equation*}
	yields a strategy which is strong relative arbitrage with respect to the market on $[0, T]$. If we compare the condition \eqref{Ex: running maxima arb con} with the condition
	\begin{equation}	\label{Ex: running maxima arb con2}
		\sum_{i=1}^{d}\langle \mu_i \rangle(T) > 1-\sum_{i=1}^{d} \big(\mu_i(0)\big)^2,
	\end{equation}
	that is, (5.4) of Example 5.5 in \cite{Karatzas:Ruf:2017}, there is a trade-off between the left- and the right-hand sides. The presence of the extra nondecreasing term $(p/2)\sum_{i=1}^{d}\big\{\big(\mu^*_i(T)\big)^2 - \big(\mu^*_i(0)\big)^2\big\}$ in \eqref{Ex: running maxima arb con}, guarantees that its left-hand side grows faster than the left-hand side of \eqref{Ex: running maxima arb con2}, as $T$ increases; but we also have a bigger constant on the right-hand side of \eqref{Ex: running maxima arb con}, namely,
	\begin{equation*}
		(1+p)\Big\{ 1-\sum_{i=1}^{d} \big(\mu_i(0)\big)^2 \Big\}
		>
		1-\sum_{i=1}^{d} \big(\mu_i(0)\big)^2.
	\end{equation*}
	Thus, by choosing the value of $p$ wisely, we can obtain bounds for the times $T$ for which there is strong relative arbitrage with respect to the market over $[0, T]$, better than those of Example 5.5 in \cite{Karatzas:Ruf:2017}.
\end{example}

\medskip

More interesting applications of Theorems~\ref{thm: AGRA}, \ref{thm: AGRA2}, \ref{thm: MGRA2}, and \ref{thm: MGRA3} appear in Section~\ref{sec: 6}.

\bigskip

\bigskip

\bigskip

\section{Tanaka's formula for constructing trading strategies}
 \label{sec: 5}

In the previous sections, we used the pathwise It\^o formula (Theorem~\ref{Thm : Ito formula}), instead of the usual It\^o formula for semimartingales, to construct trading strategies in a pathwise manner. In this section, we generalize Stochastic Portfolio Theory~(SPT) in a different direction: we apply the pathwise Tanaka formula (Generalized It\^o's formula) with appropriately defined local time, in building up trading strategies. The It{\^o} formula requires the existence of a second derivative, whereas the Tanaka formula is applicable to `weakly differentiable' functions; this broadens the class of functions from which we generate trading strategies. First, we develop some definitions and notation, and introduce the pathwise Tanaka formula. Then, we construct trading strategies generated additively and multiplicatively, and in a manner similar to that of Section~\ref{sec: 3}, but from generating functions less smooth than those used there. Finally, relevant strong relative arbitrage conditions and some examples follow.

\smallskip

\subsection{Pathwise local time and Tanaka formula}
We fix a refining sequence $(\mathbb{T}_n)_{n \in \mathbb{N}}$ of partitions of the interval $[0, T]$, whose mesh size goes to zero as $n \rightarrow 0$, as in the introduction of Section~\ref{sec: 2}. We also consider an $\mathbb{R}$-valued continuous function $X$ defined on the compact interval $[0, T]$, thought of here as representing a value of an individual asset which fluctuates over time. With these ingredients, we present the measure-theoretic notion of quadratic variation of $X$ along $\mathbb{T}=(\mathbb{T}_n)_{n \in \mathbb{N}}$.

\medskip

\begin{defn}	\label{Def: QV}
	A continuous function $X\in C([0, T], \mathbb{R})$ is said to have finite \textit{quadratic variation along a given sequence of partitions $\mathbb{T}=(\mathbb{T}_n)_{n \in \mathbb{N}}$} of $[0,T]$, if the mesh size
	\begin{equation}		\label{def: mesh}
		||\mathbb{T}_n|| := \max_{[t_j, t_{j+1}] \in \mathbb{T}_n} |t_{j+1}-t_j|,
	\end{equation}
	goes to zero and the sequence of measures 
	\begin{equation*}
		\mu^n := \sum_{[t_j, t_{j+1}] \in \pi_n} \big|X(t_{j+1}) - X(t_j)\big|^2 \cdot \delta_{t_j}
	\end{equation*}
	converges vaguely to a locally finite measure $\mu$ without atoms as $n \rightarrow \infty$, where $\delta_t$ denotes the Dirac measure at $t \in [0, T]$. We write $Q(\mathbb{T})$ for the collection of all continuous functions having quadratic variation along $\mathbb{T}$. We call $\langle X\rangle(t) :=\mu([0, t])$ for $t \in [0, T]$, the quadratic variation of $X$.
\end{defn}

\medskip

For a sequence of measures $(\mu^n)_{n \in \mathbb{N}}$ on $[0, T]$, vague convergence is equivalent to the pointwise convergence of their cumulative distribution functions at all continuity points of the limiting function. If the limiting distribution function is continuous, the convergence is uniform. Thus we are led to the following result.

\medskip

\begin{lem}
	\label{Lem: pointwise convergence}
	Let $X$ be a function in $C([0, T], \mathbb{R})$. The function $X$ belongs to $Q(\mathbb{T})$ if, and only if, there exists a continuous function $\langle X\rangle$ such that for every $t \in [0, T]$,
	\begin{equation} \label{Eq: pointwise convergence}
		\sum_{\substack{[t_j, t_{j+1}] \in \mathbb{T}_n \\ t_j \leq t}} | X(t_{j+1}) - X(t_j)|^2 \xrightarrow{n \rightarrow \infty} \langle X\rangle(t).
	\end{equation}
	If this property holds, the convergence in \eqref{Eq: pointwise convergence} is uniform.
\end{lem}

\medskip

From this Lemma, the quadratic variation $\langle X \rangle$ of $X$ in Definition~\ref{Def: QV} coincides with that of $X$ in Definition~\ref{Def: quadratic cov}. However, there is a notion of quadratic `covariation' $\langle X_i, X_j \rangle (\cdot)$ between different components $X_i$, $X_j$ of a $d$-dimensional vector $X$ in \eqref{Def: quadratic}, whereas $\langle X\rangle(\cdot)$ in Definition~\ref{Def: QV} is defined in terms of the individual function $X$.

\bigskip

\begin{rem}
	The assumption in Definition~\ref{Def: QV} that the mesh size in \eqref{def: mesh} goes to zero as $n \rightarrow \infty$, imposed on the sequence $(\mathbb{T}_n)_{n \in \mathbb{N}}$ of partitions, is actually stronger than the usual assumption on the sequence of partitions in other works involving the pathwise local time. For example, in \cite{PerkowskiPromel2}, \cite{Davis2018}, \cite{Cont_Perkowski}, the authors define the `oscillation' of the function $X$ along the partition $\mathbb{T}_n$ as
	\begin{equation}	\label{Def: oscillation}
		osc(X, \mathbb{T}_n) := \max_{[t_j, t_{j+1}] \in \mathbb{T}_n} \max_{r, s \in [t_j, t_{j+1}]} |X(s) - X(r)|,
	\end{equation}
	and require $osc(X, \mathbb{T}_n) \rightarrow 0$ as $n \rightarrow \infty$ instead of the mesh size going to zero. This is because it is enough to work with Lebesgue partitions generated by $X$ when defining the pathwise local time and deriving the pathwise Tanaka formula. Since the function $X$ is uniformly continuous on the compact interval $[0, T]$, the decrease to zero of the mesh size does imply that the oscillation of $X$ also shrinks to zero.
	
	\smallskip
	
	One reason for the stronger condition on $(\mathbb{T}_n)_{n \in \mathbb{N}}$ used here, is to follow our original definition of pathwise quadratic covariation/variation in Definition~\ref{Def: quadratic cov}. Another reason is that we are going to involve an additional (vector of) continuous function $A$ when generating trading strategies, and the oscillation of $A$ also has to shrink to zero along the sequence of partitions $(\mathbb{T}_n)_{n \in \mathbb{N}}$. In other words, by using the `mesh' assumption instead of the `oscillation', we can get rid of such `dependence' of the sequence of partitions $(\mathbb{T}_n)_{n \in \mathbb{N}}$ on both $X$ and $A$.
\end{rem}

\bigskip

The very first definition of the pathwise local time was introduced in the unpublished diploma thesis of \cite{Wuermli}. This original local time is called ``$\mathbb{L}^2$-local time'' of a path $X$ along a sequence of partitions $\mathbb{T} = (\mathbb{T}_n)_{n\in \mathbb{N}}$. Using this notion of local time, Wuermli showed the following equation \eqref{Eq : tanaka formula} for $f \in H^2(\mathbb{R}, \mathbb{R})$, where $H^2(\mathbb{R}, \mathbb{R})$ is the Sobolev space of functions two times weakly differentiable in $\mathbb{L}^2(\mathbb{R}, \mathbb{R})$. Since then, many versions of pathwise Tanaka formulas~(generalized It\^o formulas) and different definitions of local times have been introduced and studied; these vary according to the regularity of the path $X$, the function $f$, and the notion of ``convergence for local time''. Weaker convergence in defining a local time requires more regularity on the part of the function $f$, for the Tanaka formula \eqref{Eq : tanaka formula} to hold. Some of these versions are stated in Section~2 of \cite{PerkowskiPromel2} for continuous paths with quadratic variation. Similar results for rougher paths (with finite $p$-th variation, $p>2$) can be found in Section~3 of \cite{Cont_Perkowski}. Among these, we present here the following version of local time and Tanaka's formula, which we consider most appropriate in our setting.

\medskip

With the notation
\begin{equation}		\label{Def: parenthesis}
	\llparenthesis a, b \rrbracket = \begin{cases}
	(a, b], \qquad a \leq b,
	\\
	(b, a], \qquad b \leq a,
\end{cases}
\end{equation}

\medskip

we have the following definition of continuous local time.

\medskip

\begin{defn} [Continuous local time]
	\label{Def : local time}
	We say that the continuous function $X : [0, T] \rightarrow \mathbb{R}$ has \textit{a continuous local time along the given nested sequence of partitions $\mathbb{T} = (\mathbb{T}_n)_{n\in \mathbb{N}}$}, if $\lim_{n \rightarrow \infty} ||\mathbb{T}_n|| = 0$, the `discrete local times'
	\begin{equation}		\label{Def: discrete local time}
		x \mapsto L^{X, \mathbb{T}_n}_t(x) := \sum_{\substack{[t_j, t_{j+1}] \in \mathbb{T}_n \\ t_j \leq t}} \mathbbm{1}_{\llparenthesis X_{t_j}, X_{t_{j+1}} \rrbracket} (x) \big| X_{t_{j+1}} - x \big|
	\end{equation}
	converge uniformly to a continuous limit $x \mapsto L_t^{X, \mathbb{T}}(x)$ as $n \rightarrow \infty$ for every fixed $t \in [0, T]$, and the resulting mapping $(t, x) \mapsto L_t^{X, \mathbb{T}}(x)$ is jointly continuous. We call this limit \textit{continuous local time of $X$ along $\mathbb{T}$}, and write $\pazocal{L}^c(\mathbb{T})$ for the collection of all functions $X$ in $C([0, T], \mathbb{R})$ having a continuous local time along the given nested sequence of partitions $\mathbb{T} = (\mathbb{T}_n)_{n \in \mathbb{N}}$.
\end{defn}

\medskip

The existence of continuous local time for typical price paths is shown in Theorem~3.5 of \cite{PerkowskiPromel2}. In order to simplify notation, we shall write $L_t^X(x)$ or simply $L_t(x)$, whenever the context is unambiguous. With the definition of continuous local time, we state the following version of the pathwise Tanaka formula. The proof is given in the Appendix.

\medskip

\begin{thm} [Pathwise Tanaka formula for paths with finite quadratic variation]
	\label{thm: tanaka formula}
	Let $X \in \pazocal{L}^c(\mathbb{T})$ and $f : \mathbb{R} \rightarrow \mathbb{R}$ be absolutely continuous with right-continuous Radon-Nikod{\'y}m derivative $f'$ of finite variation.	Then, the one-dimensional F{\"o}llmer-It\^o integral
	\begin{equation}	\label{Eq: compensated Riemann sums}
		\int_0^t f'\big(X(s)\big)dX(s) := \lim_{n \rightarrow \infty} \sum_{\substack{[t_j, t_{j+1}] \in \mathbb{T}_n \\ t_j \leq t}} f'\big(X(t_j)\big)\big(X(t_{j+1})-X(t_j)\big)
	\end{equation}
	exists, and we have the generalized change of variable formula
	\begin{equation} \label{Eq : tanaka formula}
		f\big(X(t)\big) - f\big(X(0)\big) = \int_0^t f'\big(X(s)\big)dX(s) + \int_{\mathbb{R}} L_t(x) df'(x), \qquad 0 \leq t \leq T.
	\end{equation}	
\end{thm}

\bigskip

If $f$ belongs to the space $C^2(\mathbb{R}, \mathbb{R})$, we obtain from this
\begin{equation*}
	\frac{1}{2}\int_0^t f''\big(X(s)\big)d\langle X \rangle(s) = \int_{\mathbb{R}} L_t(x)f''(x) dx
\end{equation*}
by comparing the last terms of \eqref{Eq : Ito formula} and \eqref{Eq : tanaka formula}. Furthermore, by setting $g(\cdot) = f''(\cdot)$ for any continuous function $g$ and by the fact that the indicator function $\mathbbm{1}_A(\cdot)$ for any Borel set $A \in \mathcal{B}(\mathbb{R})$ can be approximated by continuous functions, we also have the ``occupation density formula''
\begin{equation*}
	\frac{1}{2}\int_0^t \mathbbm{1}_A\big( X(s) \big) d[X](s)
	= \int_{A} L_t(x) dx.
\end{equation*}

\smallskip

We state now pathwise versions of classical Tanaka-Meyer formulas as a corollary to Theorem~\ref{thm: tanaka formula}.

\medskip

\begin{cor}	\label{Cor : Tanaka-Meyer}
	For a function $X \in \pazocal{L}^{c}(\mathbb{T})$, the pathwise Tanaka-Meyer formulas
	\begin{equation}	\label{Eq : tanaka-meyer1}
		L_t(a) = (X_t-a)^+ - (X_0-a)^+ - \int_0^t \mathbbm{1}_{(a, \infty)}(X_s) dX_s,
	\end{equation}
	\begin{equation}	\label{Eq : tanaka-meyer2}
		L_t(a) = (X_t-a)^- - (X_0-a)^- + \int_0^t \mathbbm{1}_{(-\infty, a)}(X_s) dX_s
	\end{equation}
	and
	\begin{equation}	\label{Eq : tanaka-meyer3}
		2L_t(a) = |X_t-a| - |X_0-a| - \int_0^t {\rm{sign}} (X_s-a) dX_s
	\end{equation}
	hold for all $(t, a) \in [0, T] \times \mathbb{R}$ with the notation ${\rm{sign}}(x) = \begin{cases} ~~~1, ~~~ \text{if}~ x > 0, \\ -1, ~~~ \text{if}~ x \leq 0. \end{cases}$ Here, the integral terms represent pointwise limits as \eqref{Eq: compensated Riemann sums}.
\end{cor}

\bigskip

\subsection{Construction of additively generated trading strategies}
Now we recall the pair $(\mu, A)$ in Section~\ref{sec: 3}, where the vector $\mu = (\mu_1, \cdots \mu_d)'$ represents market weights defined in \eqref{Def: market weights}, and $A = (A_1, \cdots, A_d)$ is an auxiliary function in $CBV([0, T], \mathbb{R}^d)$. We assume that $\mu$ and $A$ have the same dimension $d$. Also, in this section, we assume that each component $\mu_i$ is a continuous function with finite quadratic variation $\langle \mu_i \rangle$ in the sense of Definition~\ref{Def: QV}, and belonging to $\pazocal{L}^c(\mathbb{T})$, i.e., admitting a continuous local time, for every $i = 1, \cdots, d$. Then, we set 
\begin{equation}	\label{def: X}
	X_i := \mu_i - A_i, \qquad i = 1, \cdots, d,
\end{equation}
and assume that each $X_i$ is also in $\pazocal{L}^c(\mathbb{T})$. For any functions $f_i$, $i=1, \cdots, d,$ satisfying the conditions in Theorem~\ref{thm: tanaka formula}, we define the generating function $G$ for the pair $(\mu, A)$ as
\begin{equation}	\label{def: generating function}
	G\big(\mu(t), A(t)\big) := \sum_{i=1}^d f_i(X_i(t)) = \sum_{i=1}^d f_i\big(\mu_i(t)-A_i(t)\big), \qquad 0 \leq t \leq T.
\end{equation}
We can only consider such generating function $G$ of the form \eqref{def: generating function}, because there is no `multidimensional Tanaka formula' that can be applied to $G$ directly. However, we can apply instead Theorem~\ref{thm: tanaka formula} to each components $f_i(X_i(t))$, and sum up to obtain
\begin{align}
	G\big(\mu(t), A(t)\big) &= \sum_{i=1}^d \bigg\{ f_i(X_i(0)) + \int_0^t f'_i\big(X_i(s)\big)dX_i(s) + \int_{\mathbb{R}} L_t^{X_i}(x)df'_i(x) \bigg\}		\label{Eq: decomp of G}
	\\
	&=G\big(\mu(0), A(0)\big) + \sum_{i=1}^d \int_0^t \partial_iG\big(\mu(s), A(s)\big)dX_i(s) + \sum_{i=1}^d\int_{\mathbb{R}} L_t^{(\mu_i-A_i)}(x)df'_i(x),		\nonumber
\end{align}
where we denote
\begin{equation}	\label{def: vartheta}
	\vartheta_i(t) := \partial_i G\big(\mu(t), A(t)\big) := f'_i\big(X_i(t)\big), \qquad i = 1, \cdots, d, \qquad 0 \leq t \leq T,
\end{equation}
as in \eqref{Def: vartheta2}. Here, we recall from Theorem~\ref{thm: tanaka formula} that each $f'_i$ is the RCLL derivative of $f_i$, and a function of bounded variation. 
Furthermore, the F{\"o}llmer-It\^o integral in \eqref{Eq: decomp of G}, defined via the recipe \eqref{Eq: compensated Riemann sums}, can be decomposed as
\begin{align}
	\int_0^t f'_i\big(X_i(s)\big)dX_i(s) 
	&= \lim_{n \rightarrow \infty} \sum_{\substack{[t_j, t_{j+1}] \in \mathbb{T}_n \\ t_j \leq t}} f'_i\big(X_i(t_j)\big)\big(X_i(t_{j+1})-X_i(t_j)\big)			\label{Eq: follmer}
	\\
	&=\lim_{n \rightarrow \infty} \sum_{\substack{[t_j, t_{j+1}] \in \mathbb{T}_n \\ t_j \leq t}} f'_i\big(X_i(t_j)\big)\big(\mu_i(t_{j+1})-\mu_i(t_j)\big)		\label{Eq: follmer_mu}
	\\
	&~~~-\lim_{n \rightarrow \infty} \sum_{\substack{[t_j, t_{j+1}] \in \mathbb{T}_n \\ t_j \leq t}} f'_i\big(X_i(t_j)\big)\big(A_i(t_{j+1})-A_i(t_j)\big),	\label{Eq: follmer_A}
\end{align}
because the last limit \eqref{Eq: follmer_A} exists as $A \in CBV([0, T], \mathbb{R}^d)$. Thus, the limit \eqref{Eq: follmer_mu} also exists, and we denote the two limits \eqref{Eq: follmer_mu} and \eqref{Eq: follmer_A} as $\int_0^t f'_i\big(X_i(s)\big)d\mu_i(s)$, $\int_0^t f'_i\big(X_i(s)\big)dA_i(s)$, respectively. For the generating function $G$ in \eqref{def: generating function}, we define the Gamma function $\Gamma^G$ as in \eqref{Def: gamma}, namely
\begin{align}
	\Gamma^G(t) &:= G\big(\mu(0), A(0)\big) - G\big(\mu(t), A(t)\big) + \sum_{i=1}^d \int_0^t \vartheta_i(s) d\mu_i(s)	\nonumber
	\\
	&~=\sum_{i=1}^d \int_0^t \vartheta_i(s)dA_i(s) - \sum_{i=1}^d \int_{\mathbb{R}} L_t^{(\mu_i-A_i)}(x)df'(x), \qquad 0 \leq t \leq T.		\label{Def: Gamma}
\end{align}

\smallskip

The last equation is from \eqref{Eq: decomp of G}, and we note that $\Gamma^G(\cdot)$ is of bounded variation again. We proceed now in the manner of \eqref{Def: varphi}-\eqref{Def: defect of B}, to construct the additively generated trading strategy.

\medskip

\begin{defn} [Additive generation]		\label{Def : ag}
	We say that the trading strategy $\varphi = (\varphi_1, \cdots, \varphi_d)'$ defined as
	\begin{equation}	\label{def: varphi}
		\varphi_i(t) := \vartheta_i(t) - Q^{\vartheta}(t) - C(0), \quad \qquad i=1, \cdots, d, \qquad 0 \leq t \leq T,
	\end{equation}
	is \textit{additively generated} by the function $G$ in \eqref{def: generating function}. Here,
	\begin{equation}		\label{Def: dsf}
		Q^{\vartheta}(t):=\sum_{i=1}^d \big\{\vartheta_i(t)\mu_i(t)- \vartheta_i(0)\mu_i(0) \big\}-\int_{0}^{t} \sum_{i=1}^{d} \vartheta_i(s)d\mu_i(s)
	\end{equation}
	is the defect of self-financibility at time $t \in [0, T]$ as defined in \eqref{Def: defect of SF}, and
	\begin{equation*}
		C(0) := \sum_{i=1}^d \vartheta_i(0)\mu_i(0) - G\big(\mu(0), A(0)\big)
	\end{equation*}
	the defect of balance at $t=0$.
\end{defn}

\medskip

We also have the following result, which can be proved by analogy with Proposition~\ref{prop: additive generation}. Note that the Gamma function $\Gamma^G(\cdot)$ below takes the form of \eqref{Def: Gamma}, not of \eqref{Eq: gamma}.

\begin{prop}
	\label{prop: additive generation2}
	The trading strategy $\varphi$, generated additively as in \eqref{def: varphi} by the function $G$ of \eqref{def: generating function} for the pair $(\mu, A)$, has value
	\begin{equation}	\label{Eq: value of ATS2}
		V^{\varphi}(t):= \sum_{i=1}^d \varphi_i(t)\mu_i(t)=G\big(\mu(t), A(t)\big)+\Gamma^G(t), \quad 0 \leq t \leq T,
	\end{equation}
	with $\Gamma^G(\cdot)$ defined as in \eqref{Def: Gamma}, and its components can be represented in the equivalent form
	\begin{align}
	\varphi_i(t)&=\vartheta_i(t)+\Gamma^G(t)+G\big(\mu(t), A(t)\big)-\sum_{j=1}^{d}\vartheta_j(t)\mu_j(t) 	\label{Eq: varphi2}
	\\
	&=\vartheta_i(t)+V^{\varphi}(t)-\sum_{j=1}^{d}\vartheta_j(t)\mu_j(t), \qquad \text{for} \quad i=1, \cdots, d. \nonumber
	\end{align}
\end{prop}

\medskip

The sufficient conditions for strong relative arbitrage effected by additively generated trading strategies, presented in Section~\ref{sec : 4.1}, can be applied in a similar manner to the strategies $\varphi$ of Definition~\ref{Def : ag}.

\medskip

Concave functions such as $x \mapsto -x^2$ and $x \mapsto -x\log x$, when used to generate trading strategies, produce nondecreasing Gamma functions $\Gamma^G$ as in \eqref{Def: gamma}; this is because these functions have negative semidefinite Hessians $\partial^2 G = (\partial_{i, k}G)_{1 \leq i, k \leq d}$, which play the role of the integrand of the last integral in \eqref{Eq: gamma}. Such concavity is known to lead to ``diversity-weighted'' investment strategies, as explained in Definition~3.4.1 of \cite{Fe}. However, these concave functions had to be in $C^2$ to apply It\^o's rule. Now, we can use concave but not differentiable functions, while still being able to generate portfolio with the help of Tanaka formula. Typical examples are $x \mapsto -x^+ := -\max(x, 0)$ and $x \mapsto -x^- := -\min(x, 0)$.

\bigskip

\begin{example} [\textit{On the ``size effect"}]	\label{Ex: alpha}
	Consider a constant $\alpha \in (0, 1)$ and a function
	\begin{equation*}
		f(x) := \frac{1}{d}-(x-\alpha)^+,
	\end{equation*}
	where $x^+ := \max\{x, 0\}$ and $d$ is the dimension of the market weight vector $\mu$. Note that $f$ satisfies the conditions in Theorem~\ref{thm: tanaka formula}. Then, for the pair $(\mu, A)$ with $A \equiv 0$, we have $X \equiv \mu$ in \eqref{def: X}, and set $f = f_i$  for $i=1, \cdots, d$ to obtain the generating function in \eqref{def: generating function} as
	\begin{equation}	\label{Def: G in alpha}
		G(\mu(t)) = 1 - \sum_{i=1}^d \big(\mu_i(t)-\alpha\big)^+,
	\end{equation}
	which is nonnegative by construction. Here, $\alpha$ plays the role of threshold on the market weights: we only include in our generating function those stocks whose market weights exceed the threshold level $\alpha$. From \eqref{def: vartheta} and \eqref{Def: Gamma},
	\begin{equation}			\label{Eq: vartheta in alpha example}
		\vartheta_i(t) = -\mathbbm{1}_{\{\mu_i(t) \geq \alpha\}}, \qquad i=1, \cdots, d,
	\end{equation}
	and 
	\begin{equation}	\label{Eq: gamma in alpha example}
		\Gamma^G(t) = \sum_{i=1}^d L_t^{\mu_i}(\alpha).
	\end{equation}
	Note that this Gamma function is nondecreasing, and increases whenever a market weight hits the threshold $\alpha$.
	
	\smallskip
	
	The trading strategy $\varphi$, additively generated as \eqref{def: varphi}, can be represented by Proposition~\ref{prop: additive generation2} as
	\begin{equation}	\label{Eq: varphi in alpha example}
		\varphi_i(t) = -\mathbbm{1}_{\{\mu_i(t) \geq \alpha\}} + \sum_{j=1}^d \mathbbm{1}_{\{\mu_j(t) \geq \alpha\}}\mu_j(t) + V^{\varphi}(t), \qquad i = 1, \cdots, d,
	\end{equation}
	with the value
	\begin{equation*}
		V^{\varphi}(t) = 1 - \sum_{i=1}^d (\mu_i(t)-\alpha)^+ + \sum_{i=1}^d L_t^{\mu_i}(\alpha). 
	\end{equation*}
	Since the Gamma function in \eqref{Eq: gamma in alpha example} is nondecreasing, we can use the strong arbitrage condition in Theorem~\ref{thm: AGRA}: Strong relative arbitrage with respect to the market exists over every time horizon $[0, t]$ with $T_* \leq t \leq T$, satisfying
	\begin{equation*}
		\Gamma^G(T_*) = \sum_{i=1}^d L_{T_*}^{\mu_i}(\alpha) > G(\mu(0)) = 1 - \sum_{i=1}^d (\mu_i(0)-\alpha)^+.
	\end{equation*}

	\smallskip
	
	In the expression of $\varphi_i(t)$ in \eqref{Eq: varphi in alpha example}, the sum $\sum_{j=1}^d \mathbbm{1}_{\{\mu_j(t) \geq \alpha\}} + V^{\varphi}(t)$ is a universal term, same for all indices $i=1, \cdots, d$. Thus, $\varphi$ invests one currency unit less to this universal baseline amount for those `big-capitalization stocks', whose market weights exceed the threshold $\alpha$. Therefore, we can interpret the strategy $\varphi$ of \eqref{Eq: varphi in alpha example} as generating strong arbitrage relative to the market by investing more money to `small-capitalization stocks'. This is in broad agreement with previous results in Stochastic Portfolio Theory, to the effect that ``tilting'' in favor of small capitalization stocks, as opposed to their larger brethren, can lead to superior results.
\end{example}

\bigskip

\begin{example} [\textit{On the ``momentum effect"}] 	\label{Ex: momentum}
	In Example~\ref{Ex: alpha}, we compared the individual market weights $\mu_i(t)$ with a fixed constant $\alpha$, to determine whether to include them in the generating function or not. Now, we extend this idea by comparing current market weights with past market weights. To be specific, we want our trading strategy to depend on the difference between $\mu(t)$ and $\mu(t-\delta)$ for some fixed $\delta>0$.
	
	In order to do this, first we fix the time interval $\delta>0$ and enlarge the domain of each $\mu_i$ from $[0, T]$ to $[-\delta, T]$. This extension of domain can be done easily because even before we start investing to our trading strategy at time $t=0$, there must be past stock prices and past market weights. We simply attach these past data to the left of the timeline, so as to extend its domain.
	
	Furthermore, since the evolution of $\mu(t-\delta)$ is as rough as its original path $\mu(t)$, we need somehow to make it smoother. Thus, we take the moving average of market weights between very small time interval $[t-\delta,~t-\delta+\theta]$ for some small $\theta$ satisfying $0<\theta < \delta$, and use this moving average instead of $\mu(t-\delta)$. Therefore, we introduce the function of finite variation
	\begin{equation}	\label{def: A in diff example}
		A_i(t) := \frac{1}{\theta} \int_{t-\delta}^{t-\delta+\theta} \mu_i(s)ds, \qquad 0 \leq t \leq T,
	\end{equation}
	for each $i=1, \cdots, d$; this is a good estimate of $\mu_i(t-\delta)$ for some very small constant $\theta$ and $t$ fixed, and also a function of finite variation.
	
	\smallskip
	
	Now, we consider the function
	\begin{equation*}
		f(x) := \frac{1}{d}-x^+,
	\end{equation*}
	where $d$ is again the dimension of the market weight vector $\mu$. Then, for the pair $(\mu, A)$ with $A$ defined as in \eqref{def: A in diff example}, we introduce the following nonnegative generating function
	\begin{equation*}
		G(\mu(t)) = 1 - \sum_{i=1}^d \big(\mu_i(t)-A_i(t)\big)^+.
	\end{equation*}
	This generating function includes those stocks whose current market weight $\mu_i(t)$ is bigger than or equal to its (estimate of) past market weight $\mu_i(t-\delta)$. It is also very similar to that of \eqref{Def: G in alpha} with the difference that the threshold $\alpha$ is replaced by the stock-specific level $A_i(t)$, capturing in this way the ``momentum effect". 
	
	\smallskip
	
	In this manner, we compute the quantities of \eqref{def: vartheta}, \eqref{Def: Gamma} as
	\begin{equation*}
	\vartheta_i(t) = -\mathbbm{1}_{\{\mu_i(t) \geq A_i(t)\}}, \qquad i=1, \cdots, d,
	\end{equation*}
	\begin{equation}	\label{Eq: gamma in diff example}
		\Gamma^G(t) = -\sum_{i=1}^d \int_0^t \mathbbm{1}_{\{\mu_i(s) \geq A_i(s)\}}dA_i(s) + \sum_{i=1}^d L_t^{(\mu_i-A_i)}(0),
	\end{equation}
	where we recall the continuous local time $L_t^{(\mu_i-A_i)}(0)$ of $\mu_i-A_i$ at the origin, as in Definition~\ref{Def : local time}. Here, in the integral expression above, the integrand $\mathbbm{1}_{\{\mu_i(s) \geq A_i(s)\}}$ is a quantity observable at time $s$, whereas the integrator $dA_i(s)$ represents the increment of moving average of $\mu_i$ between time interval $[s-\delta,~s-\delta+\theta]$ which is also observable value at time $s$. Therefore, this integral can be computed at any time between $0$ and $T$, even though the integrand and the integrator are from different times. The last term in \eqref{Eq: gamma in diff example} is nondecreasing, but the integral term is generally not monotone, as the finite variation integrator $dA_i(s)$ in general fluctuates.
	
	\smallskip
	
	The trading strategy $\varphi$, additively generated in the manner of \eqref{def: varphi}, can be represented by Proposition~\ref{prop: additive generation2} as
	\begin{equation}	\label{Eq: varphi in diff example}
		\varphi_i(t) = -\mathbbm{1}_{\{\mu_i(t) \geq A_i(t)\}} + \sum_{j=1}^d \mathbbm{1}_{\{\mu_j(t) \geq A_j(t)\}}\mu_j(t) + V^{\varphi}(t), \qquad i = 1, \cdots, d,
	\end{equation}
	and its value is given as
	\begin{equation*}
		V^{\varphi}(t) = 1 - \sum_{i=1}^d \big(\mu_i(t)-A_i(t)\big)^+ -\sum_{i=1}^d \int_0^t \mathbbm{1}_{\{\mu_i(s) \geq A_i(s)\}}dA_i(s) + \sum_{i=1}^d L_t^{(\mu_i-A_i)}(0) L_t^{\mu_i}(\alpha). 
	\end{equation*}
	Since the Gamma function of \eqref{Eq: gamma in diff example} is no longer monotone, it is hard to formulate appropriate conditions for strong relative arbitrage in this context. We note, however, that the strategy $\varphi$ in \eqref{Eq: varphi in diff example} invests one unit of currency less in stocks whose current market weight is bigger than or equal to its (estimate of) past value.
\end{example}

\bigskip

\subsection{Construction of multiplicatively generated trading strategies}

We now recall the definitions \eqref{def: X}-\eqref{Def: Gamma} from the earlier subsection, and we further assume that $1/G\big(\mu(\cdot), A(\cdot)\big)$ is locally bounded as in Section~\ref{sec : 3.3}. Then, we consider the vector $\eta$ with components 
\begin{equation}		\label{Def: eta2}
	\eta_i(t) := \vartheta_i(t) \times \exp \Big(\int_0^{t}\frac{d\Gamma^G(s)}{G\big(\mu(s), A(s)\big)}\Big), \quad \qquad i = 1, \cdots, d, \qquad 0 \leq t \leq T
\end{equation}
in the notation \eqref{def: vartheta}, and we have the following definition as Definition~\ref{def: MG}.

\medskip

\begin{defn} [Multiplicative generation]		\label{Def : mg}
	We say that the trading strategy $\psi = (\psi_1, \cdots, \psi_d)'$ defined as
	\begin{equation}	\label{def: psi}
		\psi_i(t) := \eta_i(t) - Q^{\eta}(t) - C(0), \quad \qquad i=1, \cdots, d, \qquad 0 \leq t \leq T,
	\end{equation}
	where $Q^{\eta}$ and $C(0)$ are defined as in Definition~\ref{Def : ag}, is \textit{multiplicatively generated} by the function $G$ in \eqref{def: generating function}.
\end{defn}

\medskip

For the trading strategy $\psi$ of \eqref{def: psi}, we have the similar formula for its value as in Proposition~\ref{prop: multiplicative generation}, but the difference here is that our generating function $G$ can be a lot less smooth than before; namely, of the form \eqref{def: generating function} with absolutely continuous $f$. The proof requires additional attention and computation, as there is no `product rule' that can be applied to such less regular functions.

\medskip

\begin{prop}
	\label{prop: multiplicative generation2}
	The trading strategy $\psi$, generated multiplicatively as in \eqref{def: psi} by the function $G$ in \eqref{def: generating function} for the pair $(\mu, A)$, has value
	\begin{equation}	\label{Eq: value of MTS2}
		V^{\psi}(t):= \sum_{i=1}^d \psi_i(t) \mu_i(t) = G\big(\mu(t), A(t)\big)\exp \Big(\int_0^{t}\frac{d\Gamma^G(s)}{G\big(\mu(s), A(s)\big)}\Big) > 0, \quad 0 \leq t \leq T,
	\end{equation}
	and its components can be represented, for $i=1, \cdots, d$, in the form
	\begin{equation}
		\psi_i(t)=V^{\psi}(t)\Big(1+ \frac{1}{G\big(\mu(t), A(t)\big)} \big( \vartheta_i(t) -\sum_{j=1}^{d}\vartheta_j(t)\mu_j(t) \big) \Big). 	\label{Eq: psi2}
	\end{equation}
\end{prop}

\medskip

\begin{proof}
	We denote the exponential
	\begin{equation*}
		K(t) := \exp \Big( \int_0^t \frac{d\Gamma^G(s)}{G\big(\mu(s), A(s)\big)}\Big).
	\end{equation*}
	We recall the notation \eqref{def: X}, \eqref{def: generating function} and consider the following telescoping expansion over the refining sequence $(\mathbb{T}_n)_{n \in \mathbb{N}}$ of partitions:
	\begin{align}
		&G\big(\mu(t), A(t)\big)K(t) - G\big(\mu(0), A(0)\big)K(0)
		=\sum_{i=1}^d \sum_{\substack{[t_j, t_{j+1}] \in \mathbb{T}_n \\ t_j \leq t}} \Big\{ f_i(X_i(t_{j+1}))K(t_{j+1}) - f_i(X_i(t_{j}))K(t_{j}) \Big\}				\nonumber
		\\
		&=\sum_{i=1}^d \sum_{\substack{[t_j, t_{j+1}] \in \mathbb{T}_n \\ t_j \leq t}} \Big\{ f_i(X_i(t_{j+1}))\big(K(t_{j+1})-K(t_j)\big) \Big\}						\label{Eq: fkk}
		\\
		&~~~+\sum_{i=1}^d \sum_{\substack{[t_j, t_{j+1}] \in \mathbb{T}_n \\ t_j \leq t}} \bigg\{ \Big( f_i\big(X_i(t_{j+1})\big) - f_i\big(X_i(t_{j})\big) \Big) K(t_{j}) \bigg\}.			\label{Eq: ffk}
	\end{align}
	Then, we can further expand the last double sum \eqref{Eq: ffk} as
	\begin{align}
		&~~~~~\sum_{i=1}^d  \sum_{\substack{[t_j, t_{j+1}] \in \mathbb{T}_n \\ t_j \leq t}} \Big\{ \big( f_i(X_i(t_{j+1})) - f_i(X_i(t_{j})) \big) K(t_{j}) \Big\}			\nonumber
		\\
		&=\sum_{i=1}^d  \sum_{\substack{[t_j, t_{j+1}] \in \mathbb{T}_n \\ t_j \leq t}} \Big\{ f'_i(X_i(t_{j}))K(t_j) \big(X_i(t_{j+1}) - X_i(t_j)\big)
		+ \int_{\mathbb{R}}\mathbbm{1}_{\llparenthesis X_{t_{j}}, X_{t_{j+1}} \rrbracket}(x) |X_{t_{j+1}}-x| K(t_j)df'_i(x) \Big\}									\nonumber
		\\
		&=~~~\sum_{i=1}^d  \sum_{\substack{[t_j, t_{j+1}] \in \mathbb{T}_n \\ t_j \leq t}}  \Big\{f'_i(X_i(t_{j}))K(t_j) \big(\mu_i(t_{j+1}) - \mu_i(t_j)\big) \Big\}	\label{Eq: fkmm}
		\\
		&~~~-\sum_{i=1}^d  \sum_{\substack{[t_j, t_{j+1}] \in \mathbb{T}_n \\ t_j \leq t}}  \Big\{f'_i(X_i(t_{j}))K(t_j) \big(A_i(t_{j+1}) - A_i(t_j)\big) \Big\}		\label{Eq: fkaa}
		\\
		&~~~+\sum_{i=1}^d  \sum_{\substack{[t_j, t_{j+1}] \in \mathbb{T}_n \\ t_j \leq t}}  K(t_j) \int_{\mathbb{R}} \big(L^{X_i, \mathbb{T}_n}_{t_{j+1}}(x)-L^{X_i, \mathbb{T}_n}_{t_j}(x)\big) df'_i(x),															\label{Eq: llk}
	\end{align}
	where the first equation is from \eqref{Eq : integration by parts}, and the last follows from \eqref{def: X} and \eqref{Def: discrete local time}.
	
	\smallskip
	
	Next, we show that the sum of \eqref{Eq: fkk}, \eqref{Eq: fkaa}, and \eqref{Eq: llk} vanishes as $n \rightarrow \infty$. First, since the mesh size goes to zero as $n \rightarrow \infty$, the limit of the sum \eqref{Eq: fkk} is a Lebesgue-Stieltjes integral
	\begin{equation*}
		\sum_{i=1}^d \int_0^t f_i\big(X_i(s)\big)dK(s) = \int_0^t G\big(\mu(s), A(s)\big)dK(s),
	\end{equation*}
	because $f_i(X_i(\cdot))$ is bounded on the compact interval $[0, T]$ for each $i=1, \cdots, d$.
	From \eqref{Def: Gamma}, the change of variable formula for Lebesgue-Stieltjes integral gives
	\begin{equation}	\label{Eq: kk}
		\int_0^t G\big(\mu(s), A(s)\big)dK(s) = \int_0^t K(s)d\Gamma^G(s) = \int_0^t K(s)d\Gamma_1^G(s) - \int_0^t K(s)d\Gamma_2^G(s),
	\end{equation}
	where
	\begin{equation*}
		\Gamma^G_1(t) := \sum_{i=1}^d \int_0^t \vartheta_i(s)dA_i(s), \qquad \Gamma^G_2(t) := \sum_{i=1}^d \int_{\mathbb{R}} L_t^{X_i}(x)df'_i(x).
	\end{equation*}
	We apply the change of variable formula once again to obtain
	\begin{equation*}
		\int_0^t K(s)d\Gamma_1^G(s) = \sum_{i=1}^d \int_0^t K(s)\vartheta_i(s)dA_i(s),
	\end{equation*}
	and this is just the negative value of limit of the sum \eqref{Eq: fkaa}. On the other hand, the last integral of \eqref{Eq: kk} can be expressed as the limit of the sum
	\begin{align*}
		\int_0^t K(s)d\Gamma_2^G(s) &= \lim_{n \rightarrow \infty} \sum_{\substack{[t_j, t_{j+1}] \in \mathbb{T}_n \\ t_j \leq t}} 
		K(t_j)\big\{ \Gamma_2^G(t_{j+1}) - \Gamma_2^G(t_j) \big\}
		\\
		&=\lim_{n \rightarrow \infty} \sum_{i=1}^d  \sum_{\substack{[t_j, t_{j+1}] \in \mathbb{T}_n \\ t_j \leq t}} K(t_j)
		\int_{\mathbb{R}} \big(L_{t_{j+1}}^{X_i}(x)-L_{t_{j}}^{X_i}(x) \big) df'_i(x),
	\end{align*}
	which coincides with the limit of the sum \eqref{Eq: llk}. Therefore, the claim that the limits of the sums \eqref{Eq: fkk}, \eqref{Eq: fkaa}, and \eqref{Eq: llk} are equal to zero, is proven; whereas, the remainder term on the right-hand side of \eqref{Eq: fkk}, \eqref{Eq: ffk} is the sum \eqref{Eq: fkmm}, whose limit we denote as 
	\begin{equation*}
		\sum_{i=1}^d \int_0^t f'_i(X_i(s))K(s) d\mu_i(s) = \sum_{i=1}^d \int_0^t \eta_i(s) d\mu_i(s),
	\end{equation*}
	from \eqref{def: generating function}, and \eqref{Def: eta2}. Finally, we obtain 
	\begin{equation*}
		G\big(\mu(t), A(t)\big)K(t) - G\big(\mu(0), A(0)\big)K(0) 
		= \sum_{i=1}^d \int_0^t \eta_i(s) d\mu_i(s)
		=\sum_{i=1}^d \int_0^t \psi_i(s) d\mu_i(s),
	\end{equation*}
	where the last equation follows from the fact $\sum_{i=1}^d \mu_i(\cdot) \equiv 1$ with the construction \eqref{def: psi}. The result \eqref{Eq: value of MTS2} then follows from the self-financibility of $\psi$ and the relationship
	\begin{equation*}
		V^{\psi}(0) = \sum_{i=1}^d \psi_i(0)\mu_i(0) = \sum_{i=1}^d \big(\vartheta_i(0)-C(0)\big) \mu_i(0) = G\big(\mu(0), A(0)\big) = G\big(\mu(0), A(0)\big)K(0).
	\end{equation*}
	The equation \eqref{Eq: psi2} can be justified in the same manner as Proposition~\ref{prop: multiplicative generation}.
\end{proof}

\bigskip

\begin{example} [\textit{On the ``size effect", revisited}]	\label{Ex: alpha2}
	Recall the generating function $G$ of \eqref{Def: G in alpha} in Example~\ref{Ex: alpha}, and add a very small constant $\epsilon > 0$ to have
	\begin{equation*}
		G\big(\mu(t)\big) = (1 + \epsilon) - \sum_{i=1}^d \big(\mu_i(t)-\alpha\big)^+,
	\end{equation*}
	with the same $\vartheta$ as in \eqref{Eq: vartheta in alpha example} and the same Gamma function as in \eqref{Eq: gamma in alpha example}. The reason for inserting the constant $\epsilon > 0$ is to ensure the positivity of $G$ regardless of the choice of $\alpha \in (0, 1)$, so that $1/G$ is locally bounded.
	
	The trading strategy $\psi$, multiplicatively generated by this $G$ as in Definition~\ref{Def : mg}, can be represented by Proposition~\ref{prop: multiplicative generation2} as
	\begin{equation}	\label{Eq: psi in alpha2 example}
		\psi_i(t) = -K(t)\mathbbm{1}_{\{\mu_i(t) \geq \alpha\}} + \sum_{j=1}^d K(t) \mathbbm{1}_{\{\mu_j(t) \geq \alpha\}}\mu_j(t) + V^{\varphi}(t), \qquad i = 1, \cdots, d,
	\end{equation}
	and its value is given as
	\begin{equation*}
		V^{\psi}(t) = \Big((1+\epsilon) - \sum_{i=1}^d \big(\mu_i(t)-\alpha\big)^+ \Big) K(t),
	\end{equation*}
	where 
	\begin{equation*}
		K(t) := \exp \Big( \int_0^t \sum_{i=1}^d \frac{dL_s^{\mu_i}(\alpha)}{1+\epsilon - \sum_{j=1}^d (\mu_j(s)-\alpha)^+}  \Big).
	\end{equation*}
	From Theorem~\ref{thm: MGRA2}, strong relative arbitrage with respect to the market exists over every time horizon $[0, t]$ with $T_* \leq t \leq T$, satisfying
	\begin{equation*}
		\Gamma^G(T_*) = \sum_{i=1}^d L_{T_*}^{\mu_i}(\alpha) > (1+\epsilon) \log \Big( \frac{1 + \epsilon - \sum_{i=1}^d \big(\mu_i(0)-\alpha\big)^+}{\epsilon} \Big),
	\end{equation*}
	because $G$ satisfies the bounds $\epsilon \leq G\big(\mu(\cdot)\big) \leq 1+\epsilon$.
	
	\smallskip
	
	In the same manner as in Example~\ref{Ex: alpha}, the strategy $\psi$ in \eqref{Eq: psi in alpha2 example} invests $K(t)$ unit of currency less than the `universal baseline amount', namely $\sum_{j=1}^d K(t) \mathbbm{1}_{\{\mu_j(t) \geq \alpha\}}\mu_j(t) + V^{\varphi}(t)$, for those `big-capitalization stocks', whose market weight exceeds the threshold $\alpha$, at time $t$. Because $K(\cdot)$ is nondecreasing, the strategy $\psi$ keeps investing less and less money to those `big-capitalization stocks' as time goes by, and the ``size effect" increases gradually.
\end{example}

\bigskip

\bigskip

\bigskip

\section{Examples of entropic functions}
 \label{sec: 6}
  
In this section, we present some examples of trading strategies additively and multiplicatively generated from variants of the `entropy function', and the corresponding conditions for strong relative arbitrage introduced in Section~\ref{sec: 4}. Empirical results regarding these examples will be presented in the next section.

\bigskip

Consider the Gibbs entropy function 
\begin{equation}	\label{Def: Gibbs}
	H(x) = -\sum_{i=1}^d x_i \log (x_i), \qquad x \in (0, 1)^d,
\end{equation}
with values in (0, $\log d$). Being nonnegative, twice-differentiable and concave, this function is one of the most frequently used functions in stochastic portfolio theory. See \citet{Fe, FK_survey, Karatzas:Ruf:2017} for its usage in generating portfolios, and also \cite{Ruf:Xie}, \cite{Schied:2016} for some variants of portfolios generated by this function. 

\medskip

\begin{example} [Entropy function]
	\label{Ex: original entropy}
	In order to compare the trading strategy generated by the original entropy function, with those generated from variants of functions related to it, we first derive and summarize the trading strategy additively/multiplicatively generated by the original entropy function. Consider the ``shifted entropy"
	\begin{equation}	\label{Eq: original entropy}
		G\big(\mu(t)\big):=-\sum_{i=1}^d\mu_i(t)\log \big(p\mu_i(t) \big) = - \log p -\sum_{i=1}^d\mu_i(t)\log\big(\mu_i(t)\big),
	\end{equation}
	for some given real constant $p \geq 1$, where the last equality uses the fact $\sum_{i=1}^d \mu_i(t) = 1$. This quantity coincides with the original entropy $H\big(\mu(t)\big)$ in \eqref{Def: Gibbs} when $p=1$; the reason for inserting the additive constant will be explained in the following remark. From \eqref{Eq: gamma}, \eqref{Eq: varphi}, and \eqref{Eq: psi}, the additively generated trading strategy $\varphi$, and the multiplicatively generated trading strategy $\psi$ from this entropy function, can be represented as
	\begin{equation}	\label{Eq: varphi of original entropy}
		\varphi_i(t)= - \log \big(p\mu_i(t)\big)+\Gamma^G(t), \qquad i=1, \cdots, d,
	\end{equation}
	\begin{equation}	\label{Eq: psi of original entropy}
		\psi_i(t)= - \exp \bigg( \int_0^t \frac{d\Gamma^G(s)}{G\big(\mu(s)\big)} \bigg) \log \big(p\mu_i(t)\big), \qquad i=1, \cdots, d,
	\end{equation}
	where
	\begin{equation}		\label{Eq: gamma of original entropy}
		\Gamma^G(t)=\sum_{i=1}^d \int_0^t \frac{d\langle \mu_i \rangle (s)}{2\mu_i(s)}
	\end{equation}
	is nondecreasing in $t$. The values of these trading strategies are given via \eqref{Eq: value of ATS} and \eqref{Eq: value of MTS}. Note that $\varphi$ in \eqref{Eq: varphi of original entropy} and $\psi$ in \eqref{Eq: psi of original entropy} have relatively simple forms, because $G$ in \eqref{Eq: original entropy} is `almost balanced', in the sense that
	\begin{equation*}
		G(\mu(\cdot)) - 1 = \sum_{j=1}^{d}\mu_j(\cdot)\partial_jG(\mu(\cdot))
	\end{equation*}
	holds; compare this equation with \eqref{Def: balance}, and also compare \eqref{Eq: varphi of original entropy}, \eqref{Eq: psi of original entropy} with \eqref{Eq: balanced varphi} and \eqref{Eq: balanced psi}. Then, the condition \eqref{Con: additive arb} for additively generated strong arbitrage in Theorem~\ref{thm: AGRA} is given as
	\begin{equation}	\label{Con: additive arb of original entropy}
		\sum_{i=1}^d \int_0^{T_*} \frac{d\langle \mu_i \rangle (s)}{2\mu_i(s)} > -\sum_{i=1}^d\mu_i(0)\log \big( p\mu_i(0) \big),
	\end{equation}
	whereas the condition \eqref{Con: multiplicative arb2} for multiplicatively generated strong arbitrage in Theorem~\ref{thm: MGRA2} is
	\begin{equation}	\label{Con: multiplicative arb of original entropy}
		\sum_{i=1}^d \int_0^{T_*} \frac{d\langle \mu_i \rangle (s)}{2\mu_i(s)} > \beta \log \bigg( \frac{-\sum_{i=1}^d\mu_i(0)\log \big( p\mu_i(0) \big)}{\alpha} \bigg).
	\end{equation}
	Here, the constants $\alpha$, $\beta$ are the lower and upper bounds on $G$, which appear in the boundedness condition \textit{(i)} of Theorem~\ref{thm: MGRA2}. We discuss these bounds on $G$ in the Remark~\ref{Re: Reducing threshold in entropy example} below.
\end{example}

\medskip

\begin{rem}
	\label{Re: Reducing threshold in entropy example}
	The construction of trading strategies described in the previous sections does not require any optimization or statistical estimation of parameters. However, we can improve the relative performance of trading strategies with respect to the market by introducing a parameter or a set of parameters in the generating function $G$. Though the original entropy function is as in \eqref{Eq: original entropy} with $p=1$, we purposely inserted a constant $p$ inside the logarithm. To achieve strong relative arbitrage faster, or to find the smallest such $T_*$ satisfying \eqref{Con: additive arb of original entropy}, or more generally \eqref{Con: additive arb}, it helps to be able to make the `threshold' value $G\big(\mu(0), A(0)\big)$ on the right-hand side of the inequality smaller, while keeping the `growth rate' of $\Gamma^G(\cdot)$ fixed.

	\smallskip	
	
	It is in this spirit, that we placed the parameter $p$ in \eqref{Eq: original entropy}; inserting such constant $p>1$ inside the $\log$ would make the initial value $G\big(\mu(0)\big)$ smaller by the amount $\log p$, compared to the case $p=1$, while it does not affect $\Gamma^G(\cdot)$, as subtracting a constant $\log p$ from $G$ does not change any derivatives of $G$ with respect to the market weights. However, if $p$ is so large that $-\sum_{i=1}^d\mu_i(t)\log\mu_i(t) < \log p$ holds at some time $t$, then $G\big(\mu(t), A(t)\big)$ has a negative value. Theoretically, $-\sum_{i=1}^d\mu_i(t)\log\mu_i(t)$ has the minimum value of $0$ only when one of the market weights, say $\mu_1(t)$, is equal to $1$, and all the other weights $\mu_i(t)$ for $i=2, \cdots, d$ vanish, which does not happen in the real world. Empirically, the value of $-\sum_{i=1}^d\mu_i(t)\log\mu_i(t)$ is always bounded away from zero, and we can guarantee this condition theoretically by imposing a weak condition on the market weights. For example, restricting the maximum value of the market weights, say 
	\begin{equation}	\label{Con: max market weight}
		\max_{i} \mu_i(\cdot) \leq 0.5
	\end{equation}
	yields an additional condition on the market weights, namely; there must be an index $j \in \{1, \cdots, d\}$ such that 
	\begin{equation}	\label{Con: min market weight}
		\mu_j(t) \geq \frac{0.5}{d-1},
	\end{equation}
	for any $t \in [0, T]$, thanks to the identity $\sum_{i=1}^d \mu_i \equiv 1$. Then, the value of $-\sum_{i=1}^d\mu_i(t)\log\mu_i(t)$ should be bigger than $-\frac{0.5}{d-1} \log \big(\frac{0.5}{d-1}\big)$, and is bounded away from $0$ at all times. Finding a suitable value of $p>1$, while maintaining $G$ bounded away from $0$ (and bigger than some positive constant $\alpha$) should be statistically done and it depends on $d$, the number of stocks. It is quite straightforward that $G$ is bounded from above by some constant $\beta$, as the function $x \mapsto -x\log x$ has the maximum value $1/e$. Empirical estimation of such $p$ can be found in the next section.
	
	\smallskip
	
	Making the initial value of $G\big(\mu(0), A(0)\big)$ small while keeping the growth rate of $\Gamma^G(\cdot)$ is also beneficial for calculating the `excess return rate' of trading strategies with respect to the market. The excess return rate of the trading strategy $\varphi$ at time $t \in (0, T]$ can be defined as
	\begin{equation}		\label{Def: excess return rate}
		R^\varphi(t) := \frac{V^{\varphi}(t) - V^{\varphi}(0)}{V^{\varphi}(0)},
	\end{equation}
	and from \eqref{Eq: value of ATS}, this can be represented as
	\begin{equation*}
			R^\varphi(t) = \frac{G\big(\mu(t), A(t)\big) + \Gamma^G(t) - G\big(\mu(0), A(0)\big)}{G\big(\mu(0), A(0)\big)},
	\end{equation*}
	in the case of additively generated trading strategy. Thus, if we somehow make the value $G\big(\mu(0), A(0)\big)$, the denominator of above fraction, smaller, while keeping the value of $\Gamma^G(t)$ in the numerator, we can obtain larger excess return rates for the trading strategy $\varphi$. In the following examples, we use this trick to decrease the initial value $G\big(\mu(0), A(0)\big)$ of generating function by inserting an appropriate constant $p$ whenever possible.
\end{rem}

\medskip

The following two examples use for the component $A$ two ``polar opposite" functions of finite variation; the running maximum 
\begin{equation}		\label{Def : max}
	\mu_i^*(t) := \max_{0 \leq s \leq t} \mu_i(s),	
\end{equation}
and the running minimum 
\begin{equation}		\label{Def : min}
	\mu_{*i}(t) := \min_{0 \leq s \leq t} \mu_i(s),
\end{equation}
of the market weights. 

\medskip

\begin{example} [Entropy function with running maximum]
	\label{Ex: entropy with MAX}
	Consider an entropic function of the type
	\begin{equation}		\label{Def: entropy with MAX}
		G\big(\mu(t), A(t)\big) \equiv G\big(\mu(t), \mu^*(t)\big) := - \log p - \sum_{i=1}^d \mu_i(t)\log \mu_i^*(t),
	\end{equation}
	with the notation of the vector function $A \equiv \mu^* = (\mu_1^*, \cdots, \mu_d^*)'$. As before, $p \geq 1$ is a constant as in Remark~\ref{Re: Reducing threshold in entropy example}, and the initial value $G\big(\mu(0), \mu^*(0)\big) = - \log p - \sum_{i=1}^d \mu_i(0)\log \mu_i(0) $ is the same as in Example~\ref{Ex: original entropy}. We then easily obtain the derivatives as
	\begin{equation*}
		\partial_iG\big(\mu(t), \mu^*(t)\big) = -\log \mu_i^*(t), \qquad
		\partial^2_{i, j}G\big(\mu(t), \mu^*(t)\big) = 0, \qquad
		D_iG\big(\mu(t), \mu^*(t)\big) = - \frac{\mu_i(t)}{\mu_i^*(t)},
	\end{equation*}
	for $1 \leq i,~ j \leq d$.
	From \eqref{Eq: gamma}, we also have
	\begin{equation}					\label{Eq: gamma of entropy with MAX}
		\Gamma^G(t) = \sum_{i=1}^{d}\int_0^t \frac{\mu_i(s)}{\mu_i^*(s)} d \mu_i^*(s) 
		= \sum_{i=1}^d \big( \mu_i^*(t) - \mu_i(0) \big) = \sum_{i=1}^d \mu_i^*(t) - 1,	
	\end{equation}
	where we used the fact that the increment $d\mu_i^*(s)$ is positive only when $\mu_i(s) = \mu_i^*(s)$. As the function $G$ of \eqref{Def: entropy with MAX} is linear in $\mu_i(\cdot)$, the second order partial derivatives with respect to $\mu_i$ of $G$ vanish, and the nondecreasing structure of $\Gamma^G(\cdot)$ comes solely from $\mu_i^*(\cdot)$. Also from \eqref{Eq: value of ATS}, and \eqref{Eq: varphi}, the trading strategy $\varphi$ generated additively from this function in \eqref{Def: entropy with MAX}, is expressed as
	\begin{equation}	\label{Eq: varphi of entropy with MAX}
		\varphi_i(t)=-\log \big(p\mu_i^*(t)\big) + \sum_{j=1}^d \mu_j^*(t) - 1, \qquad i=1, \cdots, d;
	\end{equation}
	and the value of this trading strategy is given as
	\begin{equation*}
		V^{\varphi}(t) =-\sum_{i=1}^d \mu_i(t)\log \big(p\mu_i^*(t)\big)+\sum_{i=1}^d \mu_i^*(t)-1.
	\end{equation*}
	The strong arbitrage condition \eqref{Con: additive arb} in Theorem~\ref{thm: AGRA} takes the form
	\begin{equation*}
		\sum_{i=1}^d \mu_i^*(T_*) > 1- \sum_{i=1}^d\mu_i(0)\log\big(p\mu_i(0)\big).
	\end{equation*}
	
	\smallskip
	
	On the other hand, from \eqref{Eq: value of MTS}, and \eqref{Eq: psi}, the trading strategy $\psi$ generated multiplicatively by the function in \eqref{Def: entropy with MAX}, is given as
	\begin{equation}	\label{Eq: psi of entropy with MAX}
		\psi_i(t)= -K(t) \log \big(p\mu_i^*(t)\big), \qquad i=1, \cdots, d;
	\end{equation}
	and the associated value is
	\begin{equation*}
		V^{\psi}(t) =-K(t) \sum_{i=1}^d \mu_i(t)\log \big(p\mu_i^*(t)\big),
	\end{equation*}
	where
	\begin{equation*}
		K(t) :=\exp \bigg( - \int_0^t \sum_{i=1}^d \frac{d\mu_i^*(s)}{\sum_{j=1}^d \mu_j(s)\log \big(p\mu_j^*(s)\big)} \bigg).
	\end{equation*}
	The strong arbitrage condition \eqref{Con: multiplicative arb2} in Theorem~\ref{thm: MGRA2} takes the form
	\begin{equation*}
		\sum_{i=1}^d \mu_i^*(T_*) > 1 + \beta \log \bigg( \frac{-\sum_{i=1}^d\mu_i(0)\log\big(p\mu_i(0)\big)}{\alpha} \bigg).
	\end{equation*}
	Here $\alpha, \beta$ are again lower and upper bounds on $G$, and these bounds depend on the parameter $p$ and the condition imposed on the market weights, as described in Remark~\ref{Re: Reducing threshold in entropy example}.	Empirical results regarding this example can be found in the next section.	
\end{example}

\bigskip

The Gamma function $\Gamma^G(\cdot)$ which represents the ``cumulative earnings" of the next example is nonincreasing, but surprisingly, the empirical value $V^{\varphi}(\cdot)$ and $V^{\psi}(\cdot)$ of trading strategies grow asymptotically in the long run as the value of $G$ grows, as indicated in the empirical results of the next section. Thus, in this case, it is more appropriate to apply Theorem~\ref{thm: AGRA2} and Theorem~\ref{thm: MGRA3} regarding the strong arbitrage condition.

\bigskip

\begin{example} [Entropy function with running minimum]
	\label{Ex: entropy with MIN}
	Consider the function
	\begin{equation}		\label{Def: entropy with MIN}
		G\big(\mu(t), A(t)\big) \equiv G\big(\mu(t), \mu_*(t)\big) := - \log p - \sum_{i=1}^d \mu_i(t)\log \mu_{*i}(t),
	\end{equation}
	with the notation of the vector function $A \equiv \mu_* = (\mu_{*1}, \cdots, \mu_{*d})'$ in \eqref{Def : min}. As before, $p$ is a constant and the initial value $G\big(\mu(0), \mu_*(0)\big)$ is the same as previous examples. Then, similarly as before, we have
	\begin{equation*}
		\partial_iG\big(\mu(t), \mu_*(t)\big) = -\log \mu_{*i}(t), \qquad
		\partial^2_{i, j}G\big(\mu(t), \mu_*(t)\big) = 0, \qquad
		D_iG\big(\mu(t), \mu_*(t)\big) = - \frac{\mu_i(t)}{\mu_{*i}(t)},
	\end{equation*}
	for $1 \leq i, j \leq d$.
	Also from \eqref{Eq: gamma}, we obtain
	\begin{equation}		\label{Eq: gamma of entropy with MIN}
		\Gamma^G(t) = \sum_{i=1}^{d}\int_0^t \frac{\mu_i(s)}{\mu_{*i}(s)} d \mu_{*i}(s)
		=\sum_{i=1}^{d}\int_0^t ~1~ d \mu_{*i}(s)
		=\sum_{i=1}^d \mu_{*i}(t) - 1,		
	\end{equation}
	which is nonpositive and nonincreasing function of $t$.
	
	\smallskip
	
	We first consider the trading strategy $\varphi$ additively generated from this function which is expressed as
	\begin{equation}	\label{Eq: varphi of entropy with MIN}
		\varphi_i(t)=-\log \big(p\mu_{*i}(t)\big) + \sum_{j=1}^d \mu_{*j}(t)-1, \qquad i=1, \cdots, d,
	\end{equation}
	by \eqref{Eq: varphi}. Note that $\varphi_i(t)$ admits the lower bound
	\begin{equation}			\label{Eq: varphi estimate in entropy with MIN}
		\varphi_i(t) =-\log p -\log \mu_{*i}(t) + \mu_{*i}(t) + \sum_{\substack{j=1 \\ j \ne i}}^d \mu_{*j}(t) - 1				
		\geq -\log p -\log \mu_i(0) + \mu_i(0) -1,
	\end{equation}
	because the function $x \mapsto -\log x + x$ is decreasing in the interval $x \in (0, 1)$ and, thus, the quantity $\varphi_i(t)$ is positive provided that  
	\begin{equation*}
		p < e^{-\log\mu_i(0)+\mu_i(0)-1}
	\end{equation*}
	holds. By \eqref{Eq: value of ATS}, the value of this trading strategy is given as
	\begin{equation}		\label{Eq: value of entropy with MIN}
		V^{\varphi}(t) =-\log p - \sum_{i=1}^d \mu_i(t)\log \mu_{*i}(t) + \Big( \sum_{i=1}^d \mu_{*i}(t) -1 \Big).
	\end{equation}
	While $\Gamma^G(t) = \sum_{i=1}^d \mu_{*i}(t)-1$, the last term on the right-hand side of \eqref{Eq: value of entropy with MIN}, is nonincreasing, the second term $-\sum_{i=1}^d \mu_i(t)\log \mu_{*i}(t)$ asymptotically increases as the mapping $t \mapsto -\log \mu_{*i}(t)$ is nondecreasing. Actually, as we can see in the next section, the value of this trading strategy grows in the long run. We can apply Theorem~\ref{thm: AGRA2}, rather than Theorem~\ref{thm: AGRA}, to find a strong arbitrage condition, because $\Gamma^G(\cdot)$ in this example is not nondecreasing.
	
	\smallskip
	
	In order to apply Theorem~\ref{thm: AGRA2}, we first need to show that $V^{\varphi}(\cdot) \geq 0$ holds. From \eqref{Eq: varphi estimate in entropy with MIN}, we obtain
	\begin{align*}
		-\log \mu_{*i}(t) &\geq -\sum_{j=1}^d\mu_{*i}(t)-\log \mu_i(0)+\mu_i(0)
		\geq -1 -\log \mu_i(0)+\mu_i(0)
		\\
		&\geq -1 -\log \big( \max_{j=1, \cdots, d}\mu_j(0) \big) +\max_{j=1, \cdots, d}\mu_j(0)
	\end{align*}
	holds for all $i=1, \cdots, d$. The last inequality follows from the fact that the function $x \mapsto -\log x + x$ is decreasing in the interval $x \in [0, 1]$. Then, we also obtain
	\begin{equation*}
		-\sum_{i=1}^d \mu_i(t)\log \mu_{*i}(t) \geq -1 -\log \big( \max_{j=1, \cdots, d}\mu_j(0) \big) +\max_{j=1, \cdots, d}\mu_j(0),
	\end{equation*}
	because $- \sum_{i=1}^d \mu_i(t)\log \mu_{*i}(t)$ is just the weighted arithmetic average of $\{-\log \mu_{*i}(t)\}_{i=1, \cdots, d}$ with weights $\mu_i(t)$ with $\sum_{i=1}^d \mu_i(t)=1$. Thus, $V^{\varphi}(t)$ in \eqref{Eq: value of entropy with MIN} admits the lower bound
	\begin{align*}
		V^{\varphi}(t) \geq -\log p -2 -\log \big( \max_{j}\mu_j(0) \big) +\max_{j}\mu_j(0)
	\end{align*}
	for any $t \in [0, T]$, and $V^{\varphi}(\cdot) \geq 0$ is guaranteed when
	\begin{equation}		\label{Con: p of entropy with MIN}
		p \leq e^{-2-\log \big( \max_{j}\mu_j(0) \big) +\max_{j}\mu_j(0)}
	\end{equation}
	holds. Regarding the second condition of Theorem~\ref{thm: AGRA2}, we have
	\begin{align}
		G\big(\mu(t), \mu_*(t)\big) &= - \log p - \sum_{i=1}^d \mu_i(t)\log \mu_{*i}(t)	\nonumber
		\\
		& \geq - \log p - \sum_{i=1}^d \mu_i(t)\log \big( \max_{i=1, \cdots, d} (\mu_{*i}(t) ) \big)	\nonumber
		\\
		& = -\log p - \max_{i=1, \cdots, d} \big\{ \log \mu_{*i}(t) \big\}
		:= F\big(\mu(t), \mu_*(t)\big),	\label{Eq: F of entropy with MIN}
	\end{align}
	where we used the fact $\sum_{i=1}^d \mu_i(t) = 1$; now the mapping $t \mapsto \mu_{*i}(t)$ is nonincreasing, so $F\big(\mu(t), \mu_*(t)\big)$ is nondecreasing in $t$. Finally, the last condition of Theorem~\ref{thm: AGRA2} follows easily from \eqref{Eq: gamma of entropy with MIN}, as 
	\begin{equation}	\label{Eq: kappa of entropy with MIN}
		\Gamma^G(t) \geq -1 := -\kappa.
	\end{equation}
	Thus, Theorem~\ref{thm: AGRA2} shows that the additively generated strategy $\varphi$ in \eqref{Eq: varphi of entropy with MIN} is strong arbitrage relative to the market over every time horizon $[0, t]$ with $T_* \leq t \leq T$, satisfying the condition
	\begin{equation*}
		\sum_{i=1}^d \mu_i(0)\log \mu_{i}(0) - \max_{i=1, \cdots, d} \big\{ \log \mu_{*i}(T_*) \big\} > 1.
	\end{equation*}
	
	\smallskip
	
	Next, from \eqref{Eq: psi}, the trading strategy $\psi$ multiplicatively generated by the function \eqref{Def: entropy with MIN} is represented as
	\begin{equation}	\label{Eq: psi of entropy with MIN}
		\psi_i(t)= -K(t) \log \big(p\mu_{*i}(t)\big), \qquad i=1, \cdots, d;
	\end{equation}
	with the value
	\begin{equation*}
	V^{\psi}(t) =-K(t) \sum_{i=1}^d \mu_i(t)\log \big(p\mu_{*i}(t)\big),
	\end{equation*}
	where
	\begin{equation*}
	K(t) :=\exp \bigg( - \int_0^t \sum_{i=1}^d \frac{d\mu_{*i}(s)}{\sum_{j=1}^d \mu_j(s)\log \big(p\mu_{*j}(s)\big)} \bigg).
	\end{equation*}
	
	\smallskip
	
	For the strong arbitrage condition, we use Theorem~\ref{thm: MGRA3}. Since $F\big(\mu(t), \mu_*(t)\big)$ and $\kappa$ defined in \eqref{Eq: F of entropy with MIN}, \eqref{Eq: kappa of entropy with MIN} satisfy the conditions \textit{(i), (ii)} (with an appropriate choice of $p$ to make $F$ a positive function), the strong arbitrage condition \eqref{Con: multiplicative arb3} becomes
	\begin{equation*}
	\log \Big(-\max_{i=1, \cdots, d} \big\{ \log p\mu_{*i}(T_*) \big\}\Big) > \frac{-1}{\log p + \max_{i=1, \cdots, d} \big\{ \log \mu_{i}(0) \big\}}.
	\end{equation*}
\end{example}

\bigskip

\begin{rem}
	\label{Re: Reducing threshold in entropy example2}
	In Remark~\ref{Re: Reducing threshold in entropy example}, we need to find a suitable value for $p$ satisfying an inequality, for instance, $-\sum_{i=1}^d\mu_i(t)\log(\mu_i(t)) \geq \log p$~ for all $t \in [0, T]$ in Example~\ref{Ex: original entropy}, to make the function $G$ nonnegative. This inequality usually depends on the values $\mu_i(t)$, $t \in [0, T]$ which are not observable at time $0$. Thus, we need to impose some condition on the market weights or statistically analyze historical market data to find an appropriate value for $p$ before we construct the trading strategy.
	
	\smallskip
	
	However, in Example~\ref{Ex: entropy with MIN}, due to its unique structure, we can analytically find a suitable value of $p$ without any statistical estimation at time $t=0$. Indeed, from \eqref{Eq: F of entropy with MIN}, we have that
	\begin{align*}
		G\big(\mu(t), \mu_*(t)\big) & \geq -\log p - \max_{i=1, \cdots, d} \big\{ \log \mu_{*i}(t) \big\}
		\\
		& \geq -\log p - \max_{i=1, \cdots, d} \big\{ \log \mu_{i}(0) \big\}
	\end{align*}
	holds; and setting 
	\begin{equation}		\label{Con: p of entropy with MIN2}
		p = \frac{1}{\max_{i=1, \cdots, d}\mu_{i}(0)}
	\end{equation}
	guarantees the condition $G\big(\mu(t), \mu_*(t)\big) \geq 0$ for all $t \in [0, T]$. Note that this $p$ can be calculated from absolutely observable values at time $0$. Actually, $p$ satisfying \eqref{Con: p of entropy with MIN} also guarantees the nonnegativity condition of $G$ because $G\big(\mu(\cdot), \mu_*(\cdot)\big) \geq V^{\varphi}(\cdot) = G\big(\mu(\cdot), \mu_*(\cdot)\big)+\Gamma^G(\cdot) \geq 0$ holds due to the nonpositivity of $\Gamma^G(\cdot)$. Of course, one can perform a statistical estimation of $p$ using past market data, to obtain a better value of $p$ while satisfying both $G\big(\mu(\cdot), \mu_*(\cdot)\big) \geq 0$ and $V^{\varphi}(\cdot) \geq 0$.
\end{rem}

\bigskip

The next example provides yet another application of Theorem~\ref{thm: AGRA2}.

\bigskip

\begin{example} [Iterated entropy function with running minimum]
	\label{Ex: entropy with LOGLOG}
	In this example, we first fix a positive constant $r$ such that the following condition on the initial market weights holds;
	\begin{equation} \label{Con: initial condition in entropy with LOGLOG}
		\mu_i(0) \leq \frac{1}{re}, \qquad i = 1, \cdots, d.
	\end{equation}
	Here, $e$ is the exponential constant. As the initial market weights are observable before we construct a trading strategy, we can find and fix such value of $r$ at the moment we start investing in our trading strategy. For example, if no single stock takes more than $12\%$ of total capitalization at time $0$, we can set $r=3$, as $\frac{1}{3e} \approx 0.123$. Then, we consider a function
	\begin{equation}		\label{Def: entropy with LOGLOG}
		G\big(\mu(t), A(t)\big) \equiv G\big(\mu(t), \mu_*(t)\big) := - p - \sum_{i=1}^d \mu_i(t)\log \big\{ -r\mu_{*i}(t)\log \big(r\mu_{*i}(t)\big) \big\},
	\end{equation}
	with the notation of the vector function $A \equiv \mu_* = (\mu_{*1}, \cdots, \mu_{*d})'$. As in Remark~\ref{Re: Reducing threshold in entropy example2}, we can pre-determine the value of the constant $p$, without any statistical estimation, because of the series of inequalities
	\begin{align}
		G\big(\mu(t), \mu_*(t)\big) & \geq -p - \sum_{i=1}^d \mu_i(t) \log \Big[ \max_{j=1, \cdots, d} \big\{ -r\mu_{*j}(t)\log \big(r\mu_{*j}(t)\big) \big\} \Big]		\nonumber
		\\
		& \geq -p - \log \Big[ \max_{j=1, \cdots, d} \big\{ -r\mu_{*j}(t)\log \big( r\mu_{*j}(t) \big) \big\} \Big]	 =: F\big(\mu(t), \mu_*(t)\big)					\label{Eq: F of entropy with LOGLOG}
		\\
		& \geq -p - \log \Big[ \max_{j=1, \cdots, d} \big\{ -r\mu_{j}(0)\log \big(r\mu_{j}(0) \big) \big\} \Big], \qquad \forall ~ t \in [0, T].				\nonumber
	\end{align}
	The first inequality uses the fact that $x \mapsto -\log x$ is decreasing function and the second inequality is from the equation $\sum_{i=1}^d \mu_i(t) = 1$. The last inequality holds because $x \mapsto -rx \log (rx)$ is increasing in the interval $[0, \frac{1}{re}]$ and
	\begin{equation}	\label{Con: initial condition in entropy with LOGLOG2}
		0 \leq \mu_{*i}(\cdot) \leq \mu_i(0) \leq \frac{1}{re},
	\end{equation}
	holds from the assumption \eqref{Con: initial condition in entropy with LOGLOG}. Note that $F\big(\mu(t), \mu_*(t)\big)$ defined in \eqref{Eq: F of entropy with LOGLOG} is a nondecreasing in $t$ as the mappings $t \mapsto \mu_{*i}(t)$ and $t \mapsto -r\mu_{*i}(t)\log \big(r\mu_{*i}(t)\big)$ are nonincreasing. Then, the choice
	\begin{equation}	\label{Eq: p in entropy with LOGLOG}
		p \leq - \log \Big[ \max_{i=1, \cdots, d} \big\{ -r\mu_{i}(0)\log \big( r\mu_{i}(0) \big) \big\} \Big],
	\end{equation}
	which is completely observable value at time $0$, guarantees that $G\big(\mu(\cdot), \mu_*(\cdot)\big)$ is always nonnegative. Next, after some computation, we obtain the partial derivatives
	\begin{align}
		&\partial_iG\big(\mu(t), \mu_*(t)\big) = -\log \big\{ -r\mu_{*i}(t)\log \big(r\mu_{*i}(t)\big) \big\} \geq 1,	\label{Con: partial in entropy with LOGLOG}
		\\
		&\partial^2_{i, k}G\big(\mu(t), \mu_*(t)\big) = 0,		\nonumber
		\\
		&D_iG\big(\mu(t), \mu_*(t)\big) = -\frac{\mu_i(t) \log \big( r\mu_{*i}(t) \big)+\mu_i(t)}{\mu_{*i}(t)\log \big(r\mu_{*i}(t)\big)},		\nonumber
	\end{align}
	for $1 \leq i, k \leq d$. We note that $\partial_iG\big(\mu(t), \mu_*(t)\big) \geq 1 $ holds again because the mapping $x \mapsto -rx \log (rx)$ is increasing from $0$ to $\frac{1}{e}$ in the interval $[0, \frac{1}{re}]$. From \eqref{Eq: gamma} and the fact that the increment $d\mu_{*i}(s)$ is positive only when $\mu_i(s) = \mu_{*i}(s)$, we obtain
	\begin{equation}		\label{Eq: gamma of entropy with LOGLOG}
		\Gamma^G(t) = \sum_{i=1}^{d}\int_0^t \Big( 1+\frac{1}{\log \big(r\mu_{*i}(s)\big)} \Big) d \mu_{*i}(s)
	\end{equation}
	which is nonincreasing function of $t$, because $0 \leq 1+\frac{1}{\log \big(r\mu_{*i}(\cdot)\big)} \leq 1$ holds by the equation \eqref{Con: initial condition in entropy with LOGLOG2}. This function admits the lower bound
	\begin{align}
		\Gamma^G(t) &= \sum_{i=1}^{d}\int_0^t ~1 ~d \mu_{*i}(s) + \sum_{i=1}^{d}\int_0^t \frac{1}{\log \big(r\mu_{*i}(s)\big)} d \mu_{*i}(s)		\nonumber
		\\
		&= \sum_{i=1}^{d} \mu_{*i}(t) - \sum_{i=1}^{d} \mu_{*i}(0) + \sum_{i=1}^{d} li_r(\mu_{*i}(t)) - \sum_{i=1}^{d} li_r(\mu_{i}(0)),		\nonumber
		\\
		& \geq -1 -\sum_{i=1}^{d} li_r(\mu_{i}(0)) =: -\kappa,		\label{Eq: kappa of entropy with LOGLOG}
	\end{align}
	with the notation
	\begin{equation*}
		li_r(x):=\int_0^x \frac{du}{\log (ru)}=\frac{1}{r}\int_0^{rx}\frac{dv}{\log v} = \frac{1}{r}li(rx).
	\end{equation*}
	Here, $li(x) = \int_0^x \frac{du}{\log u}$ represents the logarithmic integral function. Note that the function $li_r(x)$ has negative value and is decreasing from $0$ to $-\infty$ in the interval $x \in [0, \frac{1}{r})$. The last inequality holds because the inequality 
	\begin{equation}		\label{Eq: li estimate}
		\mu_{*i}(\cdot) + li_r(\mu_{*i}(\cdot)) \geq 0	
	\end{equation}
	is satisfied for all $\mu_{*i}(\cdot)$ with the condition \eqref{Con: initial condition in entropy with LOGLOG2}. We also note that $\kappa$ defined in \eqref{Eq: kappa of entropy with LOGLOG} satisfies $-1+\sum_{i=1}^d\mu_i(0) = 0 \leq \kappa < 1$ from the same inequality \eqref{Eq: li estimate}. On the other hand, by \eqref{Eq: varphi}, the trading strategy $\varphi$ additively generated from this function is expressed as
	\begin{equation}	\label{Eq: varphi of entropy with LOGLOG}
		\varphi_i(t)= -p -\log \big\{ -r\mu_{*i}(t)\log \big(r\mu_{*i}(t)\big) \big\} + \sum_{i=1}^{d}\int_0^t \Big( 1+\frac{1}{\log \big(r\mu_{*i}(s)\big)} \Big) d \mu_{*i}(s).
	\end{equation}
	Finally, by \eqref{Eq: value of ATS}, the value of this trading strategy $\varphi$ is given as
	\begin{equation}		\label{Eq: value of entropy with LOGLOG}
		V^{\varphi}(t) =- p - \sum_{i=1}^d \mu_i(t)\log \big\{ -r\mu_{*i}(t)\log \big(r\mu_{*i}(t)\big) \big\} + \sum_{i=1}^{d}\int_0^t \Big( 1+\frac{1}{\log \big(r\mu_{*i}(s)\big)} \Big) d \mu_{*i}(s),
	\end{equation}
	and is estimated as
	\begin{equation*}
		V^{\varphi}(t) \geq -p- \log \Big[ \max_{i=1, \cdots, d} \big\{ -r\mu_{i}(0)\log \big(r\mu_{i}(0)\big) \big\} \Big] - \kappa,
	\end{equation*}
	from \eqref{Eq: F of entropy with LOGLOG} and \eqref{Eq: kappa of entropy with LOGLOG}. Thus, the choice 
	\begin{equation}			\label{Eq: p in entropy with LOGLOG2}
		p = - \log \Big[ \max_{i=1, \cdots, d} \big\{ -r\mu_{i}(0)\log \big(r\mu_{i}(0)\big) \big\} \Big] - \kappa
	\end{equation}
	guarantees $V^{\varphi}(\cdot) \geq 0$ and also satisfies \eqref{Eq: p in entropy with LOGLOG}. We emphasize here again that $p$ defined as in \eqref{Eq: p in entropy with LOGLOG2} depends only on the initial market weights $\mu_i(0)$, thus no statistical estimation of $p$ is required. Using the same technique as in \eqref{Eq: F of entropy with LOGLOG}, $\varphi_i(t)$ in \eqref{Eq: varphi of entropy with LOGLOG} is greater or equal to 
	\begin{equation*}
		-p - \log \Big[ \max_{i} \big\{ -r\mu_{i}(0)\log \big(r\mu_{i}(0) \big) \big\} \Big] - \kappa,
	\end{equation*}
	which is $0$ by \eqref{Eq: p in entropy with LOGLOG2}. Thus, this trading strategy is `long-only', i.e., $\varphi_i(\cdot) \geq 0$ for all $i=1, \cdots, d$.
	
	\smallskip
	
	As we showed above that all conditions of Theorem~\ref{thm: AGRA2} are satisfied, the additively generated strategy $\varphi$ in \eqref{Eq: varphi of entropy with LOGLOG} is strong arbitrage relative to the market over every time horizon $[0, t]$ with $T_* \leq t \leq T$, satisfying the condition
	\begin{equation*}
		- \log \Big[ \max_{i=1, \cdots, d} \big\{ -r\mu_{i}(T_*)\log \big(r\mu_{i}(T_*) \big) \big\} \Big] > - \sum_{i=1}^d \mu_i(0)\log \big\{ -r\mu_{i}(0)\log \big(r\mu_{i}(0)\big) \big\} +\kappa,
	\end{equation*}
	with $\kappa$ in \eqref{Eq: kappa of entropy with LOGLOG}.
	
	\smallskip
	
	We move on to the trading strategy $\psi$ multiplicatively generated by this function \eqref{Def: entropy with LOGLOG}. From \eqref{Eq: psi}, \eqref{Eq: value of MTS} as well as \eqref{Con: partial in entropy with LOGLOG} and \eqref{Eq: gamma of entropy with LOGLOG}, we have
	\begin{equation}	\label{Eq: psi of entropy with LOGLOG}
		\psi_i(t)= -K(t) \Big[ p+\log \big\{ -r\mu_{*i}(t)\log \big(r\mu_{*i}(t)\big) \big\} \Big], \qquad i=1, \cdots, d;
	\end{equation}
	with the value function
	\begin{equation*}
	V^{\psi}(t) =-K(t) \Big[ p + \sum_{i=1}^d \mu_i(t)\log \big\{ -r\mu_{*i}(t)\log \big(r\mu_{*i}(t)\big) \big\} \Big],
	\end{equation*}
	where
	\begin{equation*}
	K(t) :=\exp \bigg(- \sum_{i=1}^{d} \int_0^t \frac{1+\frac{1}{\log \big(r\mu_{*i}(s) \big)} }{p+\sum_{j=1}^d \mu_j(t)\log \big\{ -r\mu_{*j}(t)\log \big(r\mu_{*j}(t)\big) \big\}} d \mu_{*i}(s) \bigg).
	\end{equation*}
	
	\smallskip
	
	As the $\Gamma^G$ is nonincreasing in \eqref{Eq: gamma of entropy with LOGLOG}, we again use Theorem~\ref{thm: MGRA3}. We already have $F\big(\mu(t), \mu_*(t)\big)$ and $\kappa$ defined in \eqref{Eq: F of entropy with LOGLOG} and \eqref{Eq: kappa of entropy with LOGLOG} which satisfy the conditions \textit{(i), (ii)}. Thus, the strong arbitrage condition \eqref{Con: multiplicative arb3} becomes
	\begin{equation*}
	\log (A)
	>
	\frac{1 +\sum_{i=1}^{d} L_r(\mu_{i}(0))}{B},
	\end{equation*}
	where
	\begin{equation*}
		A := -p - \log \Big[ \max_{j=1, \cdots, d} \big\{ -r\mu_{*j}(T_*)\log \big( r\mu_{*j}(T_*) \big) \big\} \Big],
	\end{equation*}
	and
	\begin{equation*}
		B := -p - \log \Big[ \max_{j=1, \cdots, d} \big\{ -r\mu_{j}(0)\log \big( r\mu_{j}(0) \big) \big\} \Big].
	\end{equation*}
	
\end{example}

\bigskip

\bigskip

\bigskip

\section{Empirical results} 
 \label{sec: 7}

We present some empirical results regarding the behavior of additively-generated portfolios in the Section~\ref{sec: 6}, using historical market data. We first analyze the value function $V^{\varphi}(\cdot)$ of these portfolios with respect to the market by decomposing it with generating function $G$ and corresponding Gamma function $\Gamma^G$ in \eqref{Eq: value of ATS}. Especially, we show that all value functions of portfolios in Section~\ref{sec: 6} outperform the market portfolio. Then, we present that the different choice of the parameter $p$, explained in Remark~\ref{Re: Reducing threshold in entropy example}, indeed significantly influences the performance of portfolios.

\bigskip

\subsection{Data description and notation}

\medskip

In order to simulate a perfect `closed market', we construct a ``universe" with $d=1085$ stocks which had been continuously traded during $4528$ consecutive trading days between 2000 January 1st and 2017 December 31st. These $1085$ stocks were chosen from those listed at least once among the constituents of the S\&P 1500 index in this period, and did not undergo mergers, acquisitions, bankruptcies, etc. 

\medskip

\begin{rem}
	\label{re: biased selection of stocks}
	This selection of $1085$ stocks is somewhat biased, in the sense that we are looking ahead into the future at time $t=0$ by blocking out those stocks which will go bankrupt in the future. However, the reason for this biased selection is to keep the number of stocks $d$ constant all the time which is the essential assumption of our `closed' market model. If we compose our portfolio from $d=1500$ stocks included in S\&P 1500 index at the beginning, remove one stock whenever it goes bankrupt, or take in a new stock whenever it is newly added to the index, the number $d$ of stocks in our portfolio fluctuates over time and the generating function $G$ would be discontinuous whenever $d$ changes.
	
	One possible solution to this problem is to consider an `open market'. We first fix the value of $d$, say $d=1500$ at the beginning, keep track of price dynamics of all stocks in the market (which should be composed of more than $d$ stocks, say $D$ stocks with $D>d$), rank them by the order of their market capitalization, and construct our portfolio using the top $d=1500$ stocks among $D$ stocks. In this way we can keep the same number $d$ of companies all the time, but considering ranked market weights always involves a `leakage' issue. As explained in Chapter~4.2, 4.3 of \cite{Fe} and Example~6.2 of \cite{Karatzas:Ruf:2017}, this refers to the loss incurred when we have to sell a stock that has been relegated from top $d$ capitalization index to the lower capitalization index. Even worse, as we want to invest only in the top $d$ companies among $D$ companies in this open market, our trading strategy $\varphi = (\varphi_1, \cdots, \varphi_D)$ should satisfy the equations $\varphi_i(t) = 0$ for $i=1, \cdots, D$ whenever the $i$-th company fails to be included in the top $d$ companies at time $t$. However, we do not know how to construct such trading strategy yet.
	
	Thus, it is not easy to make a perfect empirical model, and we decided to select $d=1085$ stocks in a biased manner which fits better to our theoretic model described in the previous sections.
\end{rem}

\medskip

We obtained daily closing prices and total number of shares of these stocks from the CRSP and Compustat data sets. The data can be found here; \textcolor{blue}{https://wrds-web.wharton.upenn.edu/wrds/}. We used \textsl{R} and \textsl{C++} to program our portfolios.

\smallskip

As we used daily data for $N=4528$ days, we discretized the time horizon as $0=t_0 < t_1 < \cdots, < t_{N-1} = T$. For $\ell \in \{1, 2, \cdots, N \}$, we summarize our notations here;
\begin{enumerate}
	\item $S_i(t_\ell)$ : the capitalization (daily closing price multiplied by total number of shares) of $i$\textsuperscript{th} stock at the end of day $t_\ell$.

	\item $\Sigma(t_\ell) := \sum_{i=1}^d S_i(t_\ell)$ : the total capitalization of $d$ stocks at the end of day $t_\ell$. This quantity also represents dollar value of the market portfolio at the end of day $t_\ell$ with the initial wealth $\Sigma(0)$.

	\item $\mu_i(t_\ell) := \frac{S_i(t_\ell)}{\Sigma(t_\ell)} $ : the $i$\textsuperscript{th} market weight at the end of day $t_\ell$.
	
	\item $\pi_i(t_\ell)$ : the additively generated portfolio weight of the $i$\textsuperscript{th} stock at the end of day $t_\ell$ which can be computed using the equation \eqref{Def: pi of ATS}. Note that $\sum_{i=1}^d \pi_i(t_\ell) = 1$ holds.
	
	\item $W(t_\ell)$ : the total value of the portfolio at the end of day $t_\ell$. Then,  $W(t_\ell)\pi_i(t_\ell)$ represents the amount of money invested by our portfolio in $i$\textsuperscript{th} stock at the end of day $t_\ell$.
\end{enumerate}

As the capitalization of $i$\textsuperscript{th} stock at the beginning of day $t_\ell$ should be equal to $S_i(t_{\ell-1})$, the capitalization of the same stock at the end of the last trading day $t_{\ell-1}$, we also deduce that $\Sigma(t_{\ell-1})$, $\mu_i(t_{\ell-1})$, $\pi_i(t_{\ell-1})$, and $W(t_{\ell-1})$ represent the total capitalization, $i$\textsuperscript{th} market weight, $i$\textsuperscript{th} additively generated portfolio weight, and the money value of portfolio at the beginning of day $t_\ell$, respectively.

\smallskip

The transaction, or rebalancing, of our portfolio on day $t_\ell$, is made at the beginning of the day $t_\ell$, using the market weights $\mu_i(t_{\ell-1})$ at the end of the last trading day. We compute $\pi_i(t_{\ell-1})$ from $\mu_i(t_{\ell-1})$ via \eqref{Def: pi of ATS}, and re-distribute the generated value $W(t_{\ell-1})$ according to the these weights $\pi_i(t_{\ell-1})$. Then, the monetary value of portfolio $W(t_\ell)$ at the end of day $t_\ell$ can be calculated as
\begin{equation*}
	W(t_\ell) = \sum_{i=1}^d W(t_{\ell-1})\pi_i(t_{\ell-1}) \frac{S_i(t_\ell)}{S_i(t_{\ell-1})}.
\end{equation*}
In order to compare the performance of our portfolios with the market portfolio, we set our initial wealth as $W(0) = \Sigma(0)$ and compare the evolutions of $\Sigma(\cdot)$ and $W(\cdot)$. Once the initial amount $W(0)$ invested in our portfolio is determined, the monetary value of the portfolio can be obtained recursively by the above equation. However, $W(\cdot)$ can be defined with the trading strategy $\varphi_i(\cdot)$ in \eqref{Def: varphi} or \eqref{Eq: varphi};
\begin{equation}	\label{Def: money value of portfolio}
	W(\cdot) = \sum_{i=1}^d S_i(\cdot)\varphi_i(\cdot).
\end{equation}
Then, the value $V^{\varphi}(\cdot)$ with respect to the market, defined as in \eqref{Def: value} or represented as in \eqref{Eq: value of ATS}, has another representation as the ratio between the money value of our portfolio and total market capitalization;
\begin{equation*}
	V^{\varphi}(\cdot)
	= \sum_{i=1}^d \varphi_i(\cdot)\mu_i(\cdot) 
	= \sum_{i=1}^d \varphi_i(\cdot)\frac{S_i(\cdot)}{\Sigma(\cdot)}
	= \frac{W(\cdot)}{\Sigma(\cdot)},
\end{equation*}
and the expression `value of trading strategy (or portfolio) with respect to the market' makes sense. Furthermore, the excess return rate $R^{\varphi}(\cdot)$ of the portfolio defined in \eqref{Def: excess return rate} can be represented as
\begin{equation*}
	R^{\varphi}(\cdot)
	= \frac{V^{\varphi}(\cdot) - V^{\varphi}(0)}{V^{\varphi}(0)}
	= \frac{\frac{W(\cdot)}{\Sigma(\cdot)} - 1}{1}
	= \frac{W(\cdot) - \Sigma(\cdot)}{\Sigma(\cdot)}
	~\big(= V^{\varphi}(\cdot) - 1 \big),
\end{equation*} 
and the expression `excess return rate with respect to the market' also makes sense. Here, $V^{\varphi}(0)=1$ because we set $W(0) = \Sigma(0)$. In the last part of following subsection, we show the evolutions of $W(\cdot)$ of several portfolios to compare their performance.

\bigskip

\subsection{Empirical results}

\medskip

We first decompose the value functions $V^{\varphi}(\cdot)$ of trading strategies additively generated from the functions $G$ in entropic examples (Example~\ref{Ex: original entropy}, \ref{Ex: entropy with MAX}, \ref{Ex: entropy with MIN}, and \ref{Ex: entropy with LOGLOG}) into the generating function $G\big(\mu(\cdot), A(\cdot)\big)$ and the corresponding Gamma function $\Gamma^G(\cdot)$. For easy comparison, we normalized all generating functions so that $G\big(\mu(0), A(0)\big)=1$ holds, and shifted up the Gamma functions by $1$ in Figure~\ref{fig: decomposition}.

\begin{figure}[h]
	\centering
	\caption{Decomposition of value function of additively generated trading strategies}
	\label{fig: decomposition}
	\begin{subfigure}{.5\textwidth}
		\centering
		\includegraphics[width=1\linewidth]{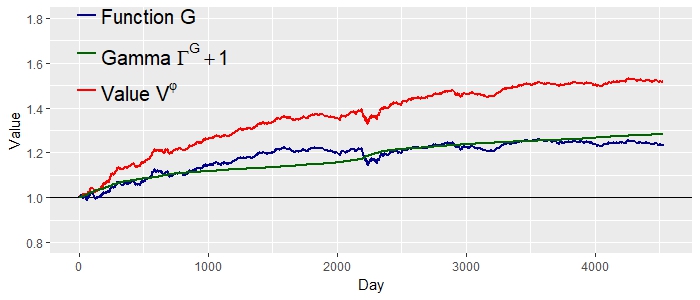}
		\captionsetup{justification=centering}
		\caption{Example~\ref{Ex: original entropy} \\ Original entropy, $p=9$}
		\label{fig: sub1}
	\end{subfigure}%
	\begin{subfigure}{.5\textwidth}
		\centering
		\includegraphics[width=1\linewidth]{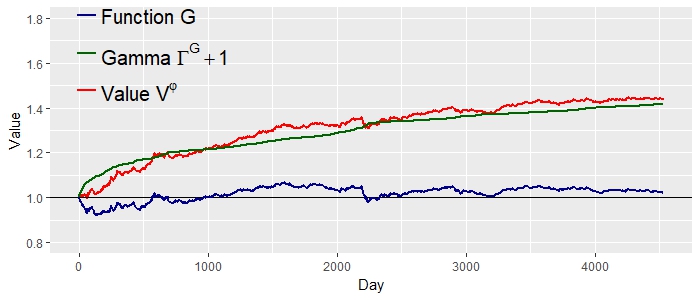}
		\captionsetup{justification=centering}
		\caption{Example~\ref{Ex: entropy with MAX} \\ Entropy with running maximum, $p=9$}
		\label{fig: sub2}
	\end{subfigure}
	\begin{subfigure}{.5\textwidth}
		\centering
		\includegraphics[width=1\linewidth]{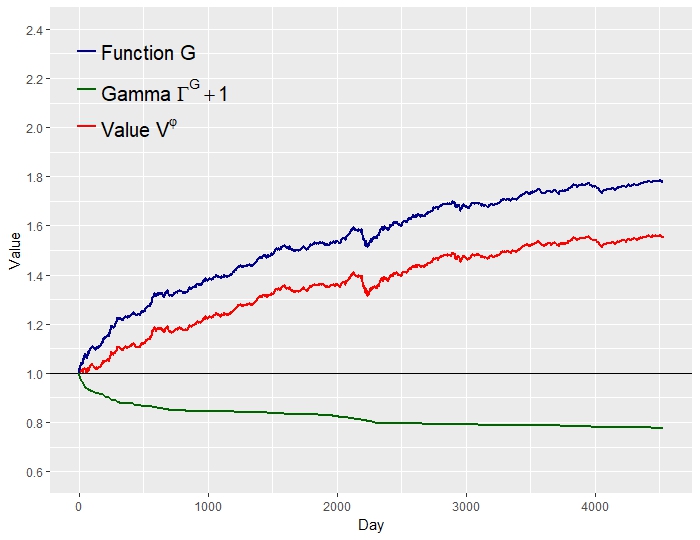}
		\captionsetup{justification=centering}
		\caption{Example~\ref{Ex: entropy with MIN} \\ Entropy with running minimum, $p=9$}
		\label{fig: sub3}
	\end{subfigure}%
	\begin{subfigure}{.5\textwidth}
		\centering
		\includegraphics[width=1\linewidth]{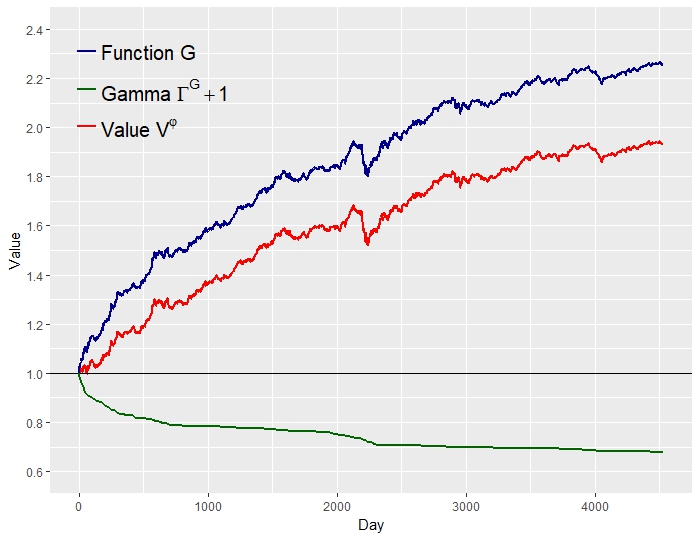}
		\captionsetup{justification=centering}
		\caption{Example~\ref{Ex: entropy with LOGLOG} \\ Iterated entropy with running minimum, $p=9$, $r=5$}
		\label{fig: sub4}
	\end{subfigure}
\end{figure}

\smallskip

Figure~\ref{fig: decomposition} confirms that all trading strategies additively generated in Section~\ref{sec: 6} outperform the market as the values $V^{\varphi}$ (red lines in the figure) gradually increase. In sub-figures (a) and (b), the growth of the value $V^{\varphi}$ comes from the growth of the Gamma function. In contrast, even though the Gamma function decreases, the value of trading strategy grows as the function $G$ increases substantially in sub-figures (c) and (d). In the sub-figure (d), we set the parameter $r=5$ as it is the largest integer satisfying the equation \eqref{Con: initial condition in entropy with LOGLOG}; initial market weights data give us $\max_i \mu_i(0) = 0.065$ and $0.065 < 1/(5e)$ holds. We chose the same parameter $p=9$ (See Remark~\ref{Re: Reducing threshold in entropy example}) in all sub-figures for fair comparison, but this is a very sloppy choice of the parameter $p$ for (a), (b), and (c). If we chose the value of $p$ using statistical estimation elaborately in each examples, the performance of portfolio would be improved, as Figure~\ref{fig: different p} represents in the case of Example~\ref{Ex: original entropy}.

\begin{figure}[h!]
	\centering
	\captionsetup{justification=centering}
	\caption{Value of additively generated trading strategies from Example~\ref{Ex: original entropy} with different $p$ values}
	\label{fig: different p}
	\includegraphics[width=0.7\linewidth]{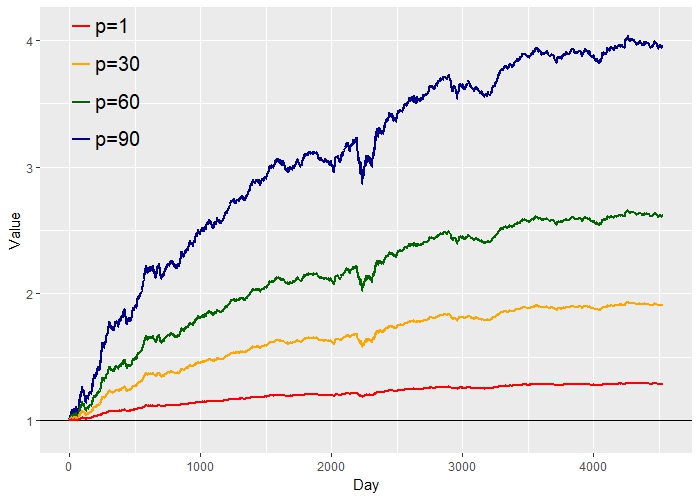}
\end{figure}

\smallskip

Figure~\ref{fig: different p} shows the values of additively generated portfolios in Example~\ref{Ex: original entropy} with different choices of the parameter $p$. We can verify that trading strategy with bigger value of $p$ performs better as described in Remark~\ref{Re: Reducing threshold in entropy example}. From the data, the Gibbs entropy $-\sum_{i=1}^{1085} \mu_i(t) \log \mu_i(t)$ of market weights ranged from $4.954$ to $5.726$ during 4528 days. Thus, $p=90$ is a safe estimation of the parameter $p$ which guarantees the non-negativity of the function $G$ in \eqref{Eq: original entropy} as $\log 90 < 4.5 < 4.954$ holds.

\smallskip

Finally, `dollar values' $W(\cdot)$ of portfolios from four examples of Section~\ref{sec: 6}, which are defined as in \eqref{Def: money value of portfolio}, along with the total market value $\Sigma(\cdot)$ of $d=1085$ stocks from the start of 2000 to the end of 2017, are illustrated in Figure~\ref{fig: OVERALL}. Dollar values are normalized by replacing $W(\cdot)$ by $W(\cdot)/W(0)$. In Figure~\ref{fig: OVERALL}, while the market capitalization had been approximately doubled during $18$ years, the dollar values of all other portfolios had been grown more than $4.5$ times. Parameters are appropriately chosen using statistical estimation in each portfolio.

\begin{figure}[h!]
	\centering
	\captionsetup{justification=centering}
	\caption{(Normalized) Dollar values of portfolios over 18 years}
	\label{fig: OVERALL}
	\includegraphics[width=.7\linewidth]{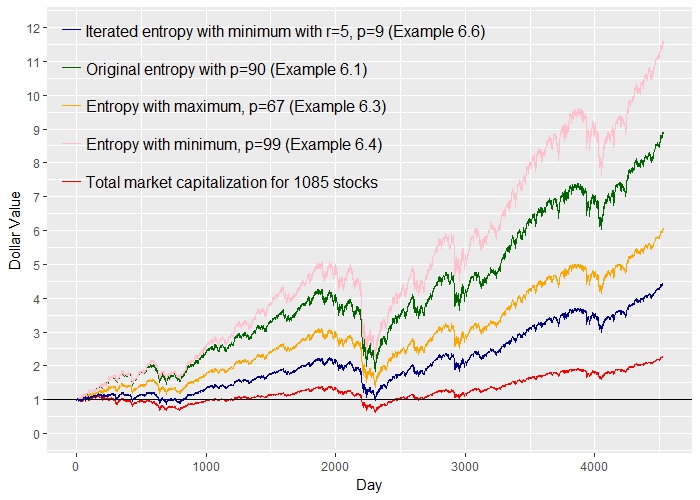}
\end{figure}

\bigskip

\bigskip

\bigskip

\section{Conclusion}
 \label{sec: 8}

\cite{Karatzas:Ruf:2017} introduced an alternative ``additive" way of functional generation of trading strategies and compared it to the original ``multiplicative'' way of E.R. Fernholz. This new approach weakens the assumption on the asset prices from It\^o processes to continuous semimartingales, characterizes the class of functions called ``Lyapunov functions" which generate trading strategies leading to strong arbitrage with respect to the market, and gives a very simple sufficient condition for strong arbitrage. The present paper takes more generalized approaches to these two ways of functional generation. The results of this paper can be summarized as follows:

\begin{enumerate}
	\item We show how to generate both additively and multiplicatively, trading strategies without any probabilistic assumptions on the market model. This is done by using the celebrated pathwise It\^o calculus, and the only analytic assumption we impose is that the market weights admit continuous covariations in a pathwise sense. In the practical sense, we do not have to care about this analytic assumption because market weights data are given as the form of discrete time-series and such data always admit pathwise covariations. 
	
	\item We extend the class of functions which generate trading strategies by introducing an additional argument of finite variation other than market weights as the input. Inserting this argument in the generating function gives extra flexibility in portfolio construction and this has been dealt with in other literatures. However, we present some new examples of such extra argument which gives us simple sufficient condition leading to strong arbitrage relative to the market.
	
	\item We also extend the class of functions which generate additive and multiplicative strong relative arbitrage by giving new sufficient conditions. The new conditions allow the function not be ``Lyapunov'', or concave with respect to the market weights, in order to generate strong relative arbitrage or to outperform the market portfolio in the long run. We also present empirical results of portfolios which indeed outperform the market.
	
	\item We further extend the class of portfolio-generating-functions from twice-differentiable to less smoother, namely absolutely continuous, functions with help of the pathwise Tanaka formula. Using Tanaka formula involves the concept of local times and this yields new interesting types of portfolios and corresponding strong arbitrage conditions.
\end{enumerate}

\medskip

While this paper generalizes the functional generation of portfolios in several respects, we suggest some new questions. First, this paper assumes a `closed market', in other words, the number of stocks $d$ is fixed. In this respect, it fails to represent or resemble the real market. As explained in Remark~\ref{re: biased selection of stocks}, an `open market' models the real world better, but nothing seems to be known on how to construct trading strategies in this open market. Secondly, the market weights in this paper should have finite second variation along a sequence of time partitions; can something be said, along the lives of \cite{Cont_Perkowski}, regarding price dynamics, or market weights, with finite $p$-th variation for $p>2$?

\bigskip

\bigskip

\bigskip

\begin{appendices}
\section{Proofs}
\label{sec: appendix}
\begin{proof} [Proof of Theorem~\ref{Thm : Ito formula}]
	Using the telescoping sum representation, we obtain
	\begin{align}
	f\big(X(t), A(t)\big) - f\big(X(0)), A(0)\big)
	&= \sum_{\substack{[t_j, t_{j+1}] \in \mathbb{T}_n \\ t_j \leq t}} \Big\{ f\big(X(t_{j+1}), A(t_{j+1})\big) - f\big(X(t_{j}), A(t_{j})\big) \Big\}			\nonumber
	\\
	&= \sum_{\substack{[t_j, t_{j+1}] \in \mathbb{T}_n \\ t_j \leq t}} \Big\{ f\big(X(t_{j+1}), A(t_{j+1})\big) - f\big(X(t_{j+1}), A(t_{j})\big) \Big\}			\label{Eq: xaa}
	\\
	&+ \sum_{\substack{[t_j, t_{j+1}] \in \mathbb{T}_n \\ t_j \leq t}} \Big\{ f\big(X(t_{j+1}), A(t_{j})\big) - f\big(X(t_{j}), A(t_{j})\big) \Big\}.				\label{Eq: xxa}
	\end{align}
	The Taylor expansion, applied to the components of the function $A$ in the sum \eqref{Eq: xaa}, gives
	\begin{align}
	&~~~~~\sum_{\substack{[t_j, t_{j+1}] \in \mathbb{T}_n \\ t_j \leq t}} \Big\{ f\big(X(t_{j+1}), A(t_{j+1})\big) - f\big(X(t_{j+1}), A(t_{j})\big) \Big\}		\label{Eq: xaa2}
	\\
	&=\sum_{\substack{[t_j, t_{j+1}] \in \mathbb{T}_n \\ t_j \leq t}} \sum_{\ell=1}^m D_{\ell}f\big(X(t_{j+1}), A(t_{j})\big)\big(A_{\ell}(t_{j+1}) - A_{\ell}(t_{j})\big)
	+\sum_{\substack{[t_j, t_{j+1}] \in \mathbb{T}_n \\ t_j \leq t}} \sum_{\ell=1}^m r\big(A_{\ell}(t_{j+1}) - A_{\ell}(t_{j})\big),				\nonumber
	\end{align}
	where the last remainder term is bounded by
	\begin{equation*}
	r\big(A_{\ell}(t_{j+1}) - A_{\ell}(t_{j})\big) \leq \phi\Big(\max_{t_j}\big|A_{\ell}(t_{j+1})-A_{\ell}(t_{j})\big|\Big) \Big|A_{\ell}(t_{j+1})-A_{\ell}(t_{j})\Big|
	\end{equation*}
	for some function $\phi$ with the property $\lim_{x \rightarrow 0} \phi(x) = 0$. Since $A$ is continuous and of bounded variation, the last double sum of the right-hand side of \eqref{Eq: xaa2} goes to zero as $n \rightarrow \infty$ and the sum \eqref{Eq: xaa} converges to the Lebesgue-Stieltjes integral
	\begin{equation*}
	\sum_{\ell=1}^m \int_0^t D_{\ell}f\big(X(s), A(s)\big) dA_{\ell}(s),
	\end{equation*}
	as $n \rightarrow \infty$. On the other hand, again by the Taylor expansion applied to the components of the function $X$ in the sum \eqref{Eq: xxa}, we obtain
	\begin{align}
	&~~~~~\sum_{\substack{[t_j, t_{j+1}] \in \mathbb{T}_n \\ t_j \leq t}} \Big\{ f\big(X(t_{j+1}), A(t_{j})\big) - f\big(X(t_{j}), A(t_{j})\big) \Big\}		\nonumber
	\\
	&=\sum_{\substack{[t_j, t_{j+1}] \in \mathbb{T}_n \\ t_j \leq t}} \sum_{i=1}^d \partial_i f\big(X(t_{j}), A(t_{j})\big)\big(X_{i}(t_{j+1}) - X_{i}(t_{j})\big)		\label{Eq: fxx}
	\\
	&~~~+\frac{1}{2} \sum_{\substack{[t_j, t_{j+1}] \in \mathbb{T}_n \\ t_j \leq t}} \sum_{i, k=1}^d \partial^2_{i, k} f \big( X(t_j), A(t_j) \big) \big(X_i(t_{j+1})-X_i(t_j)\big) \big(X_k(t_{j+1})-X_k(t_j)\big)												\label{Eq: fxxxx}
	\\
	&~~~+\sum_{\substack{[t_j, t_{j+1}] \in \mathbb{T}_n \\ t_j \leq t}} \sum_{i, k=1}^d
	r\big(X_{i, k}(t_{j+1}) - X_{i, k}(t_{j})\big),								\label{Eq: rx}
	\end{align}
	where the last remainder term \eqref{Eq: rx} is bounded by
	\begin{equation*}
	r\big(X_{i, k}(t_{j+1}) - X_{i, k}(t_{j})\big) \leq \psi\Big(\max_{t_j, i, k}\big|X_{i, k}(t_{j+1})-X_{i, k}(t_{j})\big|\Big) \big(X_{i}(t_{j+1})-X_{i}(t_{j})\big) \big(X_{k}(t_{j+1})-X_{k}(t_{j})\big),
	\end{equation*}
	for some function $\psi$ with the property $\lim_{x \rightarrow 0} \psi(x) = 0$. Again, by the continuity of $X$ and by the fact that $X$ admits the pathwise quadratic covariation in the sense of \eqref{Def: quadratic}, the double sum \eqref{Eq: rx} approaches zero as $n \rightarrow \infty$. The sum \eqref{Eq: fxxxx} converges to the Lebesgue-Stieltjes integral
	\begin{equation*}
	\frac{1}{2} \sum_{i,k=1}^d \int_0^t \partial^2_{i, k}  f\big(X(s), A(s)\big)d\langle X_i, X_k \rangle(s),
	\end{equation*}
	again by the existence of the pathwise quadratic covariation of $X$. As all the other terms converge, the remaining sum \eqref{Eq: fxx} should converge to some limit, which we call `F{\"o}llmer-It\^o integral' as in \eqref{Eq: F-I integral}.
\end{proof}

\bigskip

\begin{proof}	[Proof of Theorem~\ref{thm: tanaka formula}]
	For any two real numbers $a$ and $b$, by applying the integration by parts formula with the notation \eqref{Def: parenthesis}, we have the equation
	\begin{align}
	f(b) - f(a) &= \int_a^b f'(x) dx 			\nonumber
	\\
	&=\begin{cases} 			\nonumber
	~~~\int_a^b f'(x) (b-x)^0 dx
	= -f'(x) (b-x)\big|^b_{x=a} + \int_{(a, b]} (b-x)df'(x), ~\quad \text{if}~~~ a \leq b
	\\ 			\nonumber
	\\
	-\int_b^a f'(x) (b-x)^0 dx
	= f'(x) (b-x)\big|^a_{x=b} - \int_{(b, a]} (b-x)df'(x), \qquad \text{if}~~~ b < a
	\end{cases}
	\\ 			\nonumber
	\\ 			\nonumber
	&=\begin{cases} 		 			\nonumber
	f'(x) (b-a) + \int_{(a, b]} (b-x)df'(x), \qquad \text{if}~~~a \leq b
	\\
	\\
	f'(x) (b-a) - \int_{(b, a]} (b-x)df'(x), \qquad \text{if}~~~b < a
	\end{cases}
	\\
	&= f'(x) (b-a) + \int_{\mathbb{R}}\mathbbm{1}_{\llparenthesis a, b \rrbracket}(x) |b-x|df'(x).			\label{Eq : integration by parts}
	\end{align}
	Thus, using the telescoping sum
	\begin{equation*}
	f(X_t)-f(X_0) = \sum_{\substack{[t_j, t_{j+1}] \in \mathbb{T}_n \\ t_j \leq t}} \big( f(X_{t_{j+1}}) - f(X_{t_{j}}) \big)
	\end{equation*}
	for the sequence of partitions $\mathbb{T} = (\mathbb{T}_n)_{n \in \mathbb{N}}$ of $[0, T]$, the above equality becomes
	\begin{align}
	f(X_t) - f(X_0) &- \sum_{\substack{[t_j, t_{j+1}] \in \mathbb{T}_n \\ t_j \leq t}} f'(X_{t_j}) (X_{t_{j+1}} - X_{t_j }) 		\label{Eq : change of variable}
	\\
	&= \sum_{\substack{[t_j, t_{j+1}] \in \mathbb{T}_n \\ t_j \leq t}} \int_{\mathbb{R}}\mathbbm{1}_{\llparenthesis X_{t_{j}}, X_{t_{j+1}} \rrbracket}(x) |X_{t_{j+1}}-x|df'(x) 		\nonumber
	= \int_\mathbb{R}L_t^{X, \mathbb{T}_n}(x)df'(x),	\nonumber		
	\end{align}
	thanks to the definition \eqref{Def: discrete local time}. The last integral on the right-hand side of \eqref{Eq : change of variable} converges to the last integral of \eqref{Eq : tanaka formula}, since $L_t^{X, \mathbb{T}_n}$ converges uniformly to $L_t$, and the result follows.
\end{proof}

\end{appendices}

\newpage

\bibliography{aa_bib}
\bibliographystyle{apalike}

\end{document}